\newcolumntype{H}{>{\setbox0=\hbox\bgroup}c<{\egroup}@{}}
\newtheorem{thm}{Theorem}
\newtheorem{thmsupp}{Theorem}
\newtheorem{cor}{Corollary}
\newtheorem{lemma}{Lemma}
\newtheorem{defn}{Definition}
\newtheorem{remark}{Remark}
\newcommand{\reals}{\mathbb{R}}
\newcommand{\yvec}{\mathbf{y}}
\newcommand{\xvec}{\mathbf{x}}
\newcommand{\xmat}{\mathbf{X}}
\newcommand{\bvec}{\bm{\beta}}
\newcommand{\dvec}{\bm{\delta}}
\newcommand{\ind}{\mathbbm{1}}
\newcommand{\iid}{\overset{iid}{\sim}}
\newcommand{\norm}[1]{\left\lVert#1\right\rVert}
\DeclareMathOperator{\E}{E}
\DeclareMathOperator*{\argmin}{arg\,min}
\DeclareMathOperator*{\initial}{initial}
\DeclareMathOperator*{\oracle}{oracle}
\DeclareMathOperator*{\diag}{diag}
\DeclareMathOperator*{\lasso}{lasso}
\DeclareMathOperator*{\sgn}{sgn}
\DeclareMathOperator*{\VEC}{vec}
\DeclareMathOperator*{\subG}{subG}
\DeclareMathOperator*{\subExp}{subExp}
\begin{document}

\def\spacingset#1{\renewcommand{\baselinestretch}%
{#1}\small\normalsize} \spacingset{1}


\title{\bf High-dimensional Censored Regression via the Penalized Tobit Likelihood}
\author{Tate Jacobson\\
	School of Statistics, University of Minnesota\\
	and \\
	Hui Zou\\
	School of Statistics, University of Minnesota}
\maketitle

\bigskip
\begin{abstract}
	High-dimensional regression and regression with a left-censored response are each well-studied topics. In spite of this, few methods have been proposed which deal with both of these complications simultaneously.
	The Tobit model---long the standard method for censored regression in economics---has not been adapted for high-dimensional regression at all. 
	To fill this gap and bring up-to-date techniques from high-dimensional statistics to the field of high-dimensional left-censored regression, we propose several penalized Tobit models.
	We develop a fast algorithm which combines quadratic minimization with coordinate descent to compute the penalized Tobit solution path.
	Theoretically, we analyze the Tobit lasso and Tobit with a folded concave penalty, bounding the $\ell_2$ estimation loss for the former and proving that a local linear approximation estimator for the latter possesses the strong oracle property.  
	Through an extensive simulation study, we find that our penalized Tobit models provide more accurate predictions and parameter estimates than other methods 
	on high-dimensional left-censored data. 
	We use a penalized Tobit model to analyze high-dimensional left-censored HIV viral load data from the AIDS Clinical Trials Group and identify potential drug resistance mutations in the HIV genome.  
	Appendices contain intermediate theoretical results and technical proofs.
\end{abstract}

\noindent%
{\it Keywords:} censored regression, coordinate descent, folded concave penalty, high dimensions, strong oracle property, Tobit model

\vfill

\begin{center}
	\fbox{
		\begin{minipage}{6in}
			This is an original manuscript of an article published by Taylor \& Francis in the \textit{Journal of Business and Economic Statistics} on March 15, 2023, available online: \url{https://www.tandfonline.com/doi/full/10.1080/07350015.2023.2182309}.
		\end{minipage}
	}
\end{center}

\vfill

\newpage
\spacingset{1.15} 

\section{Introduction}
In many regression problems, the dependent variable can only be observed within a restricted range. We say that such a response is \textit{censored} if we retain some information from the observations which fall outside of this range rather than losing them entirely. In particular, we still observe the predictors for these cases and know whether the unobserved response value fell below or above the range. Censored data appear in many disciplines, either as a consequence of the data collection process or due to the nature of the response itself. 
For instance, biological assays used to measure human immunodeficiency virus (HIV) viral load in plasma cannot detect viral concentrations below certain (known) thresholds. As such, the observed viral load is left-censored. 
Because censoring violates a key assumption of linear regression, ordinary least squares (OLS) estimates of the regression coefficients will be biased and inconsistent if the response is censored \citep{Amemiya1984}. Recognizing this, researchers in different disciplines have developed regression techniques to deal with various types of censoring. 
Among these, the Tobit model has long been the standard method for modeling a left-censored response in economics.

\cite{Tobin1958} originally developed the Tobit  model to study how annual expenditures on durable goods relate to household income. Noting that most low-income households spend $\$ 0$/year on durable goods, he designed the Tobit likelihood to treat response values at this (known) lower limit differently than those above the limit. He described the Tobit model as a ``hybrid of probit analysis and multiple regression," as it models the probability of the response falling at the lower limit using an approach similar to probit analysis while still treating the response as continuous \citep{Tobin1958}. 
Because left-censored data are common in household surveys and other micro-sample survey data, the Tobit model has enjoyed lasting popularity in economics and the social sciences. 
In the half-century since its introduction, it has been extended to handle right-censored and interval-censored data \citep{Amemiya1984} and has been adopted in other disciplines. 

In recent years, high-dimensional data have become increasingly common in many fields of study. 
This presents some researchers with the challenge of analyzing data with both high-dimensional covariates and a left-censored response.
Consider the HIV viral load example from earlier. There is now a sizable literature around modeling the relationship between HIV viral load and mutations in the HIV genome. Given the number of mutations that can occur, this is inherently a high-dimensional problem, where the number of predictors $p$ is much larger than the number of observations $n$. At the same time, the observed viral load is left-censored. Previous studies in this area have avoided the problem of having both high-dimensional covariates and a left-censored response by reducing HIV viral load to a binary response, such as $y =\ind_{\text{viral load} > 200 \text{ copies/mL} } $. In taking this approach, however, the modelers lose a great deal of information about the response. To directly model HIV viral load in this setting, researchers need techniques designed specifically for high-dimensional left-censored regression.

While high-dimensional regression and left-censored regression have been thoroughly studied as separate topics, few methods have been developed which handle both high-dimensional covariates and a left-censored response simultaneously.
\cite{Muller2016} and \cite{Zhou2016} have extended the least absolute deviation estimator of \cite{Powell1984} for high-dimensional data while \cite{Johnson2009}, \cite{Li2014}, and \cite{Soret2018} have extended the Buckley-James estimator \citep{Buckley1979}. To our knowledge, no existing methods directly extend the Tobit model.
Theoretically, this under-studied area has fallen behind the broader field of high-dimensional statistics, with estimators achieving weaker guarantees and requiring stronger assumptions. 
Among existing high-dimensional left-censored regression techniques, only \citeauthor{Muller2016}'s (\citeyear{Muller2016}) estimator has any theoretical guarantees in the setting where $p \gg n$. 
This estimator, however, does not achieve consistent model selection. On the other hand, the estimators of \cite{Johnson2009}, \cite{Zhou2016}, and \cite{Li2014} are shown to possess the weak oracle property, but only in the fixed $p$ case. We aim to improve on these high-dimensional left-censored regression techniques by developing an estimator which possesses the strong oracle property even when $p \gg n$.

In this study, we develop penalized Tobit models for high-dimensional censored regression.
The negative log-likelihood in \citeauthor{Tobin1958}'s \citeyearpar{Tobin1958} original formulation of the Tobit model is non-convex, creating technical problems for optimization in a high-dimensional setting. We use \citeauthor{Olsen1978}'s \citeyearpar{Olsen1978} convex reparameterization of the negative log-likelihood in our penalized Tobit models so that we can solve our problem using convex optimization methods. 
In particular, we leverage the fact that the negative log-likelihood satisfies the quadratic majorization condition to develop a generalized coordinate descent (GCD) algorithm \citep{Yang2013} for minimizing the penalized negative log-likelihood.

For our theoretical study, we analyze the Tobit lasso and Tobit with a folded concave penalty in a high-dimensional setting with $p \gg n$. 
We derive a bound for the $\ell_2$ estimation loss for the Tobit lasso estimator which holds with high probability. 
We introduce a local linear approximation (LLA) algorithm for Tobit regression with a folded concave penalty and prove that, when initialized with the Tobit lasso estimator, this algorithm finds the oracle estimator in one step and converges to it in two steps with probability rapidly converging to 1 as $n$ and $p$ diverge. To our knowledge, this makes the two-step LLA estimator the first estimator for high-dimensional left-censored regression to possess the strong oracle property.

We have implemented the GCD algorithm and the LLA algorithm (specifically with the SCAD penalty \citep{Fan2001}) in the \texttt{tobitnet} package in \texttt{R}, which is available at \url{https://github.com/TateJacobson/tobitnet}.

This paper is organized as follows.
In Section \ref{sec:tobit} we review the Tobit model and its statistical foundations. 
In Section \ref{sec:tobitnet} we introduce penalized Tobit models and develop our GCD algorithm to fit them. 
In Section \ref{sec:theory} we carry out our theoretical study of the Tobit lasso and our LLA algorithm for Tobit with a folded concave penalty.
Section \ref{sec:sims} presents the results of an extensive simulation study comparing our penalized Tobit models with penalized least-squares models and \citeauthor{Soret2018}'s (\citeyear{Soret2018}) high-dimensional Buckley-James estimator (the best available alternative for high-dimensional left-censored regression) in terms of their prediction, estimation, and selection performance.  
In Section \ref{sec:HIV} we analyze real high-dimensional left-censored data from the AIDS Clinical Trials Group, modeling the relationship between HIV viral load and HIV genotypic mutations using the two-step Tobit LLA estimator and \citeauthor{Soret2018}'s Buckley-James estimator in order to identify potential drug resistance mutations (DRMs).
Intermediate theoretical results and technical proofs are provided in Appendices \ref{sec:intermediate results} and \ref{sec:proofs}, respectively.

\section{The Tobit Model}
\label{sec:tobit}
Suppose that we observe a set of predictors, $x_1, \ldots, x_p$, and a response $y \geq c$ where $c$ is a known lower limit (for example, $c = 50$ if our response is HIV viral load and our assays cannot measure concentrations below $50 \text{ copies/mL}$).
In Tobit regression we assume that there exists a latent response variable $y^*$ such that $y = \max\{y^*, c\}$
and that $y^*$ comes from a linear model
$y^* =  \xvec'\bvec + \epsilon$,
where $\xvec = (1, x_1, \ldots, x_p)' \in \reals^{p+1}$, $\bvec = (\beta_0, \beta_1, \ldots, \beta_p)' \in \reals^{p+1}$, and $\epsilon \sim N(0, \sigma^2)$.  
In the following developments we assume that $c=0$ without loss of generality.

From this latent-variable formulation we can derive a likelihood for the censored response.
Let $\{(y_i,\xvec_i')\}_{i=1}^n$ be i.i.d copies of $(y,\xvec')$ and
define $d_i = \ind_{y_i > 0}$. Let  $\Phi(\cdot)$ denote the standard normal CDF. 
The Tobit likelihood is given by
$$L_n(\bvec, \sigma^2) = \prod_{i=1}^{n} \left[ \frac{1}{\sqrt{2\pi}\sigma}\exp\left\{-\frac{1}{2\sigma^2} (y_i - \xvec_i' \bvec)^2 \right\} \right]^{d_i} \left[ \Phi\left(\frac{- \xvec_i' \bvec}{\sigma} \right) \right]^{1-d_i} \text{.}$$
Noting that 
$P(y_i^* \leq 0) = P(\xvec_i' \bvec + \epsilon_i \leq 0) =  \Phi\left(\frac{- \xvec_i' \bvec}{\sigma} \right) \text{,}$
we see that this likelihood is a mixture of a normal density and a point mass at $0$.
After dropping an ignorable constant the log-likelihood is given by
$$\log L_n(\bvec, \sigma^2) = \sum_{i=1}^{n} d_i\left[ -\log(\sigma) -\frac{1}{2\sigma^2} (y_i - \xvec_i' \bvec)^2 \right] + (1-d_i) \log\left( \Phi\left(\frac{- \xvec_i' \bvec}{\sigma} \right) \right) \text{.}$$

\section{Penalized Tobit Regression}
\label{sec:tobitnet}
In high-dimensional regression, the most commonly used approach is to exploit sparsity in the regression coefficient vector. 
While we might initially consider simply adding a penalty term to the Tobit log-likelihood to create an objective function for penalized Tobit regression, $\log L_n(\bvec, \sigma^2)$ is not concave in $(\bvec, \sigma^2)$, frustrating this approach. Thankfully, \cite{Olsen1978} found that the reparameterization $\dvec = \bvec/\sigma$ and $\gamma^2 = \sigma^{-2}$  results in a concave log-likelihood:
$$\log L_n(\dvec, \gamma) = \sum_{i=1}^{n} d_i\left[ \log(\gamma) -\frac{1}{2} (\gamma y_i - \xvec_i' \dvec)^2 \right] + (1-d_i) \log\left( \Phi\left(- \xvec_i'\dvec \right) \right) \text{.}$$
Note that $\dvec$ and $\bvec$ must have the same degree of sparsity. 
We use Olsen's reparameterization to develop our penalized Tobit models. Our objective is to minimize
\begin{equation}
	R_n(\dvec, \gamma) = \ell_n(\dvec, \gamma) + P_{\lambda}(\dvec) \label{eqn:penalized tobit objective}
\end{equation}
with respect to $(\dvec, \gamma)$,
where $\ell_n(\dvec, \gamma) =  - \frac{1}{n} \log L_n(\dvec, \gamma)$ is our convex loss function (the \textit{Tobit loss} for short) and $P_{\lambda}(\dvec)$ is a penalty function. Note that, unlike with other loss functions, we cannot separate out the scale parameter $\gamma$ in the Tobit loss when estimating the regression coefficients $\dvec$.

Coordinate descent (CD) is currently the most popular algorithm for high-dimensional regression in the literature \citep{glmnet}. Given the relatively complex form of the Tobit loss, the standard CD algorithm requires solving a nonlinear convex program repeatedly for $p$ variables for many cycles until convergence. As a result, the computation time will be notably longer than for penalized least squares. 
Fortunately, another benefit of using Olsen's reparameterization is that the Tobit likelihood can be shown to enjoy a nice \textit{quadratic majorization condition} that serves as a foundation for using the majorization-minimization (MM) principle and coordinate descent to solve penalized Tobit regression. The combination of the MM principle and CD is called generalized coordinate descent \citep{Yang2013}. By using GCD, each of the coordinate-wise updates becomes a simple univariate quadratic minimization problem.  

For reference, we say that a univariate function $f:\reals \to \reals$ satisfies the \textit{quadratic majorization condition} if there exists $M \in \reals^+$ such that
$f(t+a)\leq f(t) + f'(t) a + \frac{M}{2}a^2$ for all $t,a \in \reals.$ Without loss of generality, we assume that our predictors are standardized---that is, $\frac{1}{n}\sum_{i = 1}^n x_{ij} = 0$ and $
\frac{1}{n}\sum_{i =1}^{n} x_{ij}^2 = 1$, for $j = 1, \ldots, p$.
Consider coordinate-wise updates of $\delta_0$ and $\delta_j$, $j = 1, \ldots, p$. 
We treat $\delta_0$ as a special case of $\delta_j$ in the following developments, keeping in mind that $x_{i0} = 1$ for all $i$. For ease of notation, let $\xvec_{i(-j)} = (x_{i0}, x_{i1}, \ldots , x_{i, j-1} , x_{i, j+1}, \ldots, x_{ip})' \in \reals^p$ and $\dvec_{(-j)} = (\delta_0, \delta_1, \ldots, \delta_{j-1}, \delta_{j+1}, \delta_p)' \in \reals^p$.

Let $\tilde{\dvec}$ and $\tilde{\gamma}$ denote the current values for $\dvec$ and $\gamma$. Let $j \in  \{0, 1, \ldots, p\}$ and leave $\tilde{\dvec}_{(-j)}$ and $\tilde{\gamma}$ fixed. Then the Tobit loss is viewed as a univariate function of $\delta_j$. After dropping ignorable constants (which have no impact in minimization), we can express the Tobit loss with respect to $\delta_j$ as
$$\ell_n(\delta_j| \tilde{\dvec}, \tilde{\gamma}) = \frac{1}{n}\sum_{i = 1}^n d_i \frac{1}{2} (\tilde{\gamma} y_i - \xvec_{i,(-j)}'\tilde{\dvec}_{(-j)} - x_{ij}\delta_j )^2  - (1-d_i) \log \Phi(-\xvec_{i,(-j)}'\tilde{\dvec}_{(-j)} - x_{ij}\delta_j ).$$ 
\begin{thm}\label{thm:GCD majorization}
	$\ell_n(\delta_j| \tilde{\dvec}, \tilde{\gamma})$ satisfies the quadratic majorization condition with $M = \frac{1}{n} \sum_{i = 1}^n x_{ij}^2$. Under the standardization of predictors, $M=1$. 
\end{thm}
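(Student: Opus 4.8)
The plan is to invoke the standard sufficient condition for quadratic majorization: if a twice-differentiable univariate function $f$ satisfies $f''(t) \le M$ for every $t \in \reals$, then the second-order Taylor expansion with Lagrange remainder, $f(t+a) = f(t) + f'(t)a + \tfrac{1}{2}f''(\xi)a^2$ for some $\xi$ between $t$ and $t+a$, immediately yields $f(t+a) \le f(t) + f'(t)a + \tfrac{M}{2}a^2$. Thus it suffices to show that $\frac{\partial^2}{\partial\delta_j^2}\,\ell_n(\delta_j \mid \tilde{\dvec}, \tilde{\gamma}) \le M = \frac{1}{n}\sum_{i=1}^n x_{ij}^2$ uniformly in $\delta_j$. (The loss is genuinely $C^2$ in $\delta_j$ because $\Phi > 0$ everywhere makes each $\log\Phi$ term smooth.)

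First I would differentiate the two pieces separately. Writing $s_i = -\xvec_{i,(-j)}'\tilde{\dvec}_{(-j)} - x_{ij}\delta_j$, so that $\partial s_i/\partial\delta_j = -x_{ij}$, the quadratic (uncensored) part contributes $\frac{1}{n}\sum_i d_i x_{ij}^2$ to the second derivative. For the censored part I would let $\phi$ denote the standard normal density and introduce the inverse Mills ratio $h(s) = \phi(s)/\Phi(s)$, so that $\frac{d}{ds}\log\Phi(s) = h(s)$. Using $\phi'(s) = -s\phi(s)$, a short calculation gives $h'(s) = -h(s)\big(s + h(s)\big)$. Applying the chain rule twice, the censored part contributes $\frac{1}{n}\sum_i (1-d_i)\, h(s_i)\big(s_i + h(s_i)\big)\, x_{ij}^2$, so that
\[
\frac{\partial^2}{\partial\delta_j^2}\,\ell_n(\delta_j \mid \tilde{\dvec}, \tilde{\gamma}) = \frac{1}{n}\sum_{i=1}^n d_i\, x_{ij}^2 + \frac{1}{n}\sum_{i=1}^n (1-d_i)\, h(s_i)\big(s_i + h(s_i)\big)\, x_{ij}^2.
\]

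The crux of the argument---and the step I expect to be the main obstacle---is the uniform bound $h(s)\big(s+h(s)\big) \le 1$ for all $s \in \reals$. The cleanest route is a truncated-normal identity: if $Z \sim N(0,1)$, then $\E[Z \mid Z \le s] = -h(s)$ and $\E[Z^2 \mid Z \le s] = 1 - s\,h(s)$, so that $\Var(Z \mid Z \le s) = 1 - h(s)\big(s + h(s)\big)$. Since a variance is nonnegative, $\Var(Z \mid Z \le s) \ge 0$, which is exactly the bound $h(s)\big(s+h(s)\big) \le 1$ that we need. I would establish the two conditional moments from the elementary antiderivatives $\int z\phi(z)\,dz = -\phi(z)$ and $\frac{d}{dz}[z\phi(z)] = \phi(z) - z^2\phi(z)$, integrated over $(-\infty, s]$.

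With this bound in hand the conclusion is immediate: each censored summand satisfies $h(s_i)\big(s_i + h(s_i)\big)\, x_{ij}^2 \le x_{ij}^2$, so the second derivative is at most $\frac{1}{n}\sum_i d_i x_{ij}^2 + \frac{1}{n}\sum_i (1-d_i) x_{ij}^2 = \frac{1}{n}\sum_i x_{ij}^2 = M$. Combined with the Taylor bound, this proves the quadratic majorization condition, and the standardization $\frac{1}{n}\sum_i x_{ij}^2 = 1$ gives $M = 1$.
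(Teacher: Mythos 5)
Your proof is correct, and its skeleton is the same as the paper's: reduce quadratic majorization to a uniform bound on the second derivative, compute that this second derivative equals $\frac{1}{n}\sum_{i} d_i x_{ij}^2 + \frac{1}{n}\sum_{i}(1-d_i)\,g(s_i)\bigl(s_i+g(s_i)\bigr)x_{ij}^2$ (the paper writes $g=\phi/\Phi$ for what you call $h$), and bound the censored factor by $1$. Where you differ is in how the two supporting facts are justified, and your versions are self-contained where the paper's are citations. First, the paper invokes Lemma 1 of Yang and Zou (2013) for the implication that $f''\le M$ everywhere yields quadratic majorization with constant $M$, while you rederive it from Taylor's theorem with Lagrange remainder; these are the same fact. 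Second, and more substantively, the paper cites Sampford (1953) for the crux inequality $0<g(s)\bigl(s+g(s)\bigr)<1$, whereas you prove it by recognizing $1-g(s)\bigl(s+g(s)\bigr)$ as $\Var(Z\mid Z\le s)$ for $Z\sim N(0,1)$, computed from the lower-truncated-normal moments $\E[Z\mid Z\le s]=-g(s)$ and $\E[Z^2\mid Z\le s]=1-s\,g(s)$; nonnegativity of a conditional variance then gives the bound (in fact strict positivity, since the truncated law is nondegenerate, so you even recover the strict inequality). This variance identity is a clean, probabilistically transparent replacement for the cited analytic result: it makes the whole argument elementary and self-contained, while the paper's route is shorter on the page but outsources exactly the step you identified as the main obstacle. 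One cosmetic caution if your argument were spliced into the paper: the paper reserves the symbol $h$ for the product $g(s)\bigl(s+g(s)\bigr)$, so you would want to rename your inverse Mills ratio to $g$ to avoid a notational clash.
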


To illustrate the whole process of GCD, we focus on the weighted lasso penalty:
$P_{\lambda}(\dvec)=\sum_{j = 1}^p \lambda w_j|\delta_j|$. When $w_j=1$ for all $j$, this penalty reduces to the lasso penalty. The weighted lasso penalty form will also be used in the computation of the folded-concave-penalized Tobit estimator (see the next section for details).

The standard coordinate descent algorithm needs to minimize $\ell_n(\delta_j| \tilde{\dvec}, \tilde{\gamma})+\lambda w_j |\delta_j|$, which requires another iterative procedure to find the minimizer. By Theorem~\ref{thm:GCD majorization}, we have that 
$$Q(\delta_j|  \tilde{\dvec}, \tilde{\gamma}) := \ell_n(\tilde{\delta}_j| \tilde{\dvec}, \tilde{\gamma}) + \ell_n'(\tilde{\delta}_j| \tilde{\dvec}, \tilde{\gamma})\cdot (\delta_j - \tilde{\delta}_j)  + \frac{1}{2} (\delta_j - \tilde{\delta}_j)^2$$ 
is a quadratic majorization function for $\ell_n(\delta_j| \tilde{\dvec}, \tilde{\gamma})$.
By the MM principle, we can simply minimize
$Q(\delta_j|  \tilde{\dvec}, \tilde{\gamma})
+\lambda w_j |\delta_j|$ to update $\delta_j$, while leaving the other parameters fixed at their current values. 
For $j = 1, \ldots, p$, we update $\delta_j$ as the minimizer of $Q(\delta_j|  \tilde{\dvec}, \tilde{\gamma}) + \lambda w_j |\delta_j|$, which is given by a soft-thresholding rule \citep{Tibshirani1996}:
$\hat{\delta}_j = S\left(\tilde{\delta}_j - \ell_n'(\tilde{\delta}_j| \tilde{\dvec}, \tilde{\gamma}), w_j\lambda \right),$
where $S(z,t) = (|z| - t)_{+} \text{sgn}(z)$.
For updating 
$\delta_0$, we use the minimizer of $Q(\delta_0|  \tilde{\dvec}, \tilde{\gamma})$, which is given by
$\hat{\delta}_0 = \tilde{\delta}_0 - \ell_n'(\tilde{\delta}_0| \tilde{\dvec}, \tilde{\gamma}) \text{.}$
Lastly, we need to update $\gamma$.
We can show that given $\tilde{\dvec}$, $\ell_n(\gamma | \tilde{\dvec})$ is minimized by
$\hat{\gamma} = 
\frac{\sum_{i = 1}^n d_i y_i ( \xvec_i'\tilde{\dvec}) + \sqrt{\left( \sum_{i = 1}^n d_iy_i ( \xvec_i'\tilde{\dvec}) \right)^2 + 4\left( \sum_{i = 1}^n d_i y_i^2  \right) \sum_{i=1}^n d_i  } }
{2 \sum_{i = 1}^n d_i y_i^2} $. For completeness, we show the whole GCD algorithm in  Algorithm~\ref{algo:gcd}.\\
\begin{algorithm}[H]\label{algo:gcd}
	\SetAlgoLined
	Initialize $(\tilde{\dvec}, \tilde{\gamma})$\;
	\Repeat{convergence }{
		Compute $\hat{\delta}_0 = \tilde{\delta}_0 - \ell_n'(\tilde{\delta}_0| \tilde{\dvec}, \tilde{\gamma})$; 		Set $\tilde{\delta}_0 = \hat{\delta}_0$ \;
		
		\For{$j = 1,\ldots, p$}{
			Compute $\hat{\delta}_j =  S\left(\tilde{\delta}_j - \ell_n'(\tilde{\delta}_j| \tilde{\dvec}, \tilde{\gamma}), w_j\lambda \right)$; 			Set $\tilde{\delta}_j = \hat{\delta}_j$ \;
		}
		
		Compute $\hat{\gamma} =  
		\frac{\sum_{i = 1}^n d_i y_i ( \xvec_i'\tilde{\dvec}) + \sqrt{\left( \sum_{i = 1}^n d_iy_i ( \xvec_i'\tilde{\dvec}) \right)^2 + 4\left( \sum_{i = 1}^n d_i y_i^2  \right) \sum_{i=1}^n d_i  } }
		{2 \sum_{i = 1}^n d_i y_i^2}$; 		Set $\tilde{\gamma} = \hat{\gamma}$\;
	}
	\caption{GCD algorithm for penalized Tobit with the weighted lasso penalty}
\end{algorithm} 

\section{Theoretical Results}\label{sec:theory}
Our objective function \eqref{eqn:penalized tobit objective} is flexible enough to accommodate a wide variety of penalties. In this section we offer theoretical studies of penalized Tobit estimators. We focus on the Tobit estimator with the lasso penalty and Tobit estimators with folded concave penalties.

\subsection{Setup}
Suppose that $\dvec^*$ and $\gamma^*$ are the true parameter values for $\dvec$ and $\gamma$. 
For notational convenience, we define $\Theta = (\theta_0, \ldots, \theta_{p+1})' := (\dvec', \gamma)'$.
We let $\mathcal{A} = \{j : \delta_j^* \neq 0\} \subseteq \{1, \ldots, p\}$ denote the true support set and define $\mathcal{A}' = \mathcal{A} \cup \{0, p+1 \}$ and $s = |\mathcal{A}|$. Under the sparsity assumption, $ s \ll p$.
Note that we continue to index $\dvec$, $\Theta$, and the columns of $\xmat$ from $0$ to $p+1$ to accommodate $\delta_0$ in $\dvec$.

We adopt the following notation throughout our analysis. For a matrix $\mathbf{A} \in [a_{ij}]_{n \times m}$ and sets of indices $\mathcal{S} \subseteq \{1, \ldots, m\}$ and $\mathcal{T} \subseteq \{1, \ldots, n\}$, we use $\mathbf{A}_{(\mathcal{S})}$ to denote the submatrix consisting of the \textit{columns} of $\mathbf{A}$ with indices in $\mathcal{S}$ and $\mathbf{A}_{\mathcal{T}}$ to denote the submatrix consisting of the \textit{rows} of $\mathbf{A}$ with indices in $\mathcal{T}$. We let $\lambda_{\max}(\mathbf{A})$ denote the largest eigenvalue of $\mathbf{A}$, let $\mathbf{A} \succ 0$ signify that $\mathbf{A}$ is positive definite, and let $\VEC(\mathbf{A}) \in \reals^{nm}$ denote the vectorization of $\mathbf{A}$. We define several matrix norms: the $\ell_{\infty}$-norm $\norm{\mathbf{A}}_{\infty} = \max_i \sum_j |a_{ij}|$, the $\ell_1$-norm $\norm{\mathbf{A}}_1 = \max_j \sum_i |a_{ij}|$, the $\ell_2$-norm $\norm{\mathbf{A}}_2 = \lambda_{\max}^{1/2}(\mathbf{A}'\mathbf{A})$, the entry-wise maximum $\norm{\mathbf{A}}_{\max} = \max_{(i,j)} |a_{i,j}|$, and the entry-wise minimum $\norm{\mathbf{A}}_{\min} = \min_{(i,j)} |a_{i,j}|$.
We let $\nabla_{\mathcal{S}} \log L_n(\Theta)$ and $\nabla_{\mathcal{S}}^2 \log L_n(\Theta)$ denote the gradient and Hessian, respectively, of $ \log L_n(\Theta)$ with respect to $\Theta_{\mathcal{S}}$. 

Because we handle censored and uncensored observations differently in the Tobit likelihood, we introduce notation to clearly differentiate between them.
Let $n_1$ denote the number of observations for which $y_i > 0$ and $n_0 = n - n_1$. Let $\xmat_1$ be the $n_1 \times (p+1)$ matrix of predictors corresponding to the observations for which $y_i > 0$ and let $\xmat_0$ be the $n_0 \times (p+1)$ matrix of predictors corresponding to the observations for which $y_i \leq 0$. Define $\yvec_0$ and $\yvec_1$ likewise. We then reorder our observations so that
$$ \xmat = 
\begin{bmatrix}
	\xmat_0 \\
	\xmat_1
\end{bmatrix} \text{ \: and \: } 
\yvec = 
\begin{bmatrix}
	\yvec_0 \\
	\yvec_1
\end{bmatrix} \text{.}$$ 

\subsection{The lasso-penalized Tobit estimator} \label{subsec:Tobit lasso theory}
Consider the lasso-penalized Tobit estimator found by minimizing 
\begin{equation}
	R_n(\dvec, \gamma) = \ell_n(\dvec, \gamma) + \lambda_{\rm lasso}\sum^p_{j=1}|\delta_j|. \label{eqn:lasso objective}
\end{equation}
We assume that the following restricted eigenvalue condition holds:
\begin{gather}
	\kappa = \min_{\mathbf{u} \in \mathcal{C}} 
	\frac{\E \left[ 
		\norm{
			\begin{bmatrix}
				- \xmat_1 & \yvec_1
			\end{bmatrix}
			\mathbf{u} }_{2}^2
		\right]}{n \norm{ \mathbf{u} }_{2}^2} \in (0, \infty)
	\tag{A0} \label{RE_condition}
\end{gather}
where $\mathcal{C} = \{ \mathbf{u} \neq \mathbf{0} : \norm{ \mathbf{u}_{\mathcal{A}'^c} }_{1} \leq 3 \norm{ \mathbf{u}_{\mathcal{A}'} }_{1} \} $.
It is worth pointing out that condition \eqref{RE_condition} is similar in spirit to the restricted eigenvalue condition used in lasso-penalized least squares  \citep{Bickel2009} but also has an important technical difference because the response variable appears together with the predictors---even in the fixed design setting, the entire matrix 
$\begin{bmatrix}
	- \xmat_1 & \yvec_1
\end{bmatrix}$
is random. Thus we must take the expectation in condition \eqref{RE_condition} so that $\kappa$ will be deterministic.

In the following result, we bound the $\ell_2$ estimation loss of the Tobit lasso estimator with high probability. Let $g(s) = \phi(s)/\Phi(s)$, where $\phi(\cdot)$ denotes the standard normal density function.
We assume the following:
\begin{enumerate}[label = (A\arabic*) , start = 1 ]
	\item $\max_j \norm{\xvec_{(j)} }_2 = O(\sqrt{n})$, $\sum_{i = 1}^n (\xvec_i'\dvec^*)^2 = O(n)$, $\max_{j,k}\sum_{i = 1}^nx_{ij}^2x_{ik}^2 = O(n)$,\\ $\max_j \sum_{i = 1}^n x_{ij}^2(2 + \xvec_i'\dvec^* + g(-\xvec_i'\dvec^*))^2 = O(n)$, and $\sum_{i = 1}^n (\xvec_i'\dvec^*)^2(2 + \xvec_i'\dvec^* + g(- \xvec_i'\dvec^*))^2 = O(n)$ where $j,k\in\{0, \ldots, p\}$; \label{assumption:denominator rates}
	\item $s = O\left( n^{\alpha_1} \right)$, $\log(p) = O(n^{\alpha_2})$, where $\alpha_1, \alpha_2 \in \left(0, \frac{1}{3}\right)$; \label{assumption: rate of p,s}
\end{enumerate}
and define $M_1 = \max_j n^{-1} \norm{\xvec_{(j)} }_2^2$ and $M_2 = 16 + 4n^{-1}\sum_{i = 1}^n (\xvec_i'\dvec^*)^2$.
\begin{thm}\label{cor:lasso asymptotic}
	Suppose that $Y_i^* = \xvec_i' \beta^* + \epsilon_i$ where $\epsilon_i \iid N(0, {\sigma^*}^2)$ and define $Y_i = Y_i^* \ind_{Y_i^* > 0}$ for $i = 1, \ldots, n$.
	Let $\hat{\Theta}^{\lasso}$ denote the solution to the lasso-penalized Tobit model \eqref{eqn:lasso objective} with penalty parameter $\lambda_{\lasso} = A \sqrt{\frac{\log p}{n}}$ where $A > \max \left\{ 4 \sqrt{M_1}, \frac{\sqrt{8 M_2}}{\gamma^*} \right\} $. If \eqref{RE_condition} - \emph{\ref{assumption: rate of p,s}} hold, then for large $n, p$
	\begin{equation*}
		\norm{ \hat{\Theta}^{\lasso}  - \Theta^* }_2 \leq \frac{3\sqrt{s+2}\lambda_{\lasso}}{\kappa}
	\end{equation*}
	with probability at least
	$1 - b_1 p^{-1} - b_2 e^{-b_3 n^{1-2\alpha_1}}$
	where $b_1,b_2,b_3$ are constants.
\end{thm}
\begin{remark}
	Under condition \emph{\ref{assumption: rate of p,s}}, $\sqrt{ \frac{(s+2)\log p}{n} } \to 0$ and, by extension, the $\ell_2$ estimation loss for $\hat{\Theta}_{\lasso}$ converges to $0$ as $n,p \to \infty$. As such, $\hat{\Theta}_{\lasso}$ is consistent under the $\ell_2$ norm.
\end{remark}
\noindent Theorem \ref{cor:lasso asymptotic} follows immediately from the more general finite-sample probability bound given in Theorem \ref{thm:a0_prob} (in Appendix \ref{sec:intermediate results}). Note that we cannot compute $\lambda_{\lasso} = A \sqrt{\frac{\log p}{n}}$ in practice as $\dvec^*$ and $\gamma^*$ are unknown.

\subsection{The folded-concave-penalized Tobit estimator}\label{subsec:Tobit LLA theory}
It is now well-understood that a lasso-penalized estimator often does not achieve consistent model selection unless a stringent ``irrepresentable condition" \citep{Bin2006,Zou2006} is assumed. To relax this condition, we can try to use a folded-concave-penalized Tobit estimator. We aim to minimize
$
R_n(\dvec, \gamma) = \ell_n(\dvec, \gamma) + P_{\lambda}(\dvec)
$
where
$P_{\lambda}(\dvec) = \sum_{j = 1}^p P_{\lambda}(|\delta_j|)$ is a \textit{folded concave penalty}, meaning that $P_{\lambda}(|t|)$ satisfies
\begin{enumerate}[label = (\roman*)]
	\item $P_{\lambda}(t)$ is increasing and concave in $t \in [0, \infty)$ with $P_{\lambda}(0) = 0$.
	\item $P_{\lambda}(t)$ is differentiable in $t  \in(0, \infty)$ with $P_{\lambda}'(0) := P_{\lambda}'(0+) \geq a_1 \lambda$.
	\item $P_{\lambda}'(t) \geq a_1 \lambda$ for $t \in (0, a_2\lambda]$
	\item $P_{\lambda}'(t) = 0$ for $t \in [a\lambda, \infty)$ where $a > a_2$,
\end{enumerate}
where $a$ is pre-specified and $a_1$ and $a_2$ are fixed positive constants which depend on the folded concave penalty we choose. Two well-known folded concave penalties are the SCAD penalty \citep{Fan2001}, the derivative of which is given by
$
P_{\lambda}'(t) = \lambda \ind_{t \leq \lambda} + \frac{(a\lambda - t)_+}{a - 1}\ind_{t > \lambda},
$
where $a > 2$, and the MCP \citep{Zhang2010}, the derivative of which is given by
$
P_{\lambda}'(t) = \left(\lambda - \frac{t}{a}\right)_+,
$
where $a > 1$. One can show that $a_1 = a_2 = 1$ for the SCAD penalty and $a_1 = 1 - a^{-1}$, $a_2 = 1$ for the MCP.

The strongest rationale for using a folded concave penalty is that it can produce an estimator with the \textit{strong oracle property}---that is, an estimator which is equal to the \textit{oracle estimator} with very high probability \citep{Fan2011,Fan2014}. The oracle estimator knows the true support set $\mathcal{A}$ beforehand and, as a result, delivers optimal estimation efficiency. For the Tobit model, the oracle estimator is given by
\begin{equation}
	\hat{\Theta}^{\oracle} = \argmin_{\Theta: \Theta_{{\mathcal{A}'}^c} = \mathbf{0}} \ell_n(\Theta) \label{eqn:oracle objective}
\end{equation}
Note that we cannot solve \eqref{eqn:oracle objective} in practice because $\mathcal{A}$ is unknown. Instead, the oracle estimator is a theoretical benchmark to compare our estimators against. Because our loss function is convex, the oracle estimator is unique, meaning that
\begin{equation}
	\nabla_j \ell_n (\hat{\Theta}^{\oracle}) = 0 \text{\: \: \: }\forall j \in \mathcal{A}' \label{oracle_prop}
\end{equation}
where $\nabla_j$ denotes the derivative with respect to the $j$th element of $\Theta$.

With a folded concave penalty function, the overall objective $R_n(\dvec, \gamma)$ may no longer be convex and, consequently, could have multiple local solutions. As such, theory should be developed for a specific, explicitly-defined local solution. We examine the local solution to the folded concave penalization problem which we obtain using the LLA algorithm \citep{Zou2008}. This choice is inspired by the general theory developed in 
\cite{Fan2014} where the authors established the strong oracle property of the LLA solution for a wide class of problems. We expect the same result holds for the Tobit model. 

The LLA algorithm turns the Tobit model with a folded concave penalty into a sequence of weighted Tobit lasso models. We can use Algorithm 1 to fit each weighted Tobit lasso model. Algorithm 2 shows the complete details of the LLA algorithm for the Tobit model with a folded concave penalty.

\begin{algorithm}[H]
	\SetAlgoLined
	Initialize $\hat{\Theta}^{(0)} = \hat{\Theta}^{\initial}$ and compute the adaptive weights 
	$$\hat{\mathbf{w}}^{(0)} = (\hat{w}_1^{(0)}, \ldots, \hat{w}_p^{(0)})' = 
	( P_{\lambda}'(|\hat{\delta}_1^{(0)} |), \ldots, P_{\lambda}'(|\hat{\delta}_p^{(0)} |) )' \text{.}$$
	
	\For{$m = 1,2, \ldots$}{
		Solve the following optimization problem
		$\hat{\Theta}^{(m)} = \argmin_{\Theta} \ell_n(\Theta) + \sum_{j = 1}^p \hat{w}_j^{(m-1)} \cdot |\delta_j|$\;
		Update the adaptive weight vector with $ \hat{w}_j^{(m)} = P_{\lambda}'(|\hat{\delta}_j^{(m)} |)$ for $j = 1, \ldots, p$.
	}
	\caption{The local linear approximation (LLA) algorithm}
\end{algorithm}

We aim to show that the LLA algorithm finds the oracle estimator in one step and converges to it in two steps with probability rapidly converging to $1$ as $n,p \to \infty$.
We define
{\small $
	Q_1 = \underset{j \in \mathcal{A} \cup \{0\}}{\max} \lambda_{\max}\left( \frac{1}{n} \xmat_{ (\mathcal{A} \cup \{0\}) }' \diag\{ |\xmat_{(j)} |\} \xmat_{(\mathcal{A} \cup \{0\})}   \right)
	$;
	$
	{Q}_2 = \norm{ \left( \E \left[ \frac{1}{n} \nabla_{\mathcal{A}'}^2 \log L_n(\Theta^*)  \right] \right)^{-1} }_{\infty}
	$; 
	$
	Q_3 = Q_2 \cdot \norm{ \E\left[\frac{1}{n}
		[\nabla^2 \log L_n(\Theta^*)]_{\mathcal{A}'^c, \mathcal{A}'}
		\right] }_{\infty}
	$;
	$H_{\mathcal{A}', \mathcal{A}'}^* = \E\left[ \frac{1}{n} \nabla_{\mathcal{A}'}^2 \log L_n(\Theta^*) \right] \otimes \E\left[ \frac{1}{n} \nabla_{\mathcal{A}'}^2 \log L_n(\Theta^*) \right]$,} where $\otimes$ denotes the Kronecker product; {\small $K_1 = \norm{\E\left[ \frac{1}{n} \nabla_{\mathcal{A}'}^2 \log L_n(\Theta^*) \right]}_{\infty}$; and $K_2 = \norm{(H_{\mathcal{A}', \mathcal{A}'}^*)^{-1}}_{\infty}$.} 
We assume the following:
\begin{enumerate}[label = (A\arabic*) , start = 3]
	\item $|| \dvec^*_{\mathcal{A}} ||_{\min} > (a+1)\lambda $ \label{min signal strength}
	\item $\E\left[ \nabla^2_{\mathcal{A}'} \log L_n(\Theta^*) \right] \succ 0$ \label{hessian positive definite}
	\item $Q_1 = O(1)$, $Q_2 = O(1)$, $Q_3 = O(1)$, $K_1 = O(1)$, and $K_2 = O(1)$; \label{assumption:Q,K rates}
	\item $\exists_{C_1, C_2 > 0}$ such that $Q_2 > C_1$ and $ \norm{
		\E\left[
		\frac{1}{n}
		[\nabla^2 \log L_n(\Theta^*)]_{\mathcal{A}'^c, \mathcal{A}'}
		\right]
	}_{\infty} > C_2$ for all $n$. \label{assumption:Q lower bounds}
\end{enumerate}

\begin{thm}\label{cor:lla asymptotics}
	Suppose that $Y_i^* = \xvec_i' \beta^* + \epsilon_i$ where $\epsilon_i \iid N(0, {\sigma^*}^2)$ and define $Y_i = Y_i^* \ind_{Y_i^* > 0}$ for $i = 1, \ldots, n$. 
	Let $\lambda = B \sqrt{\frac{\log p}{n}}$ with $B > \max\left\{ \frac{(9Q_3 + 2)\sqrt{4M_1}}{a_1}, \frac{(9Q_3 + 2)\sqrt{2M_2}}{a_1 \gamma^*}, 4 Q_2 \sqrt{M_1}, \frac{Q_2 \sqrt{8 M_2}}{\gamma^*} \right\} $. Let $a_0 = \min \{1, a_2\}$.
	Suppose that our initial estimator $\hat{\Theta}^{\initial}$ satisfies
	\begin{equation}
		||\hat{\dvec}^{\initial}_{(-0)} - \dvec^*_{(-0)} ||_{\max} \leq a_0 \lambda \label{assumption:initial estimator bound}
	\end{equation}
	If \emph{\ref{assumption:denominator rates}} - \emph{\ref{assumption:Q lower bounds}} hold,
	then for large $n, p$ the LLA algorithm initialized by $\hat{\Theta}^{\initial}$ finds $\hat{\Theta}^{\oracle}$ in one iteration with probability at least
	$ 1 - c_1 p^{-1} - c_2 e^{-c_3 n^{1-2\alpha_1}}$
	and converges to $\hat{\Theta}^{\oracle}$ after two iterations with probability at least
	$ 1 - d_1 p^{-1} - d_2 e^{-d_3 n^{1-2\alpha_1}}$
	where $c_1, c_2, c_3, d_1, d_2, d_3$ are constants.
\end{thm}
\begin{remark}
	Under condition \emph{\ref{assumption: rate of p,s}} both of the probability bounds in Theorem \ref{cor:lla asymptotics} rapidly converge to $1$ as $n,p \to \infty$.
\end{remark}
\noindent Theorem \ref{cor:lla asymptotics} follows immediately from Theorem \ref{thm:a1_a2_prob} (in Appendix \ref{sec:intermediate results}), which provides more general finite-sample bounds. Note that we cannot obtain $\lambda = B \sqrt{\frac{\log p}{n}}$ in practice as $\dvec^*$ and $\gamma^*$ are unknown.

All that remains is to pick an initial estimator which satisfies \eqref{assumption:initial estimator bound} with high probability.
We choose the Tobit lasso as our initial estimator since we already have an estimation loss bound from Theorem \ref{cor:lasso asymptotic}. 
The following corollary combines Theorems \ref{cor:lasso asymptotic} and \ref{cor:lla asymptotics} to bound the probability that the LLA algorithm initialized by $\hat{\Theta}^{\lasso}$ converges to the oracle estimator in two steps.
\begin{cor}\label{cor:a0_prob}
	Suppose that $Y_i^* = \xvec_i' \beta^* + \epsilon_i$ where $\epsilon_i \iid N(0, {\sigma^*}^2)$ and define $Y_i = Y_i^* \ind_{Y_i^* > 0}$ for $i = 1, \ldots, n$. 
	Define $A$ and $B$ as in Theorems \ref{cor:lasso asymptotic} and \ref{cor:lla asymptotics}.
	If conditions \eqref{RE_condition} - \emph{\ref{assumption:Q lower bounds}} hold, 
	$\lambda_{\lasso} = A\sqrt{\frac{\log p}{n}}$
	, and $ \lambda = \max\left\{ B\sqrt{\frac{\log p}{n}}, \frac{3 \sqrt{s+2} \lambda_{\lasso}}{a_0 \kappa} \right\} $, then for large $n, p$ the LLA algorithm initialized by $\hat{\Theta}^{\lasso}$ converges to $\hat{\Theta}^{\oracle}$ after two iterations with probability at least
	$ 1 -k_1 p^{-1} - k_2 e^{- k_3 n^{1-2\alpha_1}}$
	where $k_1, k_2, k_3$ are constants.
\end{cor}
\begin{remark}
	Under condition \emph{\ref{assumption: rate of p,s}} the probability bound in Corollary \ref{cor:a0_prob} rapidly converges to $1$ as $n,p \to \infty$. As such, Corollary \ref{cor:a0_prob} establishes that the two-step LLA estimator initialized by the Tobit lasso possesses the strong oracle property in a high-dimensional setting where $p \gg n$.
\end{remark}

\section{Simulation Study}\label{sec:sims}
In the following simulation study, we compare the Tobit lasso, the two-step Tobit LLA estimator with a SCAD penalty initialized by the Tobit lasso (Tobit LLA), the least-squares lasso, least-squares SCAD, and \citeauthor{Soret2018}'s(\citeyear{Soret2018}) high-dimensional Buckley-James estimator (SAWCT2018) to determine whether the penalized Tobit models provide an appreciable improvement in prediction, estimation, and selection performance on high-dimensional data with a left-censored response. 
We compare our methods to SAWCT2018 as it is the best available alternative for high-dimensional left-censored regression. 
For reference, we set $a = 3.7$ for Tobit LLA's SCAD penalty and the least-squares SCAD penalty throughout these simulations. 

For each simulation setting, we generate $100$ datasets with $100$ training observations and $5000$ test observations.
We generate an uncensored response from a linear model  $y_i^* =  \beta_0 + \xvec_i'\bvec + \epsilon_i$,
where $\xvec_i \sim N(0, \Sigma)$ and $\epsilon_i \sim N(0,\sigma^2)$, and left-censor it to create $y_i$ as follows. 
Let $q$ denote the proportion of the $y_i$ that are left-censored in a simulated dataset. 
We control $q$ by setting $c_q$ to be the $q$-quantile of the $y_i^*$ from both the training and test data and censoring the response at $c_q$---that is, we set $y_i = \max\{y_i^*, c_q\}$.  

We have four elements we can vary across our simulation settings: $\Sigma$, $q$, $p$, and the response generating parameters ($\beta_0$, $\bvec$, and $\sigma$).  
We run simulations with each of the following covariance structures for the predictors: independent, CS(0.5), CS(0.8), AR1(0.5), and AR1(0.8) (CS($\rho$) means that $\Sigma_{ij} = \rho$ for $i \neq j$, $\Sigma_{ii}=1$ for all $i$ and AR1($\rho$) means that $(\Sigma_{\rho})_{ij} = \rho^{|i-j|}$ for all $i,j$).
For each covariance structure, we generate datasets with every combination of $q \in \{ \frac{1}{8}, \frac{1}{4}, \frac{1}{2} \}$ and $p \in \{ 50, 500 \}.$ For all of these simulations, we set $\beta_0 = 3$, $\beta = (5, 1, 0.5, -2, 0.1, 0, \ldots , 0)$, and $\sigma = 1$. 
All together we examine 30 cases. 
We group our results into Tables \ref{table:independent full results}, \ref{table:cs 0.5 full results}, \ref{table:cs 0.8 full results}, \ref{table:ar1 0.5 full results}, and \ref{table:ar1 0.8 full results} based on the covariance structure of the predictors and vary $p$ and $q$ within these tables.

To assess the prediction performance of the models, we tune each of the models on the training data using 5-fold CV then compute the MSE of their predictions on the test data. In our simulation results, we report the average test MSE over 100 replications and give its standard error in parentheses. 
We also include prediction results for the ordinary least squares oracle model (OLS Oracle) and an ordinary least squares model with all of the predictors (OLS) for cases where $p \leq n$.

We use a variety of metrics to compare the parameter estimation and selection performance of the penalized models. 
To compare the accuracy of the parameter estimates, we report the $\ell_1$ loss $\lVert \hat{\bvec} - \bvec^* \rVert_1$ and $\ell_2$ loss $\lVert \hat{\bvec} - \bvec^* \rVert_2$. 
To assess the selection performance of these models, we report the number of false positive (FP) and false negative (FN) variable selections. In our simulation results, we report the average for each metric over 100 replications and give its standard error in parentheses.

\begin{table}
	\caption{Simulation Results with Independent Covariates}
	\label{table:independent full results}
	\vspace{0.4cm}
	\centering
	\begin{tabular}[t]{ll|l|c|cccc}
		q & p & Method & MSE & $\ell_2$ & $\ell_1$ & FP & FN\\
		\hline
		&  & Lasso & 2.37(0.03) & 1.5(0.06) & 3.14(0.09) & 4.6(0.4) & 0.8(0.1)\\
		
		&  & SCAD & 2.22(0.02) & 0.98(0.05) & 2.19(0.06) & 2(0.2) & 1(0.1)\\
		
		&  & Tobit Lasso & 1.08(0.01) & 0.24(0.01) & 1.61(0.05) & 7(0.3) & \textbf{0.6(0)}\\
		
		&  & Tobit LLA & \textbf{1.01(0.01)} & \textbf{0.15(0.01)} & \textbf{0.81(0.03)} & \textbf{1.1(0.1)} & 0.9(0)\\
		
		&  & SAWCT2018 & 1.09(0.01) & 0.28(0.01) & 1.47(0.04) & 4.4(0.2) & 0.7(0)\\
		
		&  & OLS Oracle & 2.1(0.01) & - & - & - & -\\
		
		& \multirow[t]{-7}{*}{\raggedright\arraybackslash 50} & OLS & 3.95(0.07) & - & - & - & -\\
		\cline{2-8}
		&  & Lasso & 2.45(0.04) & 1.87(0.07) & 3.85(0.12) & 9.6(0.8) & 1.2(0.1)\\
		
		&  & SCAD & 2.1(0.02) & 0.91(0.05) & 2.35(0.07) & 4.5(0.4) & 1.2(0.1)\\
		
		&  & Tobit Lasso & 1.23(0.01) & 0.49(0.02) & 2.61(0.05) & 14.5(0.5) & \textbf{0.9(0)}\\
		
		&  & Tobit LLA & \textbf{1.04(0.01)} & \textbf{0.22(0.01)} & \textbf{1.03(0.03)} & \textbf{2.3(0.2)} & 1(0)\\
		
		&  & SAWCT2018 & 1.28(0.02) & 0.6(0.02) & 2.51(0.07) & 10.6(0.6) & 1(0)\\
		
		\multirow[t]{-13}{*}{\raggedright\arraybackslash $\frac{1}{8}$} & \multirow[t]{-6}{*}{\raggedright\arraybackslash 500} & OLS Oracle & 1.93(0.01) & - & - & - & -\\
		\cline{1-8}
		&  & Lasso & 3.24(0.04) & 4.45(0.13) & 4.91(0.09) & 4(0.3) & 1.2(0.1)\\
		
		&  & SCAD & 3.05(0.03) & 3.42(0.11) & 4.07(0.08) & 2.1(0.2) & 1.4(0.1)\\
		
		&  & Tobit Lasso & 0.9(0.01) & 0.3(0.01) & 1.69(0.05) & 5.9(0.3) & \textbf{0.6(0)}\\
		
		&  & Tobit LLA & \textbf{0.84(0.01)} & \textbf{0.18(0.01)} & \textbf{0.88(0.03)} & \textbf{1(0.1)} & 0.9(0)\\
		
		&  & SAWCT2018 & 0.93(0.01) & 0.42(0.02) & 1.77(0.04) & 4.3(0.2) & 0.7(0.1)\\
		
		&  & OLS Oracle & 2.79(0.01) & - & - & - & -\\
		
		& \multirow[t]{-7}{*}{\raggedright\arraybackslash 50} & OLS & 5.49(0.1) & - & - & - & -\\
		\cline{2-8}
		&  & Lasso & 3.58(0.05) & 5.01(0.13) & 5.8(0.15) & 9.5(0.9) & 1.6(0.1)\\
		
		&  & SCAD & 3.18(0.03) & 3.23(0.11) & 4.36(0.08) & 6(0.5) & 1.5(0.1)\\
		
		&  & Tobit Lasso & 1.12(0.02) & 0.61(0.03) & 2.78(0.06) & 13.4(0.4) & \textbf{1(0)}\\
		
		&  & Tobit LLA & \textbf{0.94(0.01)} & \textbf{0.28(0.02)} & \textbf{1.13(0.03)} & \textbf{2(0.2)} & 1.1(0)\\
		
		&  & SAWCT2018 & 1.21(0.02) & 0.91(0.04) & 2.95(0.08) & 10.5(0.5) & \textbf{1(0)}\\
		
		\multirow[t]{-13}{*}{\raggedright\arraybackslash $\frac{1}{4}$} & \multirow[t]{-6}{*}{\raggedright\arraybackslash 500} & OLS Oracle & 2.88(0.02) & - & - & - & -\\
		\cline{1-8}
		&  & Lasso & 3.84(0.04) & 15.6(0.21) & 8.26(0.07) & 3.3(0.3) & 1.7(0.1)\\
		
		&  & SCAD & 3.66(0.03) & 13.73(0.21) & 7.57(0.07) & 2.2(0.2) & 1.9(0.1)\\
		
		&  & Tobit Lasso & 0.69(0.01) & 0.55(0.03) & 2.24(0.06) & 6.1(0.3) & \textbf{0.6(0.1)}\\
		
		&  & Tobit LLA & \textbf{0.6(0.01)} & \textbf{0.31(0.02)} & \textbf{1.14(0.04)} & \textbf{0.8(0.1)} & 1.1(0)\\
		
		&  & SAWCT2018 & 0.74(0.01) & 1.43(0.06) & 2.87(0.07) & 3.6(0.2) & 0.8(0.1)\\
		
		&  & OLS Oracle & 3.45(0.02) & - & - & - & -\\
		
		& \multirow[t]{-7}{*}{\raggedright\arraybackslash 50} & OLS & 6.49(0.1) & - & - & - & -\\
		\cline{2-8}
		&  & Lasso & 3.99(0.04) & 18.11(0.18) & 9.66(0.16) & 8.4(0.9) & 2.2(0.1)\\
		
		&  & SCAD & 3.72(0.03) & 15.21(0.19) & 8.48(0.06) & 5.5(0.5) & 2.2(0.1)\\
		
		&  & Tobit Lasso & 0.93(0.02) & 1.42(0.07) & 3.57(0.08) & 10.2(0.4) & \textbf{1.2(0)}\\
		
		&  & Tobit LLA & \textbf{0.69(0.01)} & \textbf{0.49(0.03)} & \textbf{1.49(0.04)} & \textbf{1.7(0.2)} & 1.4(0)\\
		
		&  & SAWCT2018 & 1.11(0.03) & 3.57(0.14) & 5.09(0.1) & 10.1(0.4) & 1.3(0)\\
		
		\multirow[t]{-13}{*}{\raggedright\arraybackslash $\frac{1}{2}$} & \multirow[t]{-6}{*}{\raggedright\arraybackslash 500} & OLS Oracle & 3.28(0.01) & - & - & - & -\\
	\end{tabular}
\end{table}

\begin{table}
	\caption{Simulation Results with CS(0.5) Covariates}
	\label{table:cs 0.5 full results}
	\vspace{0.4cm}
	\centering
	\begin{tabular}[t]{ll|l|c|cccc}
		q & p & Method & MSE & $\ell_2$ & $\ell_1$ & FP & FN\\
		\hline
		&  & Lasso & 2.02(0.02) & 1.6(0.07) & 3.33(0.08) & 6.5(0.3) & 0.9(0.1)\\
		
		&  & SCAD & 1.93(0.02) & 0.92(0.05) & 2.05(0.07) & \textbf{1.6(0.2)} & 1.2(0.1)\\
		
		&  & Tobit Lasso & 1.07(0.01) & 0.41(0.02) & 1.94(0.05) & 7.3(0.3) & \textbf{0.6(0)}\\
		
		&  & Tobit LLA & \textbf{1(0.01)} & \textbf{0.25(0.01)} & \textbf{1.04(0.03)} & 1.8(0.2) & 0.9(0)\\
		
		&  & SAWCT2018 & 1.08(0.01) & 0.45(0.02) & 1.72(0.04) & 5.4(0.2) & 0.7(0.1)\\
		
		&  & OLS Oracle & 1.8(0.01) & - & - & - & -\\
		
		& \multirow[t]{-7}{*}{\raggedright\arraybackslash 50} & OLS & 3.48(0.08) & - & - & - & -\\
		\cline{2-8}
		&  & Lasso & 2.37(0.03) & 2.61(0.09) & 4.97(0.15) & 15.8(0.8) & 1.5(0.1)\\
		
		&  & SCAD & 2.03(0.02) & 1.03(0.05) & 2.34(0.06) & \textbf{4.3(0.3)} & 1.6(0.1)\\
		
		&  & Tobit Lasso & 1.19(0.01) & 0.84(0.03) & 3.11(0.06) & 15.8(0.4) & \textbf{1.2(0)}\\
		
		&  & Tobit LLA & \textbf{1(0.01)} & \textbf{0.36(0.02)} & \textbf{1.41(0.04)} & 5.3(0.3) & \textbf{1.2(0)}\\
		
		&  & SAWCT2018 & 1.22(0.01) & 0.95(0.03) & 3.2(0.07) & 15.6(0.5) & \textbf{1.2(0)}\\
		
		\multirow[t]{-13}{*}{\raggedright\arraybackslash $\frac{1}{8}$} & \multirow[t]{-6}{*}{\raggedright\arraybackslash 500} & OLS Oracle & 1.87(0.01) & - & - & - & -\\
		\cline{1-8}
		&  & Lasso & 2.91(0.03) & 4.83(0.15) & 5.31(0.11) & 5.9(0.3) & 1.1(0.1)\\
		
		&  & SCAD & 2.74(0.03) & 3.31(0.13) & 3.9(0.09) & 1.8(0.2) & 1.6(0.1)\\
		
		&  & Tobit Lasso & 0.91(0.01) & 0.47(0.02) & 2.03(0.05) & 7.2(0.3) & \textbf{0.7(0.1)}\\
		
		&  & Tobit LLA & \textbf{0.84(0.01)} & \textbf{0.28(0.02)} & \textbf{1.1(0.03)} & \textbf{1.4(0.1)} & 1(0)\\
		
		&  & SAWCT2018 & 0.95(0.01) & 0.67(0.03) & 2.01(0.05) & 5(0.2) & 0.8(0)\\
		
		&  & OLS Oracle & 2.57(0.01) & - & - & - & -\\
		
		& \multirow[t]{-7}{*}{\raggedright\arraybackslash 50} & OLS & 4.81(0.08) & - & - & - & -\\
		\cline{2-8}
		&  & Lasso & 2.99(0.03) & 5.94(0.17) & 6.46(0.13) & 13.3(0.7) & 1.8(0.1)\\
		
		&  & SCAD & 2.69(0.03) & 3.34(0.14) & 4.26(0.09) & 5.7(0.4) & 1.9(0.1)\\
		
		&  & Tobit Lasso & 1.06(0.01) & 1.07(0.05) & 3.35(0.08) & 14.9(0.5) & \textbf{1.2(0)}\\
		
		&  & Tobit LLA & \textbf{0.89(0.01)} & \textbf{0.47(0.03)} & \textbf{1.58(0.05)} & \textbf{5.4(0.4)} & \textbf{1.2(0.1)}\\
		
		&  & SAWCT2018 & 1.11(0.02) & 1.33(0.05) & 3.58(0.08) & 14(0.5) & \textbf{1.2(0.1)}\\
		
		\multirow[t]{-13}{*}{\raggedright\arraybackslash $\frac{1}{4}$} & \multirow[t]{-6}{*}{\raggedright\arraybackslash 500} & OLS Oracle & 2.41(0.01) & - & - & - & -\\
		\cline{1-8}
		&  & Lasso & 3.15(0.03) & 15.52(0.21) & 8.63(0.08) & 5.4(0.3) & 1.6(0.1)\\
		
		&  & SCAD & 3.12(0.03) & 13.32(0.24) & 7.58(0.1) & 2(0.2) & 2.2(0.1)\\
		
		&  & Tobit Lasso & 0.68(0.01) & 0.86(0.05) & 2.65(0.08) & 6.7(0.3) & \textbf{0.9(0.1)}\\
		
		&  & Tobit LLA & \textbf{0.63(0.01)} & \textbf{0.55(0.05)} & \textbf{1.52(0.06)} & \textbf{1.3(0.1)} & 1.3(0.1)\\
		
		&  & SAWCT2018 & 0.77(0.01) & 2.13(0.1) & 3.4(0.08) & 4.2(0.2) & 1(0.1)\\
		
		&  & OLS Oracle & 2.87(0.02) & - & - & - & -\\
		
		& \multirow[t]{-7}{*}{\raggedright\arraybackslash 50} & OLS & 5.48(0.08) & - & - & - & -\\
		\cline{2-8}
		&  & Lasso & 3.48(0.03) & 18.3(0.26) & 9.76(0.13) & 10.1(0.7) & 2.4(0.1)\\
		
		&  & SCAD & 3.25(0.02) & 14.92(0.26) & 8.14(0.07) & \textbf{2.8(0.4)} & 2.8(0)\\
		
		&  & Tobit Lasso & 0.94(0.02) & 2.21(0.1) & 4.27(0.09) & 12.7(0.4) & \textbf{1.5(0.1)}\\
		
		&  & Tobit LLA & \textbf{0.7(0.01)} & \textbf{0.82(0.05)} & \textbf{1.99(0.06)} & 4.4(0.3) & 1.6(0.1)\\
		
		&  & SAWCT2018 & 1.02(0.02) & 3.81(0.15) & 5.36(0.1) & 12.9(0.4) & 1.6(0.1)\\
		
		\multirow[t]{-13}{*}{\raggedright\arraybackslash $\frac{1}{2}$} & \multirow[t]{-6}{*}{\raggedright\arraybackslash 500} & OLS Oracle & 2.85(0.01) & - & - & - & -\\
	\end{tabular}
\end{table}

\begin{table}
	\centering
	\caption{Simulation Results with CS(0.8) Covariates}
	\label{table:cs 0.8 full results}
	\vspace{0.4cm}
	\begin{tabular}[t]{ll|l|c|cccc}
		q & p & Method & MSE & $\ell_2$ & $\ell_1$ & FP & FN\\
		\hline
		&  & Lasso & 2(0.02) & 3.02(0.15) & 4.52(0.14) & 6(0.3) & 1.2(0.1)\\
		
		&  & SCAD & 1.94(0.02) & 1.83(0.13) & 2.76(0.11) & \textbf{0.9(0.1)} & 2.1(0.1)\\
		
		&  & Tobit Lasso & 1.06(0.01) & 1.02(0.05) & 2.74(0.08) & 6(0.2) & \textbf{1(0.1)}\\
		
		&  & Tobit LLA & \textbf{1.02(0.01)} & \textbf{0.74(0.05)} & \textbf{1.77(0.06)} & 1.2(0.1) & 1.5(0.1)\\
		
		&  & SAWCT2018 & 1.16(0.01) & 1.61(0.07) & 2.71(0.07) & 3.4(0.2) & 1.2(0.1)\\
		
		&  & OLS Oracle & 1.79(0.01) & - & - & - & -\\
		
		& \multirow[t]{-7}{*}{\raggedright\arraybackslash 50} & OLS & 3.38(0.05) & - & - & - & -\\
		\cline{2-8}
		&  & Lasso & 2.11(0.03) & 4.56(0.17) & 6.34(0.22) & 13.6(0.7) & 1.9(0.1)\\
		
		&  & SCAD & 1.94(0.02) & 2.45(0.17) & 3.15(0.11) & \textbf{1.3(0.2)} & 2.5(0.1)\\
		
		&  & Tobit Lasso & 1.21(0.01) & 1.98(0.08) & 4(0.08) & 11.3(0.4) & \textbf{1.6(0.1)}\\
		
		&  & Tobit LLA & \textbf{1.07(0.01)} & \textbf{1.09(0.07)} & \textbf{2.26(0.07)} & 4.1(0.3) & 1.8(0.1)\\
		
		&  & SAWCT2018 & 1.22(0.01) & 2.12(0.08) & 3.91(0.07) & 10.5(0.3) & \textbf{1.6(0.1)}\\
		
		\multirow[t]{-13}{*}{\raggedright\arraybackslash $\frac{1}{8}$} & \multirow[t]{-6}{*}{\raggedright\arraybackslash 500} & OLS Oracle & 1.65(0.01) & - & - & - & -\\
		\cline{1-8}
		&  & Lasso & 2.52(0.03) & 5.66(0.22) & 5.81(0.12) & 5.4(0.3) & 1.6(0.1)\\
		
		&  & SCAD & 2.54(0.03) & 4.07(0.26) & 4.32(0.18) & \textbf{0.8(0.1)} & 2.3(0.1)\\
		
		&  & Tobit Lasso & 0.92(0.01) & 1.01(0.05) & 2.74(0.08) & 5.8(0.2) & \textbf{1.1(0.1)}\\
		
		&  & Tobit LLA & \textbf{0.89(0.01)} & \textbf{0.86(0.05)} & \textbf{1.87(0.06)} & 1(0.1) & 1.7(0)\\
		
		&  & SAWCT2018 & 1.02(0.01) & 1.8(0.08) & 2.94(0.07) & 3.3(0.2) & 1.4(0.1)\\
		
		&  & OLS Oracle & 2.29(0.01) & - & - & - & -\\
		
		& \multirow[t]{-7}{*}{\raggedright\arraybackslash 50} & OLS & 4.32(0.07) & - & - & - & -\\
		\cline{2-8}
		&  & Lasso & 2.71(0.03) & 8.37(0.26) & 7.56(0.23) & 10.8(0.7) & 2.4(0.1)\\
		
		&  & SCAD & 2.57(0.02) & 5.55(0.27) & 4.92(0.13) & \textbf{1.1(0.2)} & 2.7(0)\\
		
		&  & Tobit Lasso & 1.05(0.02) & 2.29(0.11) & 4.17(0.09) & 11.3(0.4) & \textbf{1.6(0.1)}\\
		
		&  & Tobit LLA & \textbf{0.93(0.01)} & \textbf{1.37(0.09)} & \textbf{2.48(0.08)} & 3.8(0.3) & 2(0.1)\\
		
		&  & SAWCT2018 & 1.09(0.02) & 2.66(0.11) & 4.14(0.09) & 9.8(0.3) & 1.7(0.1)\\
		
		\multirow[t]{-13}{*}{\raggedright\arraybackslash $\frac{1}{4}$} & \multirow[t]{-6}{*}{\raggedright\arraybackslash 500} & OLS Oracle & 2.2(0.01) & - & - & - & -\\
		\cline{1-8}
		&  & Lasso & 2.74(0.01) & 17.09(0.27) & 9.26(0.12) & 4.8(0.3) & 2.1(0.1)\\
		
		&  & SCAD & 2.74(0.02) & 14.64(0.28) & 7.96(0.1) & \textbf{0.5(0.1)} & 2.9(0)\\
		
		&  & Tobit Lasso & \textbf{0.63(0.01)} & \textbf{1.56(0.09)} & 3.34(0.08) & 6(0.3) & \textbf{1.2(0.1)}\\
		
		&  & Tobit LLA & 0.65(0.01) & 1.67(0.12) & \textbf{2.54(0.09)} & 1(0.1) & 1.9(0.1)\\
		
		&  & SAWCT2018 & 0.78(0.01) & 4.06(0.15) & 4.34(0.08) & 2.8(0.2) & 1.7(0.1)\\
		
		&  & OLS Oracle & 2.56(0.01) & - & - & - & -\\
		
		& \multirow[t]{-7}{*}{\raggedright\arraybackslash 50} & OLS & 4.93(0.07) & - & - & - & -\\
		\cline{2-8}
		&  & Lasso & 3(0.02) & 19.32(0.31) & 10.18(0.19) & 8.4(0.6) & 2.8(0)\\
		
		&  & SCAD & 2.85(0.02) & 14.99(0.23) & 8.03(0.06) & \textbf{0.4(0.1)} & 3(0)\\
		
		&  & Tobit Lasso & 0.82(0.02) & 3.69(0.19) & 5.11(0.11) & 9.7(0.3) & \textbf{2.1(0.1)}\\
		
		&  & Tobit LLA & \textbf{0.73(0.02)} & \textbf{2.46(0.17)} & \textbf{3.14(0.09)} & 2.8(0.3) & 2.4(0.1)\\
		
		&  & SAWCT2018 & 0.89(0.02) & 5.25(0.2) & 5.51(0.11) & 7.2(0.3) & 2.3(0.1)\\
		
		\multirow[t]{-13}{*}{\raggedright\arraybackslash $\frac{1}{2}$} & \multirow[t]{-6}{*}{\raggedright\arraybackslash 500} & OLS Oracle & 2.65(0.01) & - & - & - & -\\
	\end{tabular}
\end{table}

\begin{table}
	\centering
	\caption{Simulation Results with AR1(0.5) Covariates}
	\label{table:ar1 0.5 full results}
	\vspace{0.4cm}
	\begin{tabular}[t]{ll|l|c|cccc}
		q & p & Method & MSE & $\ell_2$ & $\ell_1$ & FP & FN\\
		\hline
		&  & Lasso & 2.25(0.02) & 1.58(0.07) & 3.03(0.07) & 3.8(0.3) & 1.3(0.1)\\
		
		&  & SCAD & 2.13(0.02) & 1.09(0.05) & 2.34(0.06) & 1.8(0.2) & 1.4(0.1)\\
		
		&  & Tobit Lasso & 1.02(0.01) & 0.32(0.01) & 1.7(0.04) & 6(0.3) & \textbf{0.8(0)}\\
		
		&  & Tobit LLA & \textbf{0.97(0.01)} & \textbf{0.26(0.02)} & \textbf{1.05(0.03)} & \textbf{1.1(0.1)} & 1(0.1)\\
		
		&  & SAWCT2018 & 1.03(0.01) & 0.38(0.02) & 1.55(0.03) & 3.9(0.2) & 1(0)\\
		
		&  & OLS Oracle & 2.02(0.01) & - & - & - & -\\
		
		& \multirow[t]{-7}{*}{\raggedright\arraybackslash 50} & OLS & 3.88(0.07) & - & - & - & -\\
		\cline{2-8}
		&  & Lasso & 2.78(0.04) & 2.37(0.09) & 4.18(0.13) & 8.9(0.7) & 1.8(0)\\
		
		&  & SCAD & 2.48(0.03) & 1.33(0.07) & 2.83(0.09) & 4.5(0.4) & 1.7(0.1)\\
		
		&  & Tobit Lasso & 1.25(0.01) & 0.69(0.03) & 2.73(0.05) & 13(0.5) & 1.3(0)\\
		
		&  & Tobit LLA & \textbf{1.07(0.01)} & \textbf{0.36(0.02)} & \textbf{1.35(0.04)} & \textbf{2.4(0.2)} & \textbf{1.2(0)}\\
		
		&  & SAWCT2018 & 1.3(0.02) & 0.82(0.03) & 2.64(0.06) & 9.7(0.6) & 1.4(0.1)\\
		
		\multirow[t]{-13}{*}{\raggedright\arraybackslash $\frac{1}{8}$} & \multirow[t]{-6}{*}{\raggedright\arraybackslash 500} & OLS Oracle & 2.23(0.01) & - & - & - & -\\
		\cline{1-8}
		&  & Lasso & 3.34(0.04) & 4.66(0.14) & 5.08(0.11) & 4.3(0.3) & 1.5(0.1)\\
		
		&  & SCAD & 3.23(0.04) & 3.64(0.13) & 4.19(0.1) & 2.3(0.2) & 1.7(0.1)\\
		
		&  & Tobit Lasso & 0.93(0.01) & 0.4(0.02) & 1.94(0.05) & 6.5(0.3) & \textbf{0.9(0)}\\
		
		&  & Tobit LLA & \textbf{0.88(0.01)} & \textbf{0.33(0.02)} & \textbf{1.19(0.04)} & \textbf{1(0.1)} & 1.2(0)\\
		
		&  & SAWCT2018 & 0.94(0.01) & 0.53(0.02) & 1.87(0.03) & 4.2(0.2) & 1(0)\\
		
		&  & OLS Oracle & 2.98(0.02) & - & - & - & -\\
		
		& \multirow[t]{-7}{*}{\raggedright\arraybackslash 50} & OLS & 5.64(0.1) & - & - & - & -\\
		\cline{2-8}
		&  & Lasso & 3.72(0.05) & 5.39(0.15) & 6(0.13) & 9.1(0.8) & 1.9(0)\\
		
		&  & SCAD & 3.51(0.04) & 3.74(0.14) & 4.85(0.09) & 7.1(0.5) & 2(0.1)\\
		
		&  & Tobit Lasso & 1.15(0.02) & 0.86(0.03) & 2.95(0.05) & 12.6(0.4) & 1.4(0)\\
		
		&  & Tobit LLA & \textbf{0.97(0.01)} & \textbf{0.46(0.02)} & \textbf{1.5(0.04)} & \textbf{2.5(0.2)} & \textbf{1.3(0)}\\
		
		&  & SAWCT2018 & 1.23(0.02) & 1.13(0.04) & 3.18(0.07) & 11.1(0.6) & 1.5(0.1)\\
		
		\multirow[t]{-13}{*}{\raggedright\arraybackslash $\frac{1}{4}$} & \multirow[t]{-6}{*}{\raggedright\arraybackslash 500} & OLS Oracle & 3.03(0.01) & - & - & - & -\\
		\cline{1-8}
		&  & Lasso & 3.88(0.03) & 15.66(0.19) & 8.44(0.07) & 3.4(0.3) & 1.8(0.1)\\
		
		&  & SCAD & 3.87(0.04) & 14.01(0.2) & 7.84(0.09) & 2(0.2) & 2.3(0.1)\\
		
		&  & Tobit Lasso & 0.68(0.01) & 0.67(0.03) & 2.3(0.05) & 5.4(0.2) & \textbf{1.1(0)}\\
		
		&  & Tobit LLA & \textbf{0.63(0.01)} & \textbf{0.53(0.03)} & \textbf{1.52(0.04)} & \textbf{0.9(0.1)} & 1.5(0.1)\\
		
		&  & SAWCT2018 & 0.73(0.01) & 1.44(0.06) & 2.87(0.06) & 3(0.2) & 1.3(0)\\
		
		&  & OLS Oracle & 3.54(0.01) & - & - & - & -\\
		
		& \multirow[t]{-7}{*}{\raggedright\arraybackslash 50} & OLS & 6.78(0.1) & - & - & - & -\\
		\cline{2-8}
		&  & Lasso & 4.08(0.04) & 16.79(0.21) & 9.1(0.11) & 6(0.6) & 2.3(0)\\
		
		&  & SCAD & 4.02(0.03) & 14.17(0.21) & 8.25(0.08) & 3.8(0.5) & 2.7(0.1)\\
		
		&  & Tobit Lasso & 0.82(0.02) & 1.45(0.06) & 3.56(0.08) & 10(0.4) & 1.6(0)\\
		
		&  & Tobit LLA & \textbf{0.64(0.01)} & \textbf{0.7(0.04)} & \textbf{1.83(0.05)} & \textbf{2.3(0.2)} & \textbf{1.5(0.1)}\\
		
		&  & SAWCT2018 & 0.94(0.02) & 2.93(0.1) & 4.57(0.09) & 8.5(0.5) & 1.7(0)\\
		
		\multirow[t]{-13}{*}{\raggedright\arraybackslash $\frac{1}{2}$} & \multirow[t]{-6}{*}{\raggedright\arraybackslash 500} & OLS Oracle & 3.52(0.02) & - & - & - & -\\
	\end{tabular}
\end{table}

\begin{table}
	\centering
	\caption{Simulation Results with AR1(0.8) Covariates}
	\label{table:ar1 0.8 full results}
	\vspace{0.4cm}
	\begin{tabular}[t]{ll|l|c|cccc}
		q & p & Method & MSE & $\ell_2$ & $\ell_1$ & FP & FN\\
		\hline
		&  & Lasso & 2.01(0.02) & 2.01(0.1) & 3.57(0.1) & 4.5(0.3) & 1.6(0.1)\\
		
		&  & SCAD & 1.93(0.02) & 1.65(0.09) & 2.72(0.07) & \textbf{1.1(0.1)} & 2.1(0)\\
		
		&  & Tobit Lasso & 1.04(0.01) & \textbf{0.69(0.03)} & 2.17(0.06) & 4.8(0.2) & \textbf{1.3(0.1)}\\
		
		&  & Tobit LLA & \textbf{0.99(0.01)} & \textbf{0.69(0.05)} & \textbf{1.72(0.05)} & 1.5(0.1) & 1.8(0)\\
		
		&  & SAWCT2018 & 1.06(0.01) & 0.9(0.04) & 2.06(0.04) & 2.6(0.2) & 1.6(0)\\
		
		&  & OLS Oracle & 1.8(0.01) & - & - & - & -\\
		
		& \multirow[t]{-7}{*}{\raggedright\arraybackslash 50} & OLS & 3.42(0.07) & - & - & - & -\\
		\cline{2-8}
		&  & Lasso & 2.53(0.04) & 3.42(0.13) & 4.69(0.09) & 9.1(0.6) & 2.1(0)\\
		
		&  & SCAD & 2.23(0.02) & 2.18(0.1) & 3.3(0.06) & \textbf{3.9(0.4)} & 2.5(0.1)\\
		
		&  & Tobit Lasso & 1.26(0.02) & \textbf{1.51(0.06)} & 3.31(0.05) & 11.9(0.4) & \textbf{2(0)}\\
		
		&  & Tobit LLA & \textbf{1.19(0.02)} & 1.69(0.1) & \textbf{2.84(0.07)} & 4.1(0.4) & 2.2(0.1)\\
		
		&  & SAWCT2018 & 1.3(0.02) & 1.57(0.06) & 3.46(0.07) & 12(0.5) & \textbf{2(0)}\\
		
		\multirow[t]{-13}{*}{\raggedright\arraybackslash $\frac{1}{8}$} & \multirow[t]{-6}{*}{\raggedright\arraybackslash 500} & OLS Oracle & 1.9(0.01) & - & - & - & -\\
		\cline{1-8}
		&  & Lasso & 2.91(0.03) & 5.08(0.18) & 5.45(0.1) & 4(0.3) & 1.8(0)\\
		
		&  & SCAD & 2.84(0.03) & 4.23(0.19) & 4.53(0.13) & 1.6(0.2) & 2.5(0.1)\\
		
		&  & Tobit Lasso & 0.91(0.01) & 0.83(0.04) & 2.36(0.06) & 4.8(0.2) & \textbf{1.4(0.1)}\\
		
		&  & Tobit LLA & \textbf{0.85(0.01)} & \textbf{0.71(0.05)} & \textbf{1.78(0.05)} & \textbf{1.5(0.2)} & 1.8(0)\\
		
		&  & SAWCT2018 & 0.95(0.01) & 1.14(0.05) & 2.34(0.05) & 2.5(0.2) & 1.7(0)\\
		
		&  & OLS Oracle & 2.55(0.01) & - & - & - & -\\
		
		& \multirow[t]{-7}{*}{\raggedright\arraybackslash 50} & OLS & 5.04(0.09) & - & - & - & -\\
		\cline{2-8}
		&  & Lasso & 3.62(0.04) & 7.15(0.2) & 6.6(0.09) & 8.1(0.6) & 2.4(0.1)\\
		
		&  & SCAD & 3.28(0.02) & 4.81(0.2) & 5.07(0.1) & 4.2(0.4) & 2.8(0)\\
		
		&  & Tobit Lasso & 1.2(0.02) & \textbf{1.99(0.09)} & 3.72(0.07) & 10.5(0.4) & 2(0)\\
		
		&  & Tobit LLA & \textbf{1.08(0.02)} & 2.03(0.1) & \textbf{3.04(0.07)} & \textbf{3.3(0.3)} & 2.4(0.1)\\
		
		&  & SAWCT2018 & 1.25(0.03) & 2.14(0.09) & 4.11(0.09) & 12.1(0.5) & \textbf{1.9(0)}\\
		
		\multirow[t]{-13}{*}{\raggedright\arraybackslash $\frac{1}{4}$} & \multirow[t]{-6}{*}{\raggedright\arraybackslash 500} & OLS Oracle & 2.82(0.02) & - & - & - & -\\
		\cline{1-8}
		&  & Lasso & 3.33(0.03) & 15.47(0.23) & 8.59(0.08) & 3.3(0.3) & 2.3(0.1)\\
		
		&  & SCAD & 3.27(0.02) & 13.93(0.24) & 7.81(0.07) & \textbf{1(0.2)} & 2.9(0)\\
		
		&  & Tobit Lasso & 0.67(0.01) & \textbf{1.14(0.06)} & 2.74(0.07) & 4.1(0.2) & \textbf{1.6(0.1)}\\
		
		&  & Tobit LLA & \textbf{0.65(0.01)} & 1.33(0.1) & \textbf{2.31(0.09)} & \textbf{1(0.1)} & 2.1(0.1)\\
		
		&  & SAWCT2018 & 0.75(0.01) & 2.2(0.08) & 3.34(0.06) & 1.8(0.1) & 1.9(0)\\
		
		&  & OLS Oracle & 3.05(0.02) & - & - & - & -\\
		
		& \multirow[t]{-7}{*}{\raggedright\arraybackslash 50} & OLS & 5.93(0.1) & - & - & - & -\\
		\cline{2-8}
		&  & Lasso & 3.71(0.04) & 18.32(0.2) & 9.58(0.16) & 6.1(0.7) & 2.6(0)\\
		
		&  & SCAD & 3.36(0.02) & 15.27(0.22) & 8.38(0.07) & 2.6(0.3) & 3(0)\\
		
		&  & Tobit Lasso & 1.04(0.03) & 3.33(0.15) & 4.51(0.08) & 8(0.4) & \textbf{2.1(0)}\\
		
		&  & Tobit LLA & \textbf{0.86(0.02)} & \textbf{2.98(0.11)} & \textbf{3.53(0.05)} & \textbf{1.7(0.2)} & 2.9(0)\\
		
		&  & SAWCT2018 & 1.15(0.03) & 4.81(0.19) & 5.75(0.1) & 9.7(0.5) & \textbf{2.1(0)}\\
		
		\multirow[t]{-13}{*}{\raggedright\arraybackslash $\frac{1}{2}$} & \multirow[t]{-6}{*}{\raggedright\arraybackslash 500} & OLS Oracle & 3.07(0.02) & - & - & - & -\\
	\end{tabular}
\end{table}

\subsection{Prediction results}
We see a remarkably consistent pattern in our prediction results: in all 30 simulation settings
the penalized Tobit models attain the two lowest average test MSEs, with Tobit LLA delivering the best prediction performance in 29 of 30 settings. 
We see a clear gap in prediction performance separating the three methods which account for censoring (the Tobit lasso, Tobit LLA, and SAWCT2018) from the least squares methods in that the average test MSEs for the OLS methods are, at minimum, nearly double those of the Tobit models and SAWCT2018.
As the proportion of censored observations $q$ increases, the test MSEs for the least-squares models climb upwards while the test MSEs for the Tobit models and SAWCT2018 largely remain stable. In particular, we see that the average test MSEs for the least-squares lasso and SCAD are around five times the average test MSE of Tobit LLA when $q = \frac{1}{2}$.
The message here is clear: failing to account for censoring in the data can come at a steep price in terms of prediction accuracy, especially when the proportion of censored observations is high.

Narrowing our focus to the three models which account for censoring, we see that both of the penalized Tobit models achieve lower average test MSEs than SAWCT2018 in all 30 simulation settings and that Tobit LLA achieves the lowest average test MSE in most cases, often by a comfortable margin. In particular, we see that Tobit LLA gains a larger edge over the Tobit lasso and SAWCT2018 in simulations settings with $p = 500$ relative to those with $p = 50$. 

\subsection{Estimation results}
Turning to estimation performance, we see patterns similar to those that emerged in our prediction comparison. The penalized Tobit models' estimates have the two lowest average $\ell_2$ losses in all 30 simulation settings (the Tobit LLA estimates have the lowest average $\ell_2$ loss overall in 27 of 30 settings). In addition, the Tobit LLA estimates deliver the lowest average $\ell_1$ loss in every simulation setting.

As in the prediction comparison there is a clear gap between the least squares methods and the models which account for censoring, with the latter consistently having far lower $\ell_2$ and $\ell_1$ estimation losses. In many cases, the average $\ell_2$ losses for the Tobit and SAWCT2018 estimates differ from those of the least squares estimates by an order of magnitude.
Additionally, we once again find that the gap in estimation performance between the models that account for censoring and the least squares models grows as the proportion of censored observations increases to $q = \frac{1}{2}$.

Among the models which account for censoring, the penalized Tobit models' estimates consistently achieve lower average $\ell_2$ losses than the SAWCT2018 estimates. Shifting our focus to the $\ell_1$ loss, we see that the Tobit LLA estimates achieve markedly lower average $\ell_1$ losses than the Tobit lasso and SAWCT2018 estimates in every setting. The competition between the Tobit lasso and SAWCT2018, however, is closer, with the Tobit lasso estimates achieving a lower average $\ell_1$ loss than the SAWCT2018 estimates in just 18 of 30 settings.

\subsection{Selection results}
Our variable selection results are somewhat mixed. While the penalized Tobit models and SAWCT2018 consistently deliver lower average false negative counts than the least squares models, the differences are relatively small. At the same time, the SCAD and Tobit LLA models consistently make fewer false positive variable selections than the other models. Beyond that, neither SCAD nor Tobit LLA appears to have a clear edge in making fewer false positive selections, though Tobit LLA has a lower average false positive count in 19 of 30 settings. 

Overall, the penalized Tobit models deliver comparable (if slightly superior) selection performances to the least squares models and SAWCT2018 in this study.
These results further suggest that modelers may prefer to use the Tobit lasso if their goal is to minimize false negative variable selections and Tobit LLA if their goal is to minimize false positive variable selections.

\subsection{Takeaways}
Tobit LLA clearly outperformed competing methods in this simulation study, providing more accurate predictions and parameter estimates than the alternatives. Because it also has stronger theoretical guarantees than the Tobit lasso, we ultimately recommend Tobit LLA for analyzing high-dimensional left-censored data.

\section{HIV Viral Load and Drug Resistance} \label{sec:HIV}
Due to its short replication cycle and high mutation rate, human immunodeficiency virus (HIV) can rapidly develop drug resistance mutations (DRMs) in HIV-infected patients receiving antiretroviral therapy. To counter this, guidelines recommended physicians regularly monitor HIV viral load and, if a patient's treatment regimen is failing to suppress the virus, conduct genotypic testing to check for DRMs so they may update the patient's drug regimen appropriately \citep{Shafer2002}.

There is a substantial literature devoted to identifying DRMs and quantifying the degree of resistance they provide against different antiretroviral treatments \citep{Shafer2006}.
One way to accomplish this is by modeling the relationship between HIV viral load and mutations in the virus's genome. This poses two difficulties: (1) the observed viral load is left-censored because the assays used to measure it cannot detect concentrations below certain thresholds and (2) genome data are inherently high-dimensional.
As we established in our simulation study, it is necessary to use a model which accounts for censoring when analyzing these kind of data.
As such, we will use Tobit LLA and SAWCT2018 to model HIV viral load and identify potential DRMs.

Our data for this example come from the OPTIONS trial by the AIDS Clinical Trials Group \citep{Gandhi2020} and were downloaded from the Stanford HIV Drug Resistance Database \citep{Shafer2006}. The OPTIONS trial study population consisted of 413 HIV-infected individuals receiving protease inhibitor (PI)-based treatment and experiencing virological failure. Each participant was given an optimized antiretroviral regimen based on their viral drug resistance and treatment history. Participants with moderate drug resistance were randomly assigned to either add nucleoside reverse transcriptase inhibitors (NRTIs) to their optimized regimens or omit NRTIs from their optimized regimens.
Participants with highly drug-resistant HIV all received optimized regimens which included NRTIs.

We use Tobit LLA and SAWCT2018 to model HIV viral load 12 weeks after drug regimen assignment as a function of HIV genotypic mutations, current drug regimen, baseline viral load, observation week, and HIV subtype using a sample with $p \gg n$ and a moderate amount of left-censoring.
Our data come from the $n = 407$ participants who returned for their 12-week follow-up evaluations and include $p = 1295$ predictors, most of which are indicators for protease (PR) and reverse transcriptase (RT) gene mutations.
The assays used to measure HIV viral load in the OPTIONS trial had a detection threshold of 50 copies/mL. At their 12-week evaluations, $35.6 \%$ of study participants had viral loads which were at or below this lower limit and, consequently, undetectable. Given this limited information about these censored viral loads, investigators recorded them as falling at the lower limit of 50 copies/mL.
We use $\log_{10}$-HIV viral load as our response, as it is often assumed to be normally distributed \citep{Soret2018}.

We start by comparing the prediction performance of Tobit LLA and SAWCT2018 in terms of the Tobit loss in order to assess overall model fit. 
We randomly split the data into a training set of 326 observations and a test set of 81 observations, using stratified sampling to ensure that the training and test sets have similar proportions of left-censored observations. We repeat this process 50 times. Within each of the 50 training sets, we tune Tobit LLA and SAWCT2018 using 5-fold CV. Table \ref{table:HIV tobit prediction} reports the average Tobit loss across the 50 test sets, with the standard error in parentheses, for each model.

\begin{table}
	\centering
	\caption{Prediction Accuracy on HIV Viral Load Data}
	\label{table:HIV tobit prediction}
	\vspace{0.4cm}
	\begin{tabular}[t]{lc}
		Model & Tobit Loss \\
		\hline
		SAWCT2018 & 2.04 (0.02)\\
		Tobit LLA & \textbf{1.45 (0.01)}\\
	\end{tabular}
\end{table}

Our primary interest is in the predictors selected by the models, as they may include potential DRMs. 
We tune Tobit LLA and SAWCT2018 using 5-fold CV then fit them to the entire dataset.
Tobit LLA selects a sparse model with only three predictors: the RT mutation M184V, baseline viral load, and whether the participant is taking raltegravir (RAL), 
an integrase strand transfer inhibitor (INSTI) included in some of the patients' optimized regimens. 
SAWCT2018, on the other hand, selects 51 mutations (including M184V), baseline viral load, and whether the patient is taking RAL or the protease inhibitor saquinavir.
While it is possible that the 50 other mutations selected by SAWCT2018 include additional DRMs, the superior prediction performance of the sparse Tobit LLA model suggests that M184V is uniquely important for predicting HIV viral load in this population. It seems far more likely that SAWCT2018 is selecting unimportant mutations, reducing its utility as a method for identifying potential DRMs.

The Tobit LLA model provides some interesting insights into HIV drug resistance.
Most importantly, M184V stands out as the sole mutation selected by Tobit LLA.
This selection is supported by other research:
based on an extensive review of the HIV drug resistance literature, the Stanford HIV Drug Resistance Database lists M184V as a major NRTI resistance mutation \citep{Shafer2006}.
It is also notable that Tobit LLA did not select any NRTIs as important predictors. This is consistent with \citeauthor{Gandhi2020}'s (\citeyear{Gandhi2020}) finding that participants who added NRTIs to their regimes did not experience significantly higher rates of virological failure than those who omitted NRTIs from their regimes.

\section{Discussion}\label{sec:discussion}
As high-dimensional data become increasingly common across disciplines, we expect the need for reliable, theoretically-supported techniques for high-dimensional left-censored regression to grow.
The penalized Tobit models we introduce in this paper fill several gaps in the literature for high-dimensional left-censored regression. They are among the first models in this area with theoretical guarantees in the setting where $p \gg n$ and the lasso-initialized two-step LLA estimator for folded-concave penalized Tobit regression is the very first to possess the strong oracle property. In addition, our penalized Tobit models provide the first high-dimensional extensions of the enduringly popular Tobit model.

Our penalized Tobit models also perform well empirically.
In an extensive simulation study, our penalized Tobit models delivered superior prediction and estimation performance relative to least squares models and the best available alternative for high-dimensional left-censored regression.
When applied to real high-dimensional left-censored HIV viral load data, the Tobit LLA estimator delivered more accurate predictions and selected a more parsimonious model than the best available alternative.

\section*{Funding}
This work is supported in part by NSF DMS 1915842 and 2015120.

\bibliographystyle{taj-asa}
\bibliography{tobit}

\begin{thebibliography}{27}
\newcommand{\enquote}[1]{``#1''}
\providecommand{\natexlab}[1]{#1}

\bibitem[{Amemiya(1984)}]{Amemiya1984}
Amemiya, T. (1984), \enquote{{Tobit Models: A Survey},} \emph{Journal of
  Econometrics}, 24(1-2), 3--61.

\bibitem[{Bickel et~al.(2009)Bickel, Ritov, and Tsybakov}]{Bickel2009}
Bickel, P.~J., Ritov, Y., and Tsybakov, A.~B. (2009), \enquote{{Simultaneous
  Analysis of Lasso and Dantzig Selector},} \emph{Annals of Statistics}, 37(4),
  1705--1732.

\bibitem[{Buckley and James(1979)}]{Buckley1979}
Buckley, J. and James, I. (1979), \enquote{{Linear Regression with Censored
  Data},} \emph{Biometrika}, 66(3), 429.

\bibitem[{Fan and Li(2001)}]{Fan2001}
Fan, J. and Li, R. (2001), \enquote{{Variable Selection via Nonconcave
  Penalized Likelihood and its Oracle Properties},} \emph{Journal of the
  American Statistical Association}, 96(456), 1348--1360.

\bibitem[{Fan and Lv(2011)}]{Fan2011}
Fan, J. and Lv, J. (2011), \enquote{{Nonconcave Penalized Likelihood with
  NP-dimensionality},} \emph{IEEE Transactions on Information Theory}, 57(8),
  5467--5484.

\bibitem[{Fan et~al.(2014)Fan, Xue, and Zou}]{Fan2014}
Fan, J., Xue, L., and Zou, H. (2014), \enquote{{Strong Oracle Optimality of
  Folded Concave Penalized Estimation},} \emph{Annals of Statistics}, 42(3),
  819--849.

\bibitem[{Friedman et~al.(2010)Friedman, Hastie, and Tibshirani}]{glmnet}
Friedman, J., Hastie, T., and Tibshirani, R. (2010), \enquote{{Regularization
  Paths for Generalized Linear Models via Coordinate Descent},} \emph{Journal
  of Statistical Software}, 33, 1--22.

\bibitem[{Gandhi et~al.(2020)Gandhi, Tashima, Smeaton, Vu, Ritz, Andrade, Eron,
  Hogg, and Fichtenbaum}]{Gandhi2020}
Gandhi, R.~T., Tashima, K.~T., Smeaton, L.~M., Vu, V., Ritz, J., Andrade, A.,
  Eron, J.~J., Hogg, E., and Fichtenbaum, C.~J. (2020), \enquote{{Long-term
  Outcomes in a Large Randomized Trial of HIV-1 Salvage Therapy: 96-Week
  Results of AIDS Clinical Trials Group A5241 (OPTIONS)},} \emph{Journal of
  Infectious Diseases}, 221, 1407--1415.

\bibitem[{Johnson(2009)}]{Johnson2009}
Johnson, B.~A. (2009), \enquote{{On Lasso for Censored Data},} \emph{Electronic
  Journal of Statistics}, 3, 485--506.

\bibitem[{Kesavan and Vasudevamurthy(1985)}]{Kesavan1985}
Kesavan, S. and Vasudevamurthy, A.~S. (1985), \enquote{{On Some Boundary
  Element Methods for the Heat Equation},} \emph{Numerische Mathematik}, 46(1),
  101–120.

\bibitem[{Li et~al.(2014)Li, Dicker, and Zhao}]{Li2014}
Li, Y., Dicker, L., and Zhao, S.~D. (2014), \enquote{{The Dantzig Selector for
  Censored Linear Regression Models},} \emph{Statistica Sinica}, 24(1),
  251--268.

\bibitem[{M{\"{u}}ller and van~de Geer(2016)}]{Muller2016}
M{\"{u}}ller, P. and van~de Geer, S. (2016), \enquote{{Censored Linear Model in
  High Dimensions: Penalised Linear Regression on High-dimensional Data with
  Left-censored Response Variable},} \emph{TEST}, 25(1), 75--92.

\bibitem[{Olsen(1978)}]{Olsen1978}
Olsen, R.~J. (1978), \enquote{{Note on the Uniqueness of the Maximum Likelihood
  Estimator for the Tobit Model},} \emph{Econometrica}, 46(5), 1211--1215.

\bibitem[{Powell(1984)}]{Powell1984}
Powell, J.~L. (1984), \enquote{{Least Absolute Deviations Estimation for the
  Censored Regression Model},} \emph{Journal of Econometrics}, 25, 303--325.

\bibitem[{Ravikumar et~al.(2011)Ravikumar, Wainwright, Raskutti, and
  Yu}]{Ravikumar2011}
Ravikumar, P., Wainwright, M.~J., Raskutti, G., and Yu, B. (2011),
  \enquote{{High-dimensional Covariance Estimation by Minimizing
  $\ell_1$-penalized Log-determinant Divergence},} \emph{Electronic Journal of
  Statistics}, 5(January 2010), 935--980.

\bibitem[{Sampford(1953)}]{Sampford1953}
Sampford, M.~R. (1953), \enquote{{Some Inequalities on Mill's Ratio and Related
  Functions},} \emph{The Annals of Mathematical Statistics}, 24(1), 130--132.

\bibitem[{Shafer(2002)}]{Shafer2002}
Shafer, R.~W. (2002), \enquote{{Genotypic Testing for Human Immunodeficiency
  Virus Type 1 Drug Resistance},} \emph{Clinical Microbiology Reviews}, 15,
  247--277.

\bibitem[{Shafer(2006)}]{Shafer2006}
---{}--- (2006), \enquote{{Rationale and Uses of a Public HIV Drug-resistance
  Database},} \emph{Journal of Infectious Diseases}, 194, 51--58.

\bibitem[{Soret et~al.(2018)Soret, Avalos, Wittkop, Commenges, and
  Thi{\'{e}}baut}]{Soret2018}
Soret, P., Avalos, M., Wittkop, L., Commenges, D., and Thi{\'{e}}baut, R.
  (2018), \enquote{{Lasso Regularization for Left-censored Gaussian Outcome and
  High-dimensional Predictors},} \emph{BMC Medical Research Methodology},
  18(1), 1--13.

\bibitem[{Tibshirani(1996)}]{Tibshirani1996}
Tibshirani, R. (1996), \enquote{{Regression Shrinkage and Selection via the
  Lasso},} \emph{Journal of the Royal Statistical Society, Series B}, 58(1),
  267--288.

\bibitem[{Tobin(1958)}]{Tobin1958}
Tobin, J. (1958), \enquote{{Estimation of Relationships for Limited Dependent
  Variables},} \emph{Econometrica}, 26(1), 24--36.

\bibitem[{Yang and Zou(2013)}]{Yang2013}
Yang, Y. and Zou, H. (2013), \enquote{{An Efficient Algorithm for Computing the
  HHSVM and its Generalizations},} \emph{Journal of Computational and Graphical
  Statistics}, 22(2), 396--415.

\bibitem[{Zhang(2010)}]{Zhang2010}
Zhang, C.~H. (2010), \enquote{{Nearly Unbiased Variable Selection Under Minimax
  Concave Penalty},} \emph{Annals of Statistics}, 38(2), 894--942.

\bibitem[{Zhao and Yu(2006)}]{Bin2006}
Zhao, P. and Yu, B. (2006), \enquote{{On Model Selection Consistency of
  Lasso},} \emph{Journal of Machine Learning Research}, 7, 2541--2563.

\bibitem[{Zhou and Liu(2016)}]{Zhou2016}
Zhou, X. and Liu, G. (2016), \enquote{{LAD-Lasso Variable Selection for Doubly
  Censored Median Regression Models},} \emph{Communications in Statistics -
  Theory and Methods}, 45, 3658--3667.

\bibitem[{Zou(2006)}]{Zou2006}
Zou, H. (2006), \enquote{{The Adaptive Lasso and its Oracle Properties},}
  \emph{Journal of the American Statistical Association}, 101(476), 1418--1429.

\bibitem[{Zou and Li(2008)}]{Zou2008}
Zou, H. and Li, R. (2008), \enquote{{One-step Sparse Estimates in Nonconcave
  Penalized Likelihood Models},} \emph{Annals of Statistics}, 36(4),
  1509--1533.

\end{thebibliography}

\appendix

\renewcommand{\thethmsupp}{S.\arabic{thmsupp}}
\renewcommand{\thelemma}{S.\arabic{lemma}}
\renewcommand{\thedefn}{S.\arabic{defn}}

\section{Intermediate Theoretical Results}\label{sec:intermediate results}
In this section, we present intermediate theoretical results for the Tobit lasso and Tobit with a folded concave penalty. We start with a more general finite sample probability bound for the Tobit lasso.

\begin{thmsupp}\label{thm:a0_prob}
	Suppose that $Y_i^* = \xvec_i' \beta^* + \epsilon_i$ where $\epsilon_i \iid N(0, {\sigma^*}^2)$ and define $Y_i = Y_i^* \ind_{Y_i^* > 0}$ for $i = 1, \ldots, n$.
	Let $\hat{\Theta}^{\lasso}$ denote the solution to the lasso-penalized Tobit model with penalty parameter $\lambda_{\lasso}$.
	If assumption \eqref{RE_condition} holds, then
	\begin{equation}
		\norm{ \hat{\Theta}^{\lasso}  - \Theta^* }_2 \leq \frac{3\sqrt{s+2}\lambda_{\lasso}}{\kappa}
	\end{equation} 
	with probability at least
	{\footnotesize \begin{align*}
			1 
			& - 2(p+1)\exp\left( - \frac{n \lambda_{\lasso}^2 }{8M_1} \right) - 2 \exp \left( - \frac{n}{2} \min \left\{ \frac{\lambda_{\lasso} \gamma^*}{8}, \frac{\lambda_{\lasso}^2 {\gamma^*}^2 }{4 M_2} \right\} \right) \\
			& - 2(p+1)^2\exp\left(-\frac{2n \left(\frac{\kappa}{32(s+2)}\right)^2}{ \max_{j,k\in\{0, \ldots, p\}} n^{-1}\sum_{i = 1}^nx_{ij}^2x_{ik}^2} \right) \\
			& - 4(p+1)\exp\left(-\frac{2n \left(\frac{\kappa}{32(s+2)}\right)^2{\gamma^*}^2}{\max_{j\in\{0, \ldots, p\}}n^{-1}\sum_{i = 1}^n x_{ij}^2(2 + \xvec_i'\dvec^* + g(-\xvec_i'\dvec^*))^2} \right)\\
			& - 2 \exp\left(-\frac{n}{2} \min \left\{ \frac{\left(\frac{\kappa}{32(s+2)}\right){\gamma^*}^2}{8}, \frac{\left(\frac{\kappa}{32(s+2)}\right)^2 {\gamma^*}^4}{\frac{65}{2} + n^{-1}\sum_{i = 1}^n \frac{1}{2} (\xvec_i'\dvec^*)^2(2 + \xvec_i'\dvec^* + g(- \xvec_i'\dvec^*))^2 + 8 (\xvec_i'\dvec^*)^2} \right\} \right)  \text{.}
	\end{align*}}
\end{thmsupp}

Our next two theorems adapt the main results of~\cite{Fan2014} for the penalized Tobit model with a folded concave penalty. The first theorem provides conditions under which the LLA algorithm finds the oracle estimator after a single iteration.

\begin{thmsupp}\label{thm:one LLA step finds oracle}
	Suppose that $\dvec^*$ satisfies \emph{\ref{min signal strength}}.
	Consider the Tobit loss with a folded concave penalty $P_{\lambda}(|t|) $ satisfying conditions \textup{(i) - (iv)}. Let $a_0 = \min \{1, a_2\}$. Under the events
	$$ \mathcal{E}_0  = \{ ||\hat{\dvec}^{\initial}_{(-0)} - \dvec^*_{(-0)} ||_{\max} \leq a_0 \lambda \} \text{ \; and \; } \mathcal{E}_1 = \{|| \nabla_{\mathcal{A}'^c} \ell_n (\hat{\Theta}^{\oracle}) ||_{\max} < a_1 \lambda \} $$
	the LLA algorithm initialized by $\hat{\Theta}^{\initial}$ finds $\hat{\Theta}^{\oracle}$ after one iteration.
\end{thmsupp}
\noindent The next theorem identifies an additional condition which guarantees that the LLA algorithm will return to the oracle estimator in its second iteration.
\begin{thmsupp}\label{thm:two LLA steps converge to oracle}
	Consider the Tobit loss with a folded concave penalty $P_{\lambda}(|t|) $ satisfying conditions \textup{(i) - (iv)}. Under $\mathcal{E}_1$ and the additional event
	$$\mathcal{E}_2 = \{ || \hat{\dvec}_{\mathcal{A}}^{\oracle} ||_{\min} > a\lambda\} $$
	if $ \hat{\Theta}^{\oracle}$ is obtained, the LLA algorithm will find $\hat{\Theta}^{\oracle}$ again in the next iteration, that is, it converges to $\hat{\Theta}^{\oracle}$ in the next iteration and the oracle estimator is a fixed point.
\end{thmsupp}

As an immediate corollary to Theorems \ref{thm:one LLA step finds oracle} and \ref{thm:two LLA steps converge to oracle}, we see that the probability that the LLA algorithm finds the oracle estimator in one step is at least $1 - P(\mathcal{E}_0^c) - P(\mathcal{E}_1^c)$ and the probability that the LLA algorithm converges to the oracle estimator in two steps is at least $1 - P(\mathcal{E}_0^c) - P((\mathcal{E}_1 \cap \mathcal{E}_2)^c)$. In the next theorem, we provide finite-sample bounds for $P(\mathcal{E}_1^c)$ and $P((\mathcal{E}_1 \cap \mathcal{E}_2)^c)$.

\begin{thmsupp}\label{thm:a1_a2_prob}
	Suppose that $Y_i^* = \xvec_i' \beta^* + \epsilon_i$ where $\epsilon_i \iid N(0, {\sigma^*}^2)$ and define $Y_i = Y_i^* \ind_{Y_i^* > 0}$ for $i = 1, \ldots, n$. If assumptions \emph{\ref{min signal strength}} and \emph{\ref{hessian positive definite}} hold, 
	then the LLA algorithm initialized by $\hat{\Theta}^{\initial}$ finds $\hat{\Theta}^{\oracle}$ and one iteration with probability at least $1 - P(\mathcal{E}_0^c) - P(\mathcal{E}_1^c)$ and converges to $\hat{\Theta}^{\oracle}$ after two iterations with probability at least $1 - P(\mathcal{E}_0^c) - P((\mathcal{E}_1 \cap \mathcal{E}_2)^c)$, where
	{\footnotesize
		\begin{align*}
			P( \mathcal{E}_1^c )
			& \leq 2(s+1) \exp \left( - \frac{ n C_2^2(s, \lambda) }{ 2 M_1 } \right) 
			+ 2 \exp\left( - \frac{n}{2} \min \left\{ \frac{C_2(s, \lambda) \gamma^*}{4},  \frac{ C_2^2(s, \lambda) {\gamma^*}^2  }{ M_2 } \right \} \right) \\ 
			& + 2(p-s)\exp \left( - \frac{n a_1^2 \lambda^2}{8 M_1 } \right) 
			+ 2 (s + 1)^2 \exp \left( - \frac{2nC_{Q_2}^2(s)}{ \max_{j,k\in\mathcal{A} \cup \{0\}} n^{-1} \sum_{i = 1}^{n} x_{ij}^2x_{ik}^2 } \right) \notag \\ 
			& + 4(s+1) \exp \left( - \frac{2nC_{Q_2}^2(s){\gamma^*}^2}{ \max_{j\in\mathcal{A} \cup \{0\} } n^{-1}\sum_{i = 1}^{n} x_{ij}^2 (2 + \xvec_i'\dvec^* + g(- \xvec_i'\dvec^*) )^2 } \right) \notag \\ 
			& + 2 \exp\left(-\frac{n}{2} \min \left\{ \frac{C_{Q_2}(s){\gamma^*}^2}{8}, \frac{C_{Q_2}^2(s) {\gamma^*}^4}{34 + n^{-1}\sum_{i = 1}^n \frac{1}{2}(\xvec_i'\dvec^*)^2 (2 + \xvec_i'\dvec^* + g(- \xvec_i'\dvec^*))^2 + 8 (\xvec_i'\dvec^*)^2} \right\} \right)\\
			& +  2(s+1)(p-s) \exp \left( - \frac{2 n C_{Q_3}^2(s)}{ \max_{j \in (\mathcal{A} \cup \{0\})^c, k\in\mathcal{A} \cup \{0\}} n^{-1}\sum_{i = 1}^{n} x_{ij}^2x_{ik}^2 } \right) \notag \\
			& + 2(p - s) \exp \left( - \frac{2nC_{Q_3}^2(s){\gamma^*}^2}{ \max_{j\in(\mathcal{A} \cup \{0\})^c } n^{-1}\sum_{i = 1}^{n} x_{ij}^2 (2 + \xvec_i'\dvec^* + g(- \xvec_i'\dvec^*) )^2 } \right)
	\end{align*}}
	and 
	{\footnotesize 
		\begin{align*}
			P((\mathcal{E}_1 & \cap  \mathcal{E}_2)^c)\\
			& \leq 2(s+1) \exp \left( - \frac{ n C_3^2(s, \lambda) }{ 2 M_1 } \right) 
			+ 2 \exp\left( - \frac{n}{2} \min \left\{ \frac{C_3(s, \lambda) \gamma^*}{4},  \frac{ C_3^2(s, \lambda) {\gamma^*}^2  }{ M_2 } \right \} \right) \\ 
			& + 2(p-s)\exp \left( - \frac{n a_1^2 \lambda^2}{8 M_1 } \right) 
			+  2 (s + 1)^2 \exp \left( - \frac{2nC_{Q_2}^2(s)}{ \max_{j,k\in\mathcal{A} \cup \{0\}} n^{-1} \sum_{i = 1}^{n} x_{ij}^2x_{ik}^2 } \right) \notag \\ 
			& + 4(s+1) \exp \left( - \frac{2nC_{Q_2}^2(s){\gamma^*}^2}{ \max_{j\in\mathcal{A} \cup \{0\} } n^{-1}\sum_{i = 1}^{n} x_{ij}^2 (2 + \xvec_i'\dvec^* + g(- \xvec_i'\dvec^*) )^2 } \right) \notag \\ 
			& + 2 \exp\left(-\frac{n}{2} \min \left\{ \frac{C_{Q_2}(s){\gamma^*}^2}{8}, \frac{C_{Q_2}^2(s) {\gamma^*}^4}{34 + n^{-1}\sum_{i = 1}^n \frac{1}{2}(\xvec_i'\dvec^*)^2 (2 + \xvec_i'\dvec^* + g(- \xvec_i'\dvec^*))^2 + 8 (\xvec_i'\dvec^*)^2} \right\} \right)\\
			& + 2(s+1)(p-s) \exp \left( - \frac{2 n C_{Q_3}^2(s)}{ \max_{j \in (\mathcal{A} \cup \{0\})^c, k\in\mathcal{A} \cup \{0\}} n^{-1}\sum_{i = 1}^{n} x_{ij}^2x_{ik}^2 } \right) \notag \\
			& + 2(p - s) \exp \left( - \frac{2nC_{Q_3}^2(s){\gamma^*}^2}{ \max_{j\in(\mathcal{A} \cup \{0\})^c } n^{-1}\sum_{i = 1}^{n} x_{ij}^2 (2 + \xvec_i'\dvec^* + g(- \xvec_i'\dvec^*) )^2 } \right)
	\end{align*}}
	{\small
		where $C_1(s) = \min \left\{ \frac{1}{38.7 Q_1 Q_2^2 (s+1)}, \frac{ {\gamma^*}^3 }{144 Q_2^2}, \frac{\gamma^*}{6 Q_2} \right \}$,
		$C_2(s,\lambda) = \min \left \{ C_1(s), \frac{a_1 \lambda}{9Q_3 +2} \right \} $,\\ $C_3(s,\lambda) = \min \left \{ C_1(s), \frac{a_1 \lambda}{9Q_3 +2},\frac{1}{2 Q_2} ( \norm{ \dvec_{\mathcal{A}}^* }_{\min} - a\lambda  )  \right \} $, 
		$C_{Q_2}(s) =  \frac{1}{s+2}\min\left\{ \frac{1}{6K_1K_2}, \frac{1}{6K_1^3K_2}, \frac{Q_2}{4 K_2}  \right\}$, and
		$C_{Q_3}(s) = \frac{1}{2(s+2)}
		\norm{
			\E\left[
			\frac{1}{n}
			[\nabla^2 \log L_n(\Theta^*)]_{\mathcal{A}'^c, \mathcal{A}'}
			\right]
		}_{\infty}$.}
\end{thmsupp}

\section{Technical Proofs}\label{sec:proofs}
\subsection{Quadratic majorization}
\begin{proof}[Proof of Theorem \ref{thm:GCD majorization}]
	\label{proof:thm:majorization}
	Let $j \in \{ 0, \ldots, p \}$. Define $L_{ij}(t) = d_i \frac{1}{2} (\tilde{\gamma} y_i - \xvec_{i,(-j)}'\tilde{\dvec}_{(-j)} - x_{ij}t )^2  - (1-d_i) \log \Phi(-\xvec_{i,(-j)}'\tilde{\dvec}_{(-j)} - x_{ij}t )$ for $ i = 1, \ldots, n$, so that $\ell_n(\delta_j| \tilde{\dvec}, \tilde{\gamma}) = \frac{1}{n}\sum_{i = 1}^n L_{ij}(\delta_j)$.
	
	Let $i \in \{ 1, \ldots, n \}$. Recall that $g(s) = \phi(s)/\Phi(s)$. 
	We find that
	\begin{align*}
		L_{ij}'(t)
		& =  - d_i (\tilde{\gamma} y_i - \xvec_{i,(-j)}'\tilde{\dvec}_{(-j)} - x_{ij}t )x_{ij}  + (1-d_i) \frac{ \phi(-\xvec_{i,(-j)}'\tilde{\dvec}_{(-j)} - x_{ij}t ) }{ \Phi(-\xvec_{i,(-j)}'\tilde{\dvec}_{(-j)} - x_{ij}t ) } x_{ij}\\
		& =  - d_i (\tilde{\gamma} y_i - \xvec_{i,(-j)}'\tilde{\dvec}_{(-j)} - x_{ij}t )x_{ij}  + (1-d_i) g(-\xvec_{i,(-j)}'\tilde{\dvec}_{(-j)} - x_{ij}t )  x_{ij} \text{.}
	\end{align*}
	We see that
	\begin{equation*}
		g'(s) = \frac{ - s\phi(s)\Phi(s) - \phi^2(s) }{\Phi^2(s)} = - \frac{\phi(s)}{\Phi(s)} \left[s + \frac{\phi(s)}{\Phi(s)} \right] = -g(s)\left[ s + g(s) \right]\text{.}
	\end{equation*}
	Let $s(t) = -\xvec_{i,(-j)}'\tilde{\dvec}_{(-j)} - x_{ij}t $. We find
	\begin{align*}
		L_{ij}''(t) 
		& = d_i x_{ij}^2 + (1-d_i)g'(s(t))s'(t) x_{ij} \\
		& = d_i x_{ij}^2 + (1-d_i) g(s(t))[s(t) + g(s(t))] x_{ij}^2 \text{.}
	\end{align*}
	\cite{Sampford1953} shows that
	$ 0 < g(s)[s + g(s)] < 1$ 
	for all $s \in \reals$.
	As such,
	\begin{equation*}
		L_{ij}''(t) = d_i x_{ij}^2 + (1-d_i) g(s(t))[s(t) + g(s(t))]x_{ij}^2 \leq x_{ij}^2
	\end{equation*}
	for all $t \in \reals$.
	Lemma 1 in \cite{Yang2013} establishes that if $f:\reals \to \reals$ is twice differentiable and there exists $M^*$ such that $f''(t) \leq M^*$ for all $t$, then $f$ satisfies the quadratic majorization condition with $M = M^*$.
	As such, $L_{ij}(t)$ satisfies the quadratic majorization condition with $M_{ij} =  x_{ij}^2$.
	
	As a consequence, we see that for $j = 0, 1, \ldots, p$,
	\begin{align*}
		\ell_n(\delta_j| \tilde{\dvec}, \tilde{\gamma})  
		& =  \frac{1}{n}\sum_{i = 1}^n L_{ij}(\delta_j) \\
		& \leq  \frac{1}{n} \sum_{i=1}^n L_{ij}(\tilde{\delta}_j) + L_{ij}'(\tilde{\delta}_j) (\delta_j - \tilde{\delta}_j) + \frac{1}{2} x_{ij}^2  (\delta_j - \tilde{\delta}_j)^2\\
		& =  \frac{1}{n} \sum_{i=1}^n L_{ij}(\tilde{\delta}_j) + \frac{1}{n} \sum_{i=1}^n  L_{ij}'(\tilde{\delta}_j) (\delta_j - \tilde{\delta}_j) + \frac{1}{2n} \sum_{i=1}^{n} x_{ij}^2 (\delta_j - \tilde{\delta}_j)^2 \\
		& = \ell_n(\tilde{\delta}_j| \tilde{\dvec}, \tilde{\gamma}) + \ell_n'(\tilde{\delta}_j| \tilde{\dvec}, \tilde{\gamma}) (\delta_j - \tilde{\delta}_j) + \frac{1}{2} \left(\frac{1}{n} \sum_{i=1}^{n} x_{ij}^2\right) (\delta_j - \tilde{\delta}_j)^2 \text{.} 
	\end{align*}
	That is,  $\ell_n(\delta_j| \tilde{\dvec}, \tilde{\gamma})$ satisfies the quadratic majorization condition with $M = \frac{1}{n} \sum_{i=1}^{n} x_{ij}^2$.
\end{proof}

\subsection{Convergence of the LLA algorithm}
We present our extensions of the results in \cite{Fan2014} early in this appendix because they are relatively straightforward and can stand on their own, unlike the considerably more involved proofs of Theorems \ref{thm:a0_prob} and \ref{thm:a1_a2_prob} which we present in the next section.
\begin{proof}[Proof of Theorem \ref{thm:one LLA step finds oracle}]
	We only need to make a single change to the proof of Theorem 1 in~\cite{Fan2014} to adapt it for the Tobit loss. We modify inequality (11) in their proof to account for $\delta_0$ and $\gamma$ as follows: since $\nabla_j \ell_n (\hat{\Theta}^{\oracle}) = 0 \text{\;} \forall j \in \mathcal{A}'$, we see that
	\begin{align*}
		\ell_n(\Theta) 
		& \geq \ell_n(\hat{\Theta}^{\oracle}) + \sum_{j = 0}^{p+1} \nabla_j \ell_n (\hat{\Theta}^{\oracle})(\theta_j - \hat{\theta}_j^{\oracle}) \\
		& = \ell_n(\hat{\Theta}^{\oracle}) + \sum_{j \in \mathcal{A}'^c} \nabla_j \ell_n (\hat{\Theta}^{\oracle})(\theta_j - \hat{\theta}_j^{\oracle}) \text{.}
	\end{align*}
	With this small modification, the remainder of the original proof holds.
\end{proof}

\begin{proof}[Proof of Theorem \ref{thm:two LLA steps converge to oracle}]
	The simple modification we made for the proof of Theorem 1 is also sufficient to adapt the proof of Theorem 2 in~\cite{Fan2014} for the Tobit loss.
\end{proof}

\subsection{Tobit lasso and folded concave Tobit}
The probability bound results for the Tobit lasso and Tobit with a folded concave penalty require far more involved proofs than the results we've covered so far.
Before we prove these results we must establish some properties of the Tobit log-likelihood for reference.
Recall that the Tobit log-likelihood is given by
$$\log L_n(\dvec, \gamma)  =  \sum_{i=1}^{n} d_i\left[ \log(\gamma) -\frac{1}{2} (\gamma y_i - \xvec_i' \dvec)^2 \right] + (1-d_i) \log\left( \Phi\left(- \xvec_i'\dvec \right) \right) $$
and that $g(s) = \phi(s)/\Phi(s)$. 
The gradient of the log-likelihood can be expressed as
$$\nabla \log L_n(\dvec, \gamma) = 
\begin{bmatrix}
	\xmat_1' (\gamma \yvec_1 - \xmat_1 \dvec) - \xmat_0' \mathbf{g}(\dvec) \\
	n_1 \gamma^{-1} - \yvec_1' (\gamma \yvec_1 - \xmat_1 \dvec)
\end{bmatrix}
$$
where $\mathbf{g}(\dvec)= (g(-\xvec_1'\dvec), \ldots, g(-\xvec_{n_0}'\dvec) )'$. Define $h(s) = g(s)(s + g(s)) $. We showed in the proof of Theorem \ref{thm:GCD majorization} that
$g'(s) = -g(s)(s + g(s)) = -h(s)$. We can write the Hessian of the Tobit log-likelihood as
\begin{equation}
	\nabla^2 \log L_n(\dvec, \gamma) = 
	- \begin{bmatrix}
		\xmat'\\
		- \yvec'
	\end{bmatrix}
	\begin{bmatrix}
		\mathbf{D}(\dvec) & \mathbf{0}\\
		\mathbf{0} & \mathbf{I}
	\end{bmatrix}
	\begin{bmatrix}
		\xmat & -\yvec
	\end{bmatrix}
	- \begin{bmatrix}
		\mathbf{0} & \mathbf{0}\\
		\mathbf{0} & n_1 \gamma^{-2}
	\end{bmatrix}
	\label{logL_Hessian}
\end{equation}
where $\mathbf{D}(\dvec)$ is a $n_0 \times n_0$ diagonal matrix with $[\mathbf{D}(\dvec)]_{ii} = h(-\xvec_i'\dvec)$.

\subsubsection{Probability bounding lemmas}
We rely on a bevy of lemmas to compute the probability bounds in Theorems \ref{thm:a0_prob} and \ref{thm:a1_a2_prob}. We present all of these lemmas together in this section as their proofs rely on similar arguments and so that readers can easily find them.

We leverage the properties of sub-Gaussian and sub-exponential random variables, which we define as follows:
\begin{defn}[Sub-Gaussian]
	We say that a random variable $X$ with $\E X = \mu$ is \emph{sub-Gaussian} with variance proxy $\sigma^2 \geq 0$ if its moment generating function satisfies
	$$ \E \left[ e^{t(X - \mu)}\right] \leq e^{\frac{\sigma^2 t^2}{2}} \text{ \; \; } \forall t \in \reals \text{.}$$
	We denote this by $X \sim \subG(\sigma^2)$.
\end{defn}

\begin{defn}[Sub-Exponential]
	We say that a random variable $X$ with $\E X = \mu$ is \emph{sub-exponential} with parameters $\sigma^2 \geq 0$, $\alpha \geq 0$ if its moment generating function satisfies
	$$ \E \left[ e^{t(X - \mu)}\right] \leq e^{\frac{\sigma^2 t^2}{2}} \text{ \; \; } \forall |t| < \frac{1}{\alpha} \text{.}$$
	We denote this by $X \sim \subExp(\sigma^2, \alpha)$.
\end{defn}

\begin{lemma}\label{lemma:subGaussian_score}
	Suppose that $Y^* = \xvec' \beta + \epsilon$ where $\epsilon \sim N(0, \sigma^2)$ and define $Y = Y^* \ind_{Y^* > 0}$. Let $\log L_1(\dvec, \gamma)$ denote the Tobit log-likelihood for a single observation. Then for $j \in \{0, \ldots, p \}$,
	$$ \E[e^{t \nabla_j \log L_1(\delta, \gamma) }] \leq e^{t^2  x_{j}^2/2} \text{\; \;} \forall t \in \reals \text{.} $$
	That is, $\nabla_j \log L_1(\delta, \gamma) \sim \subG(x_{j}^2)$.
\end{lemma}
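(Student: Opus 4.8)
The plan is to evaluate the single-observation score at the data-generating parameters $\dvec = \bvec/\sigma$, $\gamma = 1/\sigma$ — where the score has mean zero, so that the uncentered bound as stated is exactly the $\subG$ condition — and then to reduce the claim to a one-dimensional sub-Gaussianity that I verify by a direct moment generating function computation. From the gradient formula recorded just before the lemma, the $j$th coordinate of the score for a single observation is
$$\nabla_j \log L_1(\dvec, \gamma) = d\, x_j (\gamma y - \xvec'\dvec) - (1-d)\, x_j\, g(-\xvec'\dvec),$$
where $d = \ind_{y > 0}$. Writing $z = \epsilon/\sigma \sim N(0,1)$ and $\mu = \xvec'\dvec$, on the uncensored event $\{y > 0\} = \{z > -\mu\}$ we have $\gamma y - \xvec'\dvec = z$, while on the censored event $y = 0$ the term $d(\cdot)$ drops out. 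Hence the score factors as $\nabla_j \log L_1 = x_j W$ with
$$W = z\,\ind_{z > -\mu} - g(-\mu)\,\ind_{z \le -\mu}.$$
It therefore suffices (conditionally on $\xvec$, treating $\mu$ and $x_j$ as fixed) to prove $W \sim \subG(1)$, i.e. $\E[e^{sW}] \le e^{s^2/2}$ for all $s \in \reals$; setting $s = t x_j$ then yields the stated bound.

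First I would confirm $\E[W] = 0$: using $\int_{-\mu}^{\infty} z\,\phi(z)\,dz = \phi(-\mu)$ together with $g(-\mu)\Phi(-\mu) = \phi(-\mu)$, the two contributions cancel, which is consistent with the score being mean-zero at the truth. Next, the completing-the-square identity $e^{sz}\phi(z) = e^{s^2/2}\phi(z-s)$ gives $\E[e^{sz}\ind_{z > -\mu}] = e^{s^2/2}\Phi(\mu+s)$, so that
$$\E[e^{sW}] = e^{s^2/2}\,\Phi(\mu+s) + e^{-s\, g(-\mu)}\,\Phi(-\mu).$$
Dividing through and taking logarithms, the target bound $\E[e^{sW}] \le e^{s^2/2}$ is equivalent to $G(s) \ge 0$, where, writing $a = -\mu$,
$$G(s) = \frac{s^2}{2} + \log\Phi(a - s) - \log\Phi(a) + s\, g(a).$$

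The last step is where the structure of the Tobit likelihood enters, and it is the crux of the argument. I would establish $G(s) \ge 0$ by a convexity argument anchored at $s = 0$. A direct computation gives $G(0) = 0$; since $G'(s) = s - g(a - s) + g(a)$ we also get $G'(0) = 0$; and differentiating once more,
$$G''(s) = 1 + g'(a - s) = 1 - h(a - s),$$
using the identity $g'(\cdot) = -h(\cdot)$ with $h(s) = g(s)(s + g(s))$ already derived in the proof of Theorem~\ref{thm:GCD majorization}. Sampford's bound $0 < h(u) < 1$ for all $u \in \reals$ then forces $G''(s) > 0$, so $G$ is strictly convex; combined with $G(0) = G'(0) = 0$ this gives $G(s) \ge 0$ for every $s$, which completes the proof.

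The main obstacle is not any single calculation but keeping the reduction clean: recognizing that evaluating at the true parameters is what makes the score mean-zero and collapses the two-parameter score into the single truncated-Gaussian variable $W$, and then seeing that the desired exponential-moment bound is precisely a convexity statement whose second-derivative test is supplied by the Sampford inequality $0 < h < 1$. Once the reduction to $G(s) \ge 0$ is in place, the remaining steps are routine, and the same Sampford bound that powered the quadratic majorization in Theorem~\ref{thm:GCD majorization} reappears here as the key analytic input.
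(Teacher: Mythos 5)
Your proposal is correct and follows essentially the same route as the paper's proof: the same score expression and MGF computation, the same reduction to the inequality $e^{-sg(a)}\Phi(a) \leq e^{s^{2}/2}\Phi(a-s)$, and the same argument that the function $G$ (the paper's $f$) satisfies $G(0)=G'(0)=0$ with $G''(s) = 1 + g'(a-s) > 0$ via the Sampford bound $0 < h < 1$. The only cosmetic differences are that you factor out $x_j$ to work with the scalar variable $W$ and make explicit that the score is evaluated at the true parameters, both of which are implicit in the paper's treatment.
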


\begin{proof}
	Let $j \in \{0, \ldots, p \}$. We can express the derivative of $\log L_1(\dvec, \gamma$) with respect to $\delta_j$ as
	\begin{equation*}
		\nabla_j \log L_1(\dvec, \gamma) = \left[ (\gamma Y^* - \xvec' \dvec) \ind_{Y^* > 0} - g(-\xvec'\dvec) \ind_{Y^* \leq 0} \right]x_{j} \text{.}
	\end{equation*}
	It is straightforward to show that $\E[\nabla_j \log L_1(\dvec, \gamma)] = 0$.
	Let $t \in \reals$. One can show that
	\begin{equation}
		\E [ e^{t \nabla_j \log L_1(\delta, \gamma) } ]
		= e^{ -g(-\xvec'\dvec)tx_j }\Phi(-\xvec'\dvec) 
		+ e^{t^2x_j^2/2}(1 - \Phi(-\xvec'\dvec - t x_j ) ) \text{.}
	\end{equation}
	Let $b,s \in \reals$.
	We see that $e^{-g(b)s}\Phi(b) \leq e^{s^2/2}\Phi(b-s)$ if and only if 
	$$-g(b)s + \log(\Phi(b))  = \log \left( e^{-g(b)s}\Phi(b) \right) \leq \log \left( e^{s^2/2}\Phi(b-s) \right) = s^2/2 + \log( \Phi(b-s) ) \text{.}$$
	Define
	$f(s) = s^2/2 + \log(\Phi(b-s)) + g(b)s - \log(\Phi(b)) $. 
	We see that 
	$f'(s) = s - g(b-s) + g(b)$
	and find $f'(0) = 0$ and $f(0) = 0$.
	Moreover, $f''(s) = 1 + g'(b-s) > 0$ for all $s \in \reals$, as $-1 < g'(b-s) < 0$.
	As such, $f(s)$ has a unique minimum at $s = 0$, meaning that $f(s) \geq 0$ for all $s \in \reals$.
	This implies that
	$ e^{-g(b)s}\Phi(b) \leq e^{s^2/2}\Phi(b-s) \text{\; \;} \forall b,s \in \reals \text{,}$
	so
	$$e^{-g(-\xvec'\dvec)t x_j}\Phi(-\xvec'\dvec) \leq e^{t^2 x_j^2/2}\Phi(-\xvec'\dvec-t x_j)$$
	and
	$$ \E[e ^{t \nabla_j \log L_1(\delta, \gamma)}] \leq e^{t^2  x_{j}^2/2} \text{\; \;} \forall t \in \reals \text{.} $$
	
\end{proof}

\begin{lemma}\label{lemma:subExponential_score_gamma}
	Suppose that $Y^* = \xvec' \beta + \epsilon$ where $\epsilon \sim N(0, \sigma^2)$ and define $Y = Y^* \ind_{Y^* > 0}$. Let $\log L_1(\dvec, \gamma)$ denote the Tobit log-likelihood for a single observation. Then
	$$ \E[e^{t \gamma \nabla_{\gamma} \log L_1(\dvec, \gamma) }] \leq e^{ \frac{(16 + 4 (\xvec ' \dvec)^2) t^2 }{2}} \text{\; \;} \forall |t| < \frac{1}{4}\text{.} $$
	That is, $\gamma \nabla_{\gamma} \log L_1(\dvec, \gamma) \sim \subExp( 16 + 4 (\xvec ' \dvec)^2  , 4)$.
\end{lemma}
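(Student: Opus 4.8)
The plan is to reduce $\gamma\,\nabla_{\gamma}\log L_1(\dvec,\gamma)$ to a simple functional of a single standard normal variable, then control its moment generating function directly, in the spirit of the computation in Lemma~\ref{lemma:subGaussian_score}. From the gradient formula for $\nabla\log L_n$ recorded above, the single-observation score in $\gamma$ is $\nabla_{\gamma}\log L_1(\dvec,\gamma) = \ind_{Y^*>0}\bigl(\gamma^{-1} - Y^*(\gamma Y^* - \xvec'\dvec)\bigr)$. Writing $a := \xvec'\dvec$ and $Z := \gamma Y^* - \xvec'\dvec$, and evaluating at the data-generating parameters (so that $\gamma\xvec'\beta = \xvec'\dvec$), we get $Z = \gamma\epsilon \sim N(0,1)$, together with $\{Y^*>0\} = \{Z>-a\}$ and $\gamma Y^* = a + Z$ on that event. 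Multiplying the score by $\gamma$ and substituting then produces the clean representation
\[
\gamma\,\nabla_{\gamma}\log L_1(\dvec,\gamma) = (1 - aZ - Z^2)\,\ind_{Z>-a} =: W .
\]

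I would first confirm the centering $\E W = 0$ from the truncated-normal identities $\E[\ind_{Z>-a}] = \Phi(a)$, $\E[Z\ind_{Z>-a}] = \phi(a)$, and $\E[Z^2\ind_{Z>-a}] = \Phi(a) - a\phi(a)$, whose combination cancels exactly; mean zero is what makes a bound of the stated Gaussian-in-$t$ form possible. I would then compute the MGF explicitly. On $\{Z\le -a\}$ we have $W=0$, contributing $\Phi(-a)$; on $\{Z>-a\}$ the integral $\int_{-a}^{\infty} e^{t(1-aZ-Z^2)}\phi(Z)\,dZ$ is Gaussian in $Z$ with leading coefficient $-(t+\tfrac12)$, which is negative exactly when $t>-\tfrac12$. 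Completing the square yields the closed form
\[
\E[e^{tW}] = \Phi(-a) + \frac{e^{t}}{\sqrt{2t+1}}\,\exp\!\left(\frac{a^2 t^2}{2(2t+1)}\right)\Phi\!\left(\frac{a(t+1)}{\sqrt{2t+1}}\right),
\]
valid for all $t>-\tfrac12$ and in particular throughout the target region $|t|<\tfrac14 = 1/\alpha$.

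The crux, and the step I expect to be the main obstacle, is bounding this closed form by $\exp\bigl((8+2a^2)t^2\bigr)$ uniformly on $|t|<\tfrac14$. The cleanest route I would take is to set $\psi(t) := \log\E[e^{tW}]$ and note $\psi(0)=0$, $\psi'(0)=\E W = 0$, so Taylor's theorem gives $\psi(t) = \tfrac12\psi''(\xi)\,t^2$ for some $\xi$ between $0$ and $t$; since $\psi''(\xi) = \Var_{\xi}(W)$ is the variance of $W$ under the exponentially tilted law, it suffices to prove $\Var_{\xi}(W)\le 16 + 4a^2$ uniformly for $|\xi|<\tfrac14$. The two constants should be isolated by separating the quadratic part $(1-Z^2)\ind_{Z>-a}$, which contributes the dimensionless $16$, from the linear part $-aZ\,\ind_{Z>-a}$, which contributes $4a^2$ and is sub-Gaussian; each tilted second moment can be controlled with the normal tail and moment estimates already in use, together with the elementary bounds $\tfrac14 < t+\tfrac12 < \tfrac34$ and $\tfrac12 < 2t+1 < \tfrac32$ on the tilt range. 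The genuine difficulty is that the tilt interacts with the one-sided truncation $\ind_{Z>-a}$, so the truncated quadratic moments under the tilted measure must be estimated carefully enough to deliver the stated constants without dragging along the full, unwieldy $\Phi$-dependence of the exact MGF.
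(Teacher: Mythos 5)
Your reduction is correct and coincides with the paper's starting point: with $a = \xvec'\dvec$ and $Z = \gamma\epsilon \sim N(0,1)$ the rescaled score is $W = (1 - aZ - Z^2)\ind_{Z>-a}$, it is mean zero, and your closed-form MGF is exactly the one the paper derives (your $\Phi\bigl(\tfrac{a(t+1)}{\sqrt{2t+1}}\bigr)$ equals the paper's $1 - \Phi\bigl(-\tfrac{a(1+t)}{\sqrt{1+2t}}\bigr)$). The gap is in the step you yourself flagged as the crux: the uniform tilted-variance bound $\Var_\xi(W) \le 16 + 4a^2$ for all $|\xi| < \tfrac14$ is \emph{false}, so Taylor's theorem with a constant curvature bound cannot produce the stated parameters. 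To see this, take $a$ large and positive, so the truncation $\ind_{Z>-a}$ is immaterial (the tilted law concentrates deep inside $\{Z>-a\}$). Then
\[
\psi(t) = \log \E\left[e^{tW}\right] = t - \tfrac12\log(1+2t) + \frac{a^2t^2}{2(1+2t)},
\qquad
\psi''(t) = \frac{2}{(1+2t)^2} + \frac{a^2}{(1+2t)^3},
\]
and as $t \downarrow -\tfrac14$ the curvature tends to $8 + 8a^2$, which exceeds $16 + 4a^2$ whenever $a^2 > 2$. Hence for $(\xvec'\dvec)^2 > 2$ there are tilts $\xi$ inside $(-\tfrac14,\tfrac14)$ with $\Var_\xi(W) > 16 + 4(\xvec'\dvec)^2$ (by continuity this persists for the truncated $W$ with $a$ large but finite). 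The lemma is still true because the quadratic bound $(8+2a^2)t^2$ has slack at moderate $|t|$ that absorbs the curvature blow-up near the endpoint $t = -\tfrac14$; a pointwise bound on $\psi''$ discards precisely that information. A secondary issue: even where tilted variances can be controlled, splitting $W$ into $(1-Z^2)\ind_{Z>-a}$ and $-aZ\,\ind_{Z>-a}$ and bounding each variance separately leaves a cross term of order $|a|$, so you would obtain something like $(\sqrt{c_1}+\sqrt{c_2}\,|a|)^2$ rather than $16+4a^2$.

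The paper's proof avoids tilting entirely and establishes the quadratic bound on the closed form directly, via two pointwise inequalities on $|t|\le\tfrac14$:
\[
(1+2t)^{-1/2}\,e^{t+\frac{a^2t^2}{2(1+2t)}} \le e^{(8+2a^2)t^2}
\quad\text{and}\quad
\Phi(-a) \le e^{(8+2a^2)t^2}\,\Phi\Bigl(-\tfrac{a(1+t)}{\sqrt{1+2t}}\Bigr),
\]
which together bound the MGF, since it equals $\Phi(-a) + (\text{prefactor})\bigl(1-\Phi\bigl(-\tfrac{a(1+t)}{\sqrt{1+2t}}\bigr)\bigr)$ and both pieces are then at most $e^{(8+2a^2)t^2}$ times the corresponding pieces of the right-hand side. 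Each inequality is proved by elementary sign analysis: the log-difference vanishes at $t=0$ and its derivative has the sign of $t$ (the second inequality uses the bound $g(-s)\le s + \sqrt{2/\pi}$ and a case split on the sign of $\xvec'\dvec$). If you want to salvage your route, you would need a $t$-dependent curvature bound whose double integral is $(8+2a^2)t^2$ on the whole interval, which in effect re-derives these explicit inequalities; the constant-curvature shortcut cannot work.
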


\begin{proof}
	It is straightforward to show that $\E[\nabla_{\gamma} \log L_1(\dvec, \gamma)] = 0$.
	Using standard calculus techniques, we derive
	\begin{equation}
		\E[ e^{t \gamma \nabla_{\gamma} \log L_1(\dvec, \gamma) } ] 
		= \Phi( - \xvec' \dvec ) + (1 + 2t)^{-1/2} e^{t + \frac{t^2 (\xvec' \dvec)^2}{2(1 + 2t)}} \left( 1 - \Phi\left( - \frac{\xvec' \dvec (1 + t)}{(1 + 2t)^{1/2}} \right)  \right)
	\end{equation}
	for $t> - \frac{1}{2}$.
	First, we aim to show 
	\begin{equation}
		(1 + 2t)^{-1/2} e^{t + \frac{t^2 (\xvec' \dvec)^2}{2(1 + 2t)}} \leq e^{ \frac{(16 + 4 (\xvec'\dvec)^2) t^2}{2} } \text{ \: \: } \forall |t| < \frac{1}{4} \text{.}
		\label{eqn:subExp_bound1}
	\end{equation}
	We see that \eqref{eqn:subExp_bound1} holds if and only if
	\begin{equation}
		-\frac{1}{2}\log(1 + 2t) + t + \frac{t^2 (\xvec' \dvec)^2}{2(1 + 2t)} - \frac{(16 + 4 (\xvec'\dvec)^2) t^2}{2} \leq 0 \text{ \: \: } \forall |t| < \frac{1}{4} \text{.}
		\label{eqn:subExp_f_bound}
	\end{equation}
	Define $g(t) = -(1 + 2t)\log(1 + 2t) + 2(1 + 2t)t + t^2 (\xvec' \dvec)^2 - (16 + 4 (\xvec'\dvec)^2) t^2(1 + 2t) $.  Note that $g(0) = 0$. We derive
	\begin{align*}
		g'(t) 
		& = -2 \log(1 + 2t) - 2 + 2 + 8t - (16 + 3(\xvec' \dvec)^2)2t - (32 + 8(\xvec' \dvec)^2)3t^2 \\ 
		& = -2 \log(1 + 2t) - (24 + 6(\xvec' \dvec)^2)t - (96 + 24(\xvec' \dvec)^2)t^2 \text{.}
	\end{align*}
	Suppose that $t \in [ -  \frac{1}{4}, 0 ]$. Since $-2 \log(1 + 2t)$ is decreasing for all $t > -\frac{1}{2}$, we see that
	\begin{align*}
		g'(t) 
		& \geq -2\log(1)  - (24 + 6(\xvec' \dvec)^2)t - (96 + 24(\xvec' \dvec)^2)t^2 \\ 
		& = - t( 1 + 4t  ) (24 + 6(\xvec' \dvec)^2)\\
		& \geq 0 \text{.}
	\end{align*}
	Now let $t \in [0,  \frac{1}{4}]$. We see
	\begin{align*}
		g'(t) 
		& \leq -2\log(1)  - (24 + 6(\xvec' \dvec)^2)t - (96 + 24(\xvec' \dvec)^2)t^2 \\ 
		& = - t( 1 + 4t  ) (24 + 6(\xvec' \dvec)^2)\\
		& \leq 0 \text{.}
	\end{align*}
	Altogether, we have shown that $g(0) = 0$, $g(t)$ is increasing on $[ -  \frac{1}{4}, 0 ]$, and $g(t)$ is decreasing on $[0,  \frac{1}{4}]$. Together these imply that $g(t) \leq 0$ for $t \in [- \frac{1}{4}, \frac{1}{4}]$, which, in turn, implies \eqref{eqn:subExp_f_bound}. 
	
	Next, we aim to show
	\begin{equation}
		\Phi( - \xvec' \dvec ) \leq  e^{ \frac{(16 + 4 (\xvec'\dvec)^2) t^2}{2} } \Phi\left( - \frac{\xvec' \dvec (1 + t)}{(1 + 2t)^{1/2}} \right)  \text{ \: \: } \forall |t| < \frac{1}{4} \text{.}
		\label{eqn:subExp_bound2}
	\end{equation}
	Define $f(t) = \frac{(16 + 4 (\xvec'\dvec)^2) t^2}{2} + \log \left( \Phi\left( - \frac{\xvec' \dvec (1 + t)}{(1 + 2t)^{1/2}} \right) \right) - \log( \Phi( - \xvec' \dvec ) )$. 
	Note that $f(0) = 0$. 
	We find
	\begin{align*}
		f'(t) 
		& =  (16 + 4 (\xvec'\dvec)^2)t + g\left( - \frac{\xvec' \dvec (1 + t)}{(1 + 2t)^{1/2}} \right) \left( \frac{ (1 + 2t)^{1/2} - (1+t)(1 + 2t)^{-1/2} }{1 + 2t} \right)(- \xvec' \dvec )\\
		& = (16 + 4 (\xvec'\dvec)^2)t - \xvec' \dvec g\left( - \frac{\xvec' \dvec (1 + t)}{(1 + 2t)^{1/2}} \right)\frac{ (1 + 2t) - (1+t)}{(1 + 2t)^{3/2}} \\
		& = (16 + 4 (\xvec'\dvec)^2)t - \xvec' \dvec g\left( - \frac{\xvec' \dvec (1 + t)}{(1 + 2t)^{1/2}} \right)  \frac{ t }{(1 + 2t)^{3/2}} \\
		& = t \left[ 16 + 4 (\xvec'\dvec)^2 - \xvec' \dvec g\left( - \frac{\xvec' \dvec (1 + t)}{(1 + 2t)^{1/2}} \right)  (1 + 2t)^{-3/2} \right] \text{.}
	\end{align*}
	Define $v(t) = 16 + 4 (\xvec'\dvec)^2 - \xvec'\dvec g\left( - \frac{\xvec' \dvec (1 + t)}{(1 + 2t)^{1/2}} \right)  (1 + 2t)^{-3/2}$. 
	We will show $v(t) \geq 0$ for all $t \in [- \frac{1}{4}, \frac{1}{4}]$ by working through the following two cases separately: (i) $\xvec'\dvec < 0$ and (ii) $\xvec'\dvec \geq 0$.
	
	Suppose that $\xvec'\dvec < 0$. Recall that $g(s) \geq 0 $ for all $s \in \reals$. As a consequence, we have $- \xvec' \dvec g\left( - \frac{\xvec' \dvec (1 + t)}{(1 + 2t)^{1/2}} \right)  (1 + 2t)^{-3/2} \geq 0$ and, by extension, $v(t) \geq 16 + 4 (\xvec'\dvec)^2 \geq 0$ for $t \in [- \frac{1}{4}, \frac{1}{4}]$.
	
	Now suppose that $\xvec'\dvec \geq 0$. Lemma 2.1 of \cite{Kesavan1985} implies that
	$ g(-s) \leq s + \sqrt{\frac{2}{\pi} } $ for $s \geq 0$. Since $\frac{\xvec' \dvec (1 + t)}{(1 + 2t)^{1/2}} \geq 0$ for $t > -\frac{1}{2}$, we have
	$$ g\left( - \frac{\xvec' \dvec (1 + t)}{(1 + 2t)^{1/2}} \right) \leq \frac{\xvec' \dvec (1 + t)}{(1 + 2t)^{1/2}} + \sqrt{\frac{2}{\pi} } \text{.}$$
	Thus we find that when $\xvec'\dvec \geq 0$,
	\begin{align*}
		v(t) 
		& \geq 16 + 4 (\xvec'\dvec)^2 - \xvec' \dvec \left(\frac{\xvec' \dvec (1 + t)}{(1 + 2t)^{1/2}} + \sqrt{\frac{2}{\pi} } \right)  (1 + 2t)^{-3/2}\\
		& = 16 + 4 (\xvec'\dvec)^2 - (\xvec'\dvec)^2(1+t)(1 + 2t)^{-2} - \xvec'\dvec(1 + 2t)^{-3/2}\sqrt{\frac{2}{\pi}}
	\end{align*}
	for $t >-\frac{1}{2}$. Note that both $(1+t)(1 + 2t)^{-2}$ and $(1 + 2t)^{-3/2}$ are decreasing on $(-\frac{1}{2}, \infty)$, so $\max_{t \in [-1/4,1/4 ]}(1+t)(1 + 2t)^{-2} = (3/4)(1/2)^{-2} = 3$ and  $\max_{t \in [-1/4,1/4 ]}(1 + 2t)^{-3/2} = (1/2)^{-3/2} = 2\sqrt{2}$. Therefore
	\begin{align*}
		v(t) 
		& \geq 16 + 4 (\xvec'\dvec)^2 - 3(\xvec'\dvec)^2 - \xvec'\dvec2\sqrt{2}\sqrt{\frac{2}{\pi}} \\
		& =  (\xvec'\dvec)^2 - \xvec'\dvec \frac{4}{\sqrt{\pi}} + 16 \text{.}
	\end{align*}
	We see that $\frac{2}{\sqrt{\pi}} = \argmin_s s^2 - s \frac{4}{\sqrt{\pi}} + 16$ and $ (\frac{2}{\sqrt{\pi}})^2 -  (\frac{2}{\sqrt{\pi}})\frac{4}{\sqrt{\pi}} + 16 = - \frac{4}{\pi} + 16 > 0$. As such, $v(t) > 0$ for $t \in [- \frac{1}{4}, \frac{1}{4}]$ when $\xvec'\dvec \geq 0$.
	
	Since $v(t) \geq 0$ for $t \in [- \frac{1}{4}, \frac{1}{4}]$, we see that $f'(t) = tv(t) \leq 0$ for $ t \in [- \frac{1}{4}, 0]$ and $f'(t) \geq 0$ for $t \in [0, \frac{1}{4}]$. Taken together, our findings that $f(0) = 0$, $f(t)$ is decreasing on $[ -  \frac{1}{4}, 0 ]$, and $f(t)$ is increasing on $[0,  \frac{1}{4}]$ imply that $f(t) \geq 0$ for $t \in [- \frac{1}{4}, \frac{1}{4}]$. \eqref{eqn:subExp_bound2} immediately follows.
	
	Together \eqref{eqn:subExp_bound1} and \eqref{eqn:subExp_bound2} imply that
	\begin{equation}
		\E[e^{t \gamma \nabla_{\gamma} \log L_1(\delta, \gamma) }] \leq e^{ \frac{ (16 + 4 (\xvec ' \dvec)^2 ) t^2}{2} } \text{\; \;} \forall |t| < \frac{1}{4}\text{.}
	\end{equation}
\end{proof}

\begin{lemma}
	Suppose that $Y_i^* = \xvec_i' \beta + \epsilon_i$ where $\epsilon \iid N(0, \sigma^2)$ and define $Y_i = Y_i^* \ind_{Y_i^* > 0}$ for $i = 1, \ldots, n$. Then for $c > 0$
	\begin{align*}
		P\bigg( \bigg \lVert\nabla_{\mathcal{A}'}^2 & \log L_n(\Theta)  - \E\left[ \nabla_{\mathcal{A}'}^2 \log L_n(\Theta) \right] \bigg \rVert_{\max} > c \bigg)\\
		& \leq  2 (s + 1)^2 \exp \left( - \frac{2c^2}{ \max_{j,k\in\mathcal{A} \cup \{0\}} \sum_{i = 1}^{n} x_{ij}^2x_{ik}^2 } \right) \\
		& + 4(s+1) \exp \left( - \frac{2c^2\gamma^2}{ \max_{j\in\mathcal{A} \cup \{0\} } \sum_{i = 1}^{n} x_{ij}^2 (2 + \xvec_i'\dvec + g(- \xvec_i'\dvec) )^2 } \right)\\
		& + 2 \exp\left(-\frac{1}{2} \min \left\{ \frac{c\gamma^2}{8}, \frac{c^2 \gamma^4}{34n + \sum_{i = 1}^n \frac{1}{2} (\xvec_i'\dvec)^2 (2 + \xvec_i'\dvec + g(- \xvec_i'\dvec))^2 + 8 (\xvec_i'\dvec)^2} \right\} \right)  \text{.}
	\end{align*}
	\label{lemma:hessian prob bound}
\end{lemma}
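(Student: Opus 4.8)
The plan is to control each entry of the centered Hessian $\nabla^2_{\mathcal{A}'}\log L_n(\Theta) - \E[\nabla^2_{\mathcal{A}'}\log L_n(\Theta)]$ separately and then union bound over the $(s+2)^2$ entries of the $\mathcal{A}' \times \mathcal{A}'$ submatrix, using that $\norm{\cdot}_{\max} > c$ forces at least one entry to deviate by more than $c$. Reading the three blocks off the Hessian representation \eqref{logL_Hessian} and using that the censored responses satisfy $\yvec_0 = \mathbf{0}$, every entry is a sum over $i = 1, \ldots, n$ of independent terms, with the randomness entering through the Bernoulli $d_i = \ind_{Y_i^* > 0}$ and, for the terms involving $\gamma$, through $Y_i$. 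I would group the entries into three types and attach to each the concentration inequality suited to the tail of its summands, working throughout with the $\Theta = (\dvec,\gamma)$ that reparameterizes the data-generating law, so that writing $\tilde\epsilon_i = \gamma\epsilon_i \sim N(0,1)$ gives $\gamma Y_i^* = \xvec_i'\dvec + \tilde\epsilon_i$.

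For an entry indexed by $j, k \in \mathcal{A} \cup \{0\}$, the summand is $[(1-d_i)\,h(-\xvec_i'\dvec) + d_i]\,x_{ij}x_{ik}$, where $h(s) = g(s)(s + g(s))$ and the only randomness is $d_i$. Since Sampford's bound gives $0 < h(-\xvec_i'\dvec) < 1$, this summand ranges over an interval of length at most $|x_{ij}x_{ik}|$, so Hoeffding's inequality yields the tail $\exp(-2c^2/\sum_i x_{ij}^2 x_{ik}^2)$; a union bound over the $(s+1)^2$ ordered pairs in this block, doubled for the two tail directions, produces the first term. For an entry pairing $j \in \mathcal{A}\cup\{0\}$ with $\gamma$, I would instead bound $\gamma$ times the entry, which equals $\sum_i x_{ij}(\xvec_i'\dvec + \tilde\epsilon_i)_+$, a sum of rectified Gaussians. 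Imitating the direct moment-generating-function argument of Lemma \ref{lemma:subGaussian_score}, I would show each centered summand is sub-Gaussian with variance proxy $\tfrac14 x_{ij}^2(2 + \xvec_i'\dvec + g(-\xvec_i'\dvec))^2$; the sub-Gaussian tail bound, unioned over the $2(s+1)$ off-diagonal cross positions and the two tail directions, gives the second term (the rescaling by $\gamma$ replaces the threshold $c$ with $c\gamma$, which is the source of the $\gamma^2$ in the exponent's numerator).

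The $(\gamma,\gamma)$ entry is the most delicate and calls for a sub-exponential rather than sub-Gaussian treatment. Scaling by $\gamma^2$ and using $\yvec_0 = \mathbf{0}$ turns it into $-\sum_i\big[(\xvec_i'\dvec + \tilde\epsilon_i)_+^2 + \ind_{\xvec_i'\dvec + \tilde\epsilon_i > 0}\big]$, a sum of squared rectified Gaussians plus bounded indicators. Following the moment-generating-function computation of Lemma \ref{lemma:subExponential_score_gamma}, which already controls the closely related $\gamma$-score, I would establish per-observation sub-exponential parameters $\big(34 + \tfrac12(2 + \xvec_i'\dvec + g(-\xvec_i'\dvec))^2 + 8(\xvec_i'\dvec)^2,\ 8\big)$ and apply Bernstein's inequality, whose $\min\{\text{linear}, \text{quadratic}\}$ exponent with threshold $c\gamma^2$ reproduces the third term. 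Adding the three contributions completes the bound.

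The hard part will be pinning down the exact sub-Gaussian and sub-exponential parameters, and in particular the $(\gamma,\gamma)$ parameters: the squared rectified Gaussian $(\xvec_i'\dvec + \tilde\epsilon_i)_+^2$ has an exponential upper tail whose moment generating function must be controlled uniformly over the sign and magnitude of $\xvec_i'\dvec$. As in the proof of Lemma \ref{lemma:subExponential_score_gamma}, I expect this to require splitting on the sign of $\xvec_i'\dvec$ and invoking Mills-ratio estimates such as $g(-s) \le s + \sqrt{2/\pi}$ for $s \ge 0$. Producing clean, $\Theta$-dependent constants that match the stated denominators exactly—rather than merely order-correct ones—is where essentially all of the technical effort lies; the assembly by union bound afterwards is routine.
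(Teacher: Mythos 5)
Your proposal has the same architecture as the paper's proof: split $\nabla^2_{\mathcal{A}'}\log L_n(\Theta)$ into the same three entry types, apply Hoeffding to the $(\delta_j,\delta_k)$ block, a sub-Gaussian tail to the $(\delta_j,\gamma)$ entries, a Bernstein-type sub-exponential tail to the $(\gamma,\gamma)$ entry, and finish with a union bound using exactly the multiplicities you state; your treatment of the $(\delta_j,\delta_k)$ block is identical to the paper's. The one genuine difference is how the sub-Gaussian/sub-exponential parameters for the $\gamma$-entries are obtained. You plan fresh MGF computations for the rectified Gaussian $(\xvec_i'\dvec+\tilde{\epsilon}_i)_+$ and its square, and you correctly flag this as where all the effort lies. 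The paper sidesteps those computations with an algebraic decomposition that reuses Lemmas \ref{lemma:subGaussian_score} and \ref{lemma:subExponential_score_gamma}: it writes the centered entry $\gamma d_i Y_i^* x_{ij} - \E[\gamma d_i Y_i^* x_{ij}]$ as $\frac{\partial}{\partial\delta_j}\log L_1(\Theta) + [d_i - (1-\Phi(-\xvec_i'\dvec))]\xvec_i'\dvec x_{ij} + [(1-d_i) - \Phi(-\xvec_i'\dvec)]g(-\xvec_i'\dvec)x_{ij}$, i.e.\ score plus two bounded Bernoulli-type terms, then combines the proxies via the rule that $\subG(\sigma_1^2)+\subG(\sigma_2^2)$ is $\subG((\sigma_1+\sigma_2)^2)$, which produces the constant $\frac14 x_{ij}^2(2+\xvec_i'\dvec+g(-\xvec_i'\dvec))^2$ mechanically. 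The $(\gamma,\gamma)$ entry is handled the same way: $-d_i-\gamma^2Y_i^{*2}d_i$, centered, equals $\gamma\frac{\partial}{\partial\gamma}\log L_1(\Theta)$ minus centered versions of $2d_i$ and $\gamma\xvec_i'\dvec Y_i^* d_i$, combined with a Cauchy--Schwarz rule for adding a sub-Gaussian to a sub-exponential variable. If you adopt this decomposition, the "hard part" you identify disappears.

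One concrete caution about the constant you committed to for the $(\gamma,\gamma)$ entry. You target the per-observation parameter $34 + \frac12(2+\xvec_i'\dvec+g(-\xvec_i'\dvec))^2 + 8(\xvec_i'\dvec)^2$, which matches the lemma as printed, but the paper's own derivation yields $34 + \frac12(\xvec_i'\dvec)^2(2+\xvec_i'\dvec+g(-\xvec_i'\dvec))^2 + 8(\xvec_i'\dvec)^2$: the extra $(\xvec_i'\dvec)^2$ factor is inherited from the term $\gamma\xvec_i'\dvec Y_i^*d_i$, which is $\xvec_i'\dvec$ times a rectified Gaussian (the printed statement appears to drop this factor; compare the final display of the paper's proof and the corresponding term in Theorem \ref{thm:a0_prob}, which retain it). So the decomposition route lands on the proof's constant, not the statement's, and proving the statement's smaller constant would require a genuinely sharper argument than anything in the paper.
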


\begin{proof}
	Let $c > 0$.
	Recall that the Tobit log-likelihood can be expressed as
	$$\log L_n(\dvec, \gamma)  =  \sum_{i=1}^{n} d_i\left[ \log(\gamma) -\frac{1}{2} (\gamma Y_i^* - \xvec_i' \dvec)^2 \right] + (1-d_i) \log\left( \Phi\left(- \xvec_i'\dvec \right) \right) $$
	where $d_i = \ind_{Y_i^* > 0}$.
	The Hessian with respect to $\Theta_{\mathcal{A}'}$, $\nabla_{\mathcal{A}'}^2 \log L_n(\Theta^*)$, has three types of entries: (i) $\frac{\partial^2}{\partial \delta_j \partial \delta_k }\log L_n (\Theta)$, (ii) $\frac{\partial^2}{\partial \delta_j \partial \gamma }\log L_n (\Theta)$, and (iii) $\frac{\partial^2}{\partial \gamma^2 }\log L_n (\Theta)$ where $j, k \in \mathcal{A} \cup \{0\}$. We will bound the upper tail probabilities for each type of entry using the Chernoff bound.
	
	Starting with (i), we  see that
	$$ \frac{\partial^2}{\partial \delta_j \partial \delta_k }\log L_n (\Theta) = \sum_{i = 1}^n - d_i x_{ij}x_{ik} - (1 - d_i) h(- \xvec_i'\dvec) x_{ij}x_{ik} = \sum_{i = 1}^n - x_{ij}x_{ik} [d_i + (1 - d_i)h(- \xvec_i'\dvec)] \text{.} $$
	Since $0 < h(s)<1 $ for all $s \in \reals$, we see that $0 < d_i + (1 - d_i)h(- \xvec_i'\dvec) < 1$. As such, by Hoeffding's lemma
	$ d_i + (1 - d_i)h(- \xvec_i'\dvec) \sim \subG(\frac{1}{4})$, so $- x_{ij}x_{ik} [d_i + (1 - d_i)h(- \xvec_i'\dvec)] \sim \subG(\frac{x_{ij}^2x_{ik}^2}{4})$. Since the $Y_i$ are independent, this implies 
	$\sum_{i = 1}^n - x_{ij}x_{ik} [d_i + (1 - d_i)h(- \xvec_i'\dvec)] \sim \subG(\frac{1}{4} \sum_{i=1}^n x_{ij}^2x_{ik}^2 ) $. Applying a Chernoff bound, we have
	\begin{equation}
		P\left( \left| \frac{\partial^2}{\partial \delta_j \partial \delta_k }\log L_n (\Theta) - \E\left[ \frac{\partial^2}{\partial \delta_j \partial \delta_k }\log L_n (\Theta) \right] \right| > c  \right) \leq 2 \exp \left( - \frac{2 c^2}{\sum_{i = 1}^{n} x_{ij}^2x_{ik}^2 } \right) \text{.}
	\end{equation}
	
	Moving on to (ii), we see that 
	$ \frac{\partial^2}{\partial \delta_j \partial \gamma }\log L_n (\Theta) = \sum_{i=1}^n d_i Y_i^* x_{ij}$. 
	A simple calculation yields $ \E[\gamma d_i Y_i^* x_{ij}] = (\xvec_i'\dvec(1 - \Phi(-\xvec_i'\dvec)) + \phi(-\xvec_i'\dvec) )x_{ij}$.
	We know that 
	$
	\frac{\partial}{\partial \delta_j } \log L_1 (\Theta) =  \gamma d_i  Y_i^* x_{ij} - d_i \xvec_i' \dvec x_{ij} - (1 - d_i) g(-\xvec'\dvec) x_{ij}.
	$
	As such, we see that
	\begin{align*}
		\gamma d_i Y_i^* x_{ij} - \E[\gamma d_i Y_i^* x_{ij}] 
		= & \text{ } \frac{\partial}{\partial \delta_j } \log L_1 (\Theta) + [d_i - (1 - \Phi(-\xvec_i'\dvec))]\xvec_i'\dvec x_{ij}\\ 
		& + [(1 - d_i) - \Phi(-\xvec_i'\dvec)]g(-\xvec_i'\dvec)x_{ij} \text{.}
	\end{align*}
	Lemma \ref{lemma:subGaussian_score} establishes that $\frac{\partial}{\partial \delta_j } \log L_1 (\Theta) \sim \subG(x_{ij}^2)$ and	Hoeffding's lemma yields that 
	$[d_i - (1 - \Phi(-\xvec_i'\dvec))]\xvec_i'\dvec x_{ij} \sim \subG\left( \frac{1}{4}(\xvec_i'\dvec)^2 x_{ij}^2 \right)$ 
	and 
	$[(1 - d_i) - \Phi(-\xvec_i'\dvec)]g(-\xvec_i'\dvec)x_{ij} \sim \subG\left( \frac{1}{4}g^2(-\xvec_i'\dvec)x_{ij}^2 \right)$. 
	With a simple application of Hoeffding's inequality one can show that if $Z_1 \sim \subG(\sigma_1^2)$ and $Z_2 \sim \subG(\sigma_1^2)$, then $Z_1 + Z_2 \sim \subG( (\sigma_1 + \sigma_2)^2 )$. 
	As such, we see that
	$
	\gamma d_i Y_i^* x_{ij} \sim \subG\left( \frac{1}{4}x_{ij}^2(2 + \xvec_i'\dvec + g(-\xvec_i'\dvec))^2  \right)
	$.
	Since the $Y_i$ are independent, this implies that
	$
	\frac{\partial^2}{\partial \delta_j \partial \gamma }\log L_n (\Theta) \sim \subG\left( \frac{1}{4\gamma^2} \sum_{i = 1}^n x_{ij}^2(2 + \xvec_i'\dvec + g(-\xvec_i'\dvec))^2 \right)
	$. Applying a Chernoff bound, we find
	\begin{multline}
		P\bigg( \bigg| \frac{\partial^2}{\partial \delta_j \partial \gamma }\log L_n (\Theta)  - \E\bigg[ \frac{\partial^2}{\partial \delta_j \partial \gamma }  \log L_n (\Theta) \bigg]  \bigg| >  c  \bigg)  \\ 
		\leq 2 \exp\left( - \frac{2 c^2 \gamma^2}{\sum_{i = 1}^n x_{ij}^2(2 + \xvec_i'\dvec + g(-\xvec_i'\dvec))^2} \right). 
	\end{multline}
	
	Finally, we move to (iii). We derive $\frac{\partial^2}{\partial \gamma^2 }\log L_n (\Theta) =\sum_{i = 1}^n d_i(-\gamma^{-2} - {Y_i^*}^2)$. One can easily show that $\E[d_i + \gamma^2 d_i {Y_i^*}^2] = \xvec_i'\dvec \phi(-\xvec_i'\dvec) + ((\xvec_i'\dvec)^2 + 2)(1 - \Phi(-\xvec_i'\dvec))$. We know that 
	$
	\gamma \frac{\partial}{\partial \gamma } \log L_1 (\Theta) = d_i - \gamma {Y_i^*}(\gamma {Y_i^*} - \xvec_i'\dvec )d_i = d_i - \gamma^2 {Y_i^*}^2 d_i + \gamma \xvec_i'\dvec {Y_i^*} d_i
	$. From  this, we see that
	{\small
		\begin{align*}
			-d_i - \gamma^2 {Y_i^*}^2 d_i - \E[-d_i - \gamma^2 {Y_i^*}^2 d_i ] = \gamma \frac{\partial}{\partial \gamma } \log L_1 (\Theta) - 2d_i + \E[2d_i] - \gamma \xvec_i'\dvec {Y_i^*} d_i + \E[\gamma \xvec_i'\dvec {Y_i^*} d_i] \text{.}
	\end{align*}}
	Lemma \ref{lemma:subExponential_score_gamma} establishes that $\gamma\frac{\partial}{\partial \gamma } \log L_1 (\Theta) \sim \subExp(16 + 4(\xvec_i'\dvec)^2, 4)$. 
	We've already shown that $\gamma \xvec_i'\dvec d_i Y_i^* \sim \subG\left( \frac{1}{4}(\xvec_i'\dvec)^2(2 + \xvec_i'\dvec + g(-\xvec_i'\dvec))^2  \right)$. Hoeffding's lemma yields that $2 d_i \sim \subG(1)$. With the Cauchy-Schwarz inequality, one can show that if $Z_1 \sim \subG(\sigma_1^2)$ and $Z_2 \sim \subExp(\sigma_2^2, \nu)$, then $Z_1 + Z_2 \sim \subExp(2(\sigma_1^2 + \sigma_2^2), 2 \nu)$. As such, we see that
	$$
	-d_i - \gamma^2 {Y_i^*}^2 d_i \sim \subExp\left(\frac{1}{2} (\xvec_i'\dvec)^2(2 + \xvec_i'\dvec + g(- \xvec_i'\dvec))^2 +34 + 8 (\xvec_i'\dvec)^2, 8\right) \text{.}
	$$
	Since the $Y_i$ are independent, this implies that 
	$$
	\gamma^2 \frac{\partial^2}{\partial \gamma^2 }\log L_n (\Theta) \sim \subExp\left(34n + \sum_{i = 1}^n \frac{1}{2} (\xvec_i'\dvec)^2(2 + \xvec_i'\dvec + g(- \xvec_i'\dvec))^2 + 8 (\xvec_i'\dvec)^2, 8\right) \text{.}
	$$
	Applying a Chernoff bound, we find that
	\begin{align*}
		P\bigg( \bigg| \frac{\partial^2}{\partial \gamma^2 }& \log L_n (\Theta)  - \E\bigg[ \frac{\partial^2}{\partial \gamma^2 }  \log L_n (\Theta) \bigg]  \bigg| >  c  \bigg)  \\
		& = P\bigg( \bigg|\gamma^2 \frac{\partial^2}{\partial \gamma^2 }\log L_n (\Theta)  - \E\bigg[ \gamma ^2 \frac{\partial^2}{\partial \gamma^2 }  \log L_n (\Theta) \bigg]  \bigg| >  c\gamma^2  \bigg) \\
		& \leq 2 \exp\left(-\frac{1}{2} \min \left\{ \frac{c\gamma^2}{8}, \frac{c^2 \gamma^4}{34n + \sum_{i = 1}^n \frac{1}{2} (\xvec_i'\dvec)^2(2 + \xvec_i'\dvec + g(- \xvec_i'\dvec))^2 + 8 (\xvec_i'\dvec)^2} \right\} \right) \text{.}
	\end{align*}
	We complete the proof by applying the union bound.
	
\end{proof}

\begin{lemma}\label{lemma:Q_3 matrix prob bound}
	Suppose that $Y_i^* = \xvec_i' \beta + \epsilon_i$ where $\epsilon \iid N(0, \sigma^2)$ and define $Y_i = Y_i^* \ind_{Y_i^* > 0}$ for $i = 1, \ldots, n$. Then for $c > 0$
	{\footnotesize 
		\begin{align*}
			P\bigg( \bigg \lVert 
			\xmat_{(\mathcal{A} \cup \{ 0 \})^c}'
			\begin{bmatrix}
				\mathbf{D}(\dvec) & \mathbf{0}\\
				\mathbf{0} & \mathbf{I}
			\end{bmatrix}
			&
			\begin{bmatrix}
				\xmat_{\mathcal{A} \cup \{ 0 \}} & -\yvec
			\end{bmatrix}
			- 
			\E\left[
			\xmat_{(\mathcal{A} \cup \{ 0 \})^c}'
			\begin{bmatrix}
				\mathbf{D}(\dvec) & \mathbf{0}\\
				\mathbf{0} & \mathbf{I}
			\end{bmatrix}
			\begin{bmatrix}
				\xmat_{(\mathcal{A} \cup \{ 0 \})} & -\yvec
			\end{bmatrix}
			\right]
			\bigg \rVert_{\max} > c \bigg)\\
			& \leq  2(s+1)(p-s) \exp \left( - \frac{2c^2}{ \max_{j \in (\mathcal{A} \cup \{0\})^c, k\in\mathcal{A} \cup \{0\}} \sum_{i = 1}^{n} x_{ij}^2x_{ik}^2 } \right) \\
			& + 2(p - s) \exp \left( - \frac{2c^2\gamma^2}{ \max_{j\in(\mathcal{A} \cup \{0\})^c } \sum_{i = 1}^{n} x_{ij}^2 (2 + \xvec_i'\dvec + g(- \xvec_i'\dvec) )^2 } \right) \text{.}
	\end{align*}}
\end{lemma}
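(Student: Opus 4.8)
The plan is to recognize that every entry of the matrix
$$\mathbf{B} := \xmat_{(\mathcal{A} \cup \{ 0 \})^c}' \begin{bmatrix} \mathbf{D}(\dvec^*) & \mathbf{0}\\ \mathbf{0} & \mathbf{I} \end{bmatrix} \begin{bmatrix} \xmat_{(\mathcal{A} \cup \{ 0 \})} & -\yvec \end{bmatrix}$$
coincides, up to sign, with one of the two off-diagonal entry types already analyzed in the proof of Lemma~\ref{lemma:hessian prob bound}, so that the probabilistic work is done and only a union-bound count remains. First I would expand $\mathbf{B}$ entrywise. Writing the central weight as $w_i = d_i + (1-d_i)h(-\xvec_i'\dvec^*)$ (the block $\mathbf{D}(\dvec^*)$ contributes $h(-\xvec_i'\dvec^*)$ on the $n_0$ censored rows, and the identity block contributes $1$ on the $n_1$ uncensored rows), I would observe that for a row $j \in (\mathcal{A}\cup\{0\})^c$ and a column $k \in \mathcal{A}\cup\{0\}$ drawn from $\xmat_{(\mathcal{A}\cup\{0\})}$,
$$\mathbf{B}_{jk} = \sum_{i=1}^n x_{ij} x_{ik}\big[d_i + (1-d_i) h(-\xvec_i'\dvec^*)\big],$$
which is precisely a type-(i) Hessian entry $\frac{\partial^2}{\partial\delta_j\partial\delta_k}\log L_n(\Theta)$, now with $j$ ranging over the complement.

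For the final column, arising from $-\yvec$, I would exploit the fact that censored observations have $y_i = 0$, so the diagonal block $\mathbf{D}(\dvec^*)$ never interacts with $\yvec$; the identity block on the uncensored rows then gives
$$\mathbf{B}_{j,\,\mathrm{last}} = -\sum_{i=1}^n d_i Y_i^* x_{ij} = -\frac{\partial^2}{\partial\delta_j\partial\gamma}\log L_n(\Theta),$$
a type-(ii) entry up to sign. Thus $\mathbf{B}$ contains exactly $(p-s)(s+1)$ entries of type (i) and $p-s$ entries of type (ii).

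Next I would import the sub-Gaussian characterizations established in the proof of Lemma~\ref{lemma:hessian prob bound}: each centered type-(i) entry is $\subG\!\left(\frac14\sum_i x_{ij}^2 x_{ik}^2\right)$, giving the Chernoff bound $2\exp\!\left(-2c^2/\sum_i x_{ij}^2 x_{ik}^2\right)$; and each centered type-(ii) entry is $\subG\!\left(\frac{1}{4\gamma^2}\sum_i x_{ij}^2(2+\xvec_i'\dvec+g(-\xvec_i'\dvec))^2\right)$, giving $2\exp\!\left(-2c^2\gamma^2/\sum_i x_{ij}^2(2+\xvec_i'\dvec+g(-\xvec_i'\dvec))^2\right)$. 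Because negation leaves a two-sided deviation probability unchanged, the last-column entries obey the identical bound. Applying the union bound over all entries, and replacing each per-entry variance proxy by its maximum over the relevant index set, $(\mathcal{A}\cup\{0\})^c \times (\mathcal{A}\cup\{0\})$ for type (i) and $(\mathcal{A}\cup\{0\})^c$ for type (ii), yields exactly the stated inequality.

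The only genuine subtlety, and the step I would flag as the main obstacle, is the treatment of the $-\yvec$ column: one must notice that $y_i = 0$ on the censored rows so that $\mathbf{D}(\dvec^*)$ drops out and the entry collapses precisely to the type-(ii) form $-\sum_i d_i Y_i^* x_{ij}$. Everything else is a direct reuse of the concentration bounds derived for Lemma~\ref{lemma:hessian prob bound} together with a careful union-bound count, with no new probabilistic input required.
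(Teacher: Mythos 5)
Your proposal is correct and follows essentially the same route as the paper's proof: the paper likewise observes that every entry of the matrix equals either $-\frac{\partial^2}{\partial \delta_j \partial \delta_k }\log L_n (\Theta)$ or $-\frac{\partial^2}{\partial \delta_j \partial \gamma }\log L_n (\Theta)$ with $j \in (\mathcal{A} \cup \{0\})^c$, and then reuses the concentration arguments of Lemma~\ref{lemma:hessian prob bound} together with a union bound over the $(p-s)(s+1)$ and $(p-s)$ entries of each type. Your write-up is in fact more explicit than the paper's brief proof; in particular, your observation that $y_i = 0$ on the censored rows, so the $\mathbf{D}(\dvec^*)$ block never interacts with $\yvec$ and the last column collapses to $-\sum_i d_i Y_i^* x_{ij}$, is exactly the detail the paper leaves implicit.
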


\begin{proof}
	Note that each element of 
	$\xmat_{(\mathcal{A} \cup \{ 0 \})^c}'
	\begin{bmatrix}
		\mathbf{D}(\dvec) & \mathbf{0}\\
		\mathbf{0} & \mathbf{I}
	\end{bmatrix}
	\begin{bmatrix}
		\xmat_{(\mathcal{A} \cup \{ 0 \})} & -\yvec
	\end{bmatrix}$ 
	is equal to either $-\frac{\partial^2}{\partial \delta_j \partial \delta_k }\log L_n (\Theta)$ or $-\frac{\partial^2}{\partial \delta_j \partial \gamma }\log L_n (\Theta)$. As such, the arguments used in the proof of Lemma \ref{lemma:hessian prob bound} apply here as well, with the only change being that $j \in (\mathcal{A} \cup \{0\})^c$ rather than $\mathcal{A} \cup \{0\}$.
\end{proof}

\begin{lemma}
	Suppose that $Y_i^* = \xvec_i' \beta + \epsilon_i$ where $\epsilon_i \iid N(0, \sigma^2)$ and define $Y_i = Y_i^* \ind_{Y_i^* > 0}$ for $i = 1, \ldots, n$. Then for $c > 0$
	{\small \begin{align*}
			P\bigg( \bigg \lVert 
			\begin{bmatrix}
				- \xmat_1' \\ \yvec_1'
			\end{bmatrix} 
			&
			\begin{bmatrix}
				- \xmat_1 & \yvec_1
			\end{bmatrix}
			-
			\E \left[
			\begin{bmatrix}
				- \xmat_1' \\ \yvec_1'
			\end{bmatrix} 
			\begin{bmatrix}
				- \xmat_1 & \yvec_1
			\end{bmatrix}
			\right]
			\bigg \rVert_{\max} > c \bigg)\\
			& \leq  2(p+1)^2\exp\left(-\frac{2c^2}{\max_{j,k\in\{0, \ldots, p\}}\sum_{i = 1}^nx_{ij}^2x_{ik}^2} \right) \\
			& + 4(p+1)\exp\left(-\frac{2c^2\gamma^2}{\max_{j\in\{0, \ldots, p\}}\sum_{i = 1}^n x_{ij}^2(2 + \xvec_i'\dvec + g(-\xvec_i'\dvec))^2} \right)\\
			& + 2 \exp\left(-\frac{1}{2} \min \left\{ \frac{c\gamma^2}{8}, \frac{c^2 \gamma^4}{\frac{65}{2}n + \sum_{i = 1}^n \frac{1}{2} (\xvec_i'\dvec)^2(2 + \xvec_i'\dvec + g(- \xvec_i'\dvec))^2 + 8 (\xvec_i'\dvec)^2} \right\} \right) \text{.}
	\end{align*}}
	\label{lemma:RE matrix prob bound}
\end{lemma}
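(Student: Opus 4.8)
The plan is to control the entrywise deviation of the $(p+2)\times(p+2)$ random matrix
$$\mathbf{M} := \begin{bmatrix} -\xmat_1' \\ \yvec_1' \end{bmatrix}\begin{bmatrix} -\xmat_1 & \yvec_1 \end{bmatrix}$$
from its mean, in exactly the manner of Lemma~\ref{lemma:hessian prob bound}: bound the upper and lower tail of each centered entry by a Chernoff argument, then combine over all entries with a union bound. Writing $d_i = \ind_{Y_i^* > 0}$ and using $Y_i = d_i Y_i^*$, the entries of $\mathbf{M}$ split into three groups: (i) the $(p+1)^2$ entries of $\xmat_1'\xmat_1$, each of the form $\sum_{i=1}^n d_i x_{ij} x_{ik}$ with $j,k \in \{0,\ldots,p\}$; (ii) the $p+1$ entries of $-\xmat_1'\yvec_1$ together with its transpose, each of the form $-\sum_{i=1}^n d_i x_{ij} Y_i^*$; and (iii) the single scalar $\yvec_1'\yvec_1 = \sum_{i=1}^n d_i {Y_i^*}^2$. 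Since the $Y_i^*$ are independent, every entry is a sum of independent terms, so once I establish the sub-Gaussian or sub-exponential parameter of a single summand I can invoke the independent-sum closure and the corresponding Chernoff bound.

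For group (i) I would note that $d_i \in \{0,1\}$ gives $d_i \sim \subG(\tfrac14)$ by Hoeffding's lemma, hence $d_i x_{ij} x_{ik} \sim \subG(\tfrac14 x_{ij}^2 x_{ik}^2)$ and $\sum_i d_i x_{ij} x_{ik} \sim \subG(\tfrac14 \sum_i x_{ij}^2 x_{ik}^2)$; a two-sided Chernoff bound yields $2\exp(-2c^2 / \sum_i x_{ij}^2 x_{ik}^2)$ per entry. For group (ii) I would reuse the computation already carried out inside the proof of Lemma~\ref{lemma:hessian prob bound}, which shows $\gamma d_i Y_i^* x_{ij} \sim \subG(\tfrac14 x_{ij}^2 (2 + \xvec_i'\dvec + g(-\xvec_i'\dvec))^2)$; dividing by $\gamma$, summing over independent observations, and applying the Chernoff bound after scaling the threshold by $\gamma$ gives $2\exp(-2c^2\gamma^2/\sum_i x_{ij}^2(2+\xvec_i'\dvec + g(-\xvec_i'\dvec))^2)$ per entry. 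Both reductions are essentially transcriptions of the corresponding steps in Lemma~\ref{lemma:hessian prob bound}, the only change being that the column index $j$ now ranges over all of $\{0,\ldots,p\}$.

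Group (iii) is the genuinely new piece and the main obstacle. Here I would mirror the treatment of the $\partial^2/\partial\gamma^2$ entry in Lemma~\ref{lemma:hessian prob bound}. Using the score identity $\gamma\,\partial_\gamma \log L_1(\Theta) = d_i - \gamma^2 {Y_i^*}^2 d_i + \gamma\,\xvec_i'\dvec\, Y_i^* d_i$, rearrange to $\gamma^2 d_i {Y_i^*}^2 = d_i + \gamma\,\xvec_i'\dvec\, d_i Y_i^* - \gamma\,\partial_\gamma \log L_1(\Theta)$. The three pieces are controlled separately: $d_i \sim \subG(\tfrac14)$ by Hoeffding; $\gamma\,\xvec_i'\dvec\, d_i Y_i^* \sim \subG(\tfrac14(\xvec_i'\dvec)^2(2+\xvec_i'\dvec + g(-\xvec_i'\dvec))^2)$ from the group-(ii) computation; and $\gamma\,\partial_\gamma \log L_1(\Theta) \sim \subExp(16 + 4(\xvec_i'\dvec)^2, 4)$ by Lemma~\ref{lemma:subExponential_score_gamma}. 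Combining the sub-Gaussian terms and then folding in the sub-exponential term with the same closure rules used in Lemma~\ref{lemma:hessian prob bound} yields $\gamma^2 d_i {Y_i^*}^2 \sim \subExp(\tfrac12(\xvec_i'\dvec)^2(2+\xvec_i'\dvec+g(-\xvec_i'\dvec))^2 + \tfrac{65}{2} + 8(\xvec_i'\dvec)^2, 8)$. The only difference from Lemma~\ref{lemma:hessian prob bound} is that $d_i$ now enters with coefficient $1$, contributing variance proxy $\tfrac14$ instead of the $1$ coming from $2d_i$; this is precisely what turns the additive constant $34$ into $\tfrac{65}{2}$. Summing over the independent observations and applying the sub-exponential Chernoff bound to $\yvec_1'\yvec_1 = \gamma^{-2}\sum_i \gamma^2 d_i {Y_i^*}^2$, with the threshold scaled by $\gamma^2$, produces the $\min$-form tail with the stated denominator. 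This variance-proxy bookkeeping is where essentially all the work lives.

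Finally I would assemble the three per-entry bounds by a union bound, replacing each denominator by its maximum over the relevant index set so the bound becomes uniform. Accounting for the $(p+1)^2$ entries of type (i), the $2(p+1)$ entries of type (ii) (the off-diagonal block and its transpose), and the single entry of type (iii), and carrying the factor of $2$ from each two-sided Chernoff bound, produces the prefactors $2(p+1)^2$, $4(p+1)$, and $2$ in the three terms — exactly the stated bound. I expect everything outside group (iii) to follow immediately from the arguments already recorded in Lemmas~\ref{lemma:subGaussian_score} and~\ref{lemma:hessian prob bound}, so the main effort is the careful sub-exponential combination for the scalar $\yvec_1'\yvec_1$.
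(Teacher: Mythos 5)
Your proposal is correct and follows essentially the same route as the paper's own proof: the same three-way entry decomposition, the same reuse of the sub-Gaussian computations from Lemma \ref{lemma:hessian prob bound} for the $\xmat_1'\xmat_1$ and $\xmat_1'\yvec_1$ blocks, the same score-identity rearrangement $\gamma^2 d_i {Y_i^*}^2 = d_i + \gamma\,\xvec_i'\dvec\, d_i Y_i^* - \gamma\,\partial_\gamma \log L_1(\Theta)$ for the scalar $\yvec_1'\yvec_1$, and the same union-bound assembly with prefactors $2(p+1)^2$, $4(p+1)$, and $2$. Your bookkeeping for the sub-exponential parameter, including the observation that the coefficient-$1$ occurrence of $d_i$ (variance proxy $\tfrac14$ rather than the $1$ from $2d_i$ in Lemma \ref{lemma:hessian prob bound}) is what changes $34$ into $\tfrac{65}{2}$, matches the paper exactly.
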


\begin{proof}
	Our argument will follow the same lines as the proof of Lemma \ref{lemma:hessian prob bound}.
	We see that 
	$\begin{bmatrix}
		- \xmat_1' \\ \yvec_1'
	\end{bmatrix}
	\begin{bmatrix}
		- \xmat_1 & \yvec_1
	\end{bmatrix}$
	has three types of entries: (i) $\sum_{i = 1}^n x_{ij}x_{ik}d_i$, (ii) $-\sum_{i = 1}^n x_{ij}{Y_i^*}d_i$, and (iii) $\sum_{i = 1}^n {Y_i^*}^2 d_i$
	where $j,k \in \{0, \ldots, p\}$. We will bound the upper tail probabilities for each type of entry.
	
	Starting with (i), Hoeffding's lemma provides that $x_{ij}x_{ik}d_i \sim \subG\left(\frac{1}{4} x_{ij}^2x_{ik}^2\right)$. Since the $Y_i$ are independent, this implies
	$\sum_{i = 1}^n x_{ij}x_{ik}d_i \sim \subG\left(\frac{1}{4}\sum_{i = 1}^nx_{ij}^2x_{ik}^2\right)$. Applying a Chernoff bound, we have
	$$
	P\left(\left|\sum_{i = 1}^n x_{ij}x_{ik}d_i - \E\left[\sum_{i = 1}^n x_{ij}x_{ik}d_i\right] \right| >c \right)
	\leq 2\exp\left(- \frac{2c^2}{\sum_{i = 1}^nx_{ij}^2x_{ik}^2} \right)\text{.}
	$$
	
	Note that we already covered (ii) in the proof of Lemma \ref{lemma:hessian prob bound}, as $-\sum_{i = 1}^n x_{ij}{Y_i^*}d_i = -\frac{\partial^2}{\partial \delta_j \partial \gamma }\log L_n (\Theta)$.
	
	Moving on to (iii), we know that
	$
	\gamma\frac{\partial}{\partial \gamma} \log L_1(\Theta) = d_i - \gamma^2 {Y_i^*}^2 d_i + \gamma	Y_i^* \xvec_i'\dvec d_i
	$. As such, 
	$
	\gamma^2 {Y_i^*}^2 d_i = -\gamma\frac{\partial}{\partial \gamma} \log L_1(\Theta) + d_i + \gamma Y_i^* \xvec_i'\dvec d_i
	$. We established in the proof of Lemma \ref{lemma:hessian prob bound} that $\gamma\frac{\partial}{\partial \gamma} \log L_1(\Theta) \sim \subExp(16 + 4(\xvec_i'\dvec)^2, 4)$ and $\gamma Y_i^* \xvec_i'\dvec d_i \sim \subG(\frac{1}{4} (\xvec_i'\dvec)^2(2 + \xvec_i'\dvec + g(-\xvec_i'\dvec))^2)$. Hoeffding's lemma provides that $d_i \sim \subG(\frac{1}{4})$. Applying our earlier findings from the proof of Lemma \ref{lemma:hessian prob bound} about sums of sub-exponential and sub-Gaussian random variables and the fact that the $Y_i$ are independent, we can conclude that
	$$
	\sum_{i =1}^n \gamma^2 {Y_i^*}^2 d_i \sim \subExp\left(\frac{65}{2}n + \sum_{i = 1}^n \frac{1}{2} (\xvec_i'\dvec)^2(2 + \xvec_i'\dvec + g(-\xvec_i'\dvec))^2 + 8(\xvec_i'\dvec)^2 ,8\right) \text{.}
	$$ From here, the Chernoff bound yields
	\begin{align}
		P\bigg( \bigg| \sum_{i =1}^n & {Y_i^*}^2 d_i   - \E\bigg[ \sum_{i =1}^n {Y_i^*}^2 d_i \bigg]  \bigg| >  c  \bigg) \notag \\
		& = P\bigg( \bigg| \sum_{i =1}^n \gamma^2 {Y_i^*}^2 d_i  - \E\bigg[ \sum_{i =1}^n \gamma^2 {Y_i^*}^2 d_i \bigg]  \bigg| > c \gamma^2  \bigg) \notag \\
		& \leq 2 \exp\left(-\frac{1}{2} \min \left\{ \frac{c\gamma^2}{8}, \frac{c^2 \gamma^4}{\frac{65}{2}n + \sum_{i = 1}^n \frac{1}{2} (\xvec_i'\dvec)^2(2 + \xvec_i'\dvec + g(- \xvec_i'\dvec))^2 + 8 (\xvec_i'\dvec)^2} \right\} \right) \text{.}
	\end{align}
	We apply the union bound to arrive at the final result.
\end{proof}

\subsubsection{Proof of Theorem \ref{thm:a0_prob}}
\begin{proof}[Proof of Theorem \ref{thm:a0_prob}]
	Let $\Delta \in \reals^{p+2}$. We partition $\Delta$ into $\Delta = (\Delta_{\dvec}', \Delta_{\gamma})'$ for ease of notation.
	We will begin by showing that for all $\Theta$ such that $\gamma > 0$ and $\Delta$ such that $\Delta_{\gamma} > -\gamma$, 
	$(\nabla \ell_n(\Theta + \Delta) - \nabla \ell_n(\Theta))'\Delta \geq \frac{1}{n} \norm{
		\begin{bmatrix}
			- \xmat_1 & \yvec_1
		\end{bmatrix}\Delta}_2^2
	$.
	We see that
	\begin{align*}
		(\nabla \ell_n(\Theta + \Delta) - \nabla \ell_n(\Theta))'\Delta
		= & \text{ } \frac{1}{n}
		\begin{bmatrix}
			-\xmat_1'(\Delta_{\gamma} \yvec_1 - \xmat_1 \Delta_{\dvec}) + \xmat_0' (\mathbf{g}(\dvec + \Delta_{\dvec}) - \mathbf{g}(\dvec)) \\
			-n_1 ((\gamma + \Delta_{\gamma})^{-1} - \gamma^{-1}) + \yvec_1'(\Delta_{\gamma} \yvec_1 - \xmat_1 \Delta_{\dvec})
		\end{bmatrix}' \Delta\\
		= & \text{ } \frac{1}{n} (\Delta_{\gamma} \yvec_1 - \xmat_1 \Delta_{\dvec})'(\Delta_{\gamma} \yvec_1 - \xmat_1 \Delta_{\dvec}) \\
		& + \frac{1}{n} (\xmat_0 \Delta_{\dvec} )' (\mathbf{g}(\dvec + \Delta_{\dvec}) - \mathbf{g}(\dvec)) - \frac{n_1}{n} ((\gamma + \Delta_{\gamma})^{-1} - \gamma^{-1})\Delta_{\gamma} \text{.}
	\end{align*}
	We see that $g(s) \geq 0$ for all $s \in \reals$ as $\phi(s) \geq 0$ and $\Phi(s) \geq 0$. Since $g'(s) = -h(s) \in [-1,0]$ (as we showed in the proof of Theorem \ref{thm:GCD majorization}), $g$ is decreasing on $\reals$. Suppose that $\xvec_i' \Delta_{\dvec} \geq 0$. Since $g$ is decreasing, we have
	$g(-\xvec_i ' \dvec -  \xvec_i'\Delta_{\dvec} ) \geq g(-\xvec_i ' \dvec )$ and, consequently, $(g(-\xvec_i ' \dvec -  \xvec_i'\Delta_{\dvec} ) - g(-\xvec_i ' \dvec ) )\xvec_i'\Delta_{\dvec} \geq 0$. Now suppose that $\xvec_i' \Delta_{\dvec} < 0$. Then $g(-\xvec_i ' \dvec -  \xvec_i'\Delta_{\dvec} ) \leq g(-\xvec_i ' \dvec )$, so $(g(-\xvec_i ' \dvec -  \xvec_i'\Delta_{\dvec} ) - g(-\xvec_i ' \dvec ) )\xvec_i'\Delta_{\dvec} \geq 0$. As such, we see that $\frac{1}{n} (\xmat_0 \Delta_{\dvec} )' (\mathbf{g}(\dvec + \Delta_{\dvec}) - \mathbf{g}(\dvec)) \geq 0$ for all vectors $\xmat_0 \Delta_{\dvec}$.
	Additionally, we recognize that $ - \frac{n_1}{n} ((\gamma + \Delta_{\gamma})^{-1} - \gamma^{-1})\Delta_{\gamma} = \frac{n_1}{n} \frac{\Delta_{\gamma}^2}{\gamma (\gamma + \Delta_{\gamma}) } \geq 0$, since $\gamma >0$ and $\Delta_{\gamma} > - \gamma$.
	As such, we have
	\begin{equation}
		(\nabla \ell_n(\Theta + \Delta) - \nabla \ell_n(\Theta))'\Delta 
		\geq  \frac{1}{n} (\Delta_{\gamma} \yvec_1 - \xmat_1 \Delta_{\dvec})'(\Delta_{\gamma} \yvec_1 - \xmat_1 \Delta_{\dvec}) = \frac{1}{n} \norm{
			\begin{bmatrix}
				- \xmat_1 & \yvec_1
			\end{bmatrix}\Delta}_2^2 \text{.}
	\end{equation}
	
	With this general inequality established, we can begin to assess the lasso estimator specifically. Let $\hat{\Delta} = \hat{\Theta}^{\lasso} - \Theta^*$. We know that $\hat{\Theta}^{\lasso}$ satisfies the following KKT conditions:
	\begin{equation*}
		\nabla \ell_n(\hat{\Theta}^{\lasso}) + \mathbf{v} = \mathbf{0}
	\end{equation*}
	where
	\begin{equation*}
		v_j
		\begin{cases}
			\in [-\lambda_{\lasso}, \lambda_{\lasso}] & \text{ if } \hat{\delta}_j^{\lasso} = 0 \text{, } j \in \{1, \ldots, p\}\\
			=  \lambda_{\lasso}\sgn(\hat{\delta}_j^{\lasso}) & \text{ if } \hat{\delta}_j^{\lasso} \neq 0 \text{, } j \in \{1, \ldots, p\}\\
			= 0 & \text{ if } j \in \{0, p+1\} \text{.}
		\end{cases}
	\end{equation*}
	Note that $\hat{\delta}_j^{\lasso} v_j = \lambda_{\lasso}|\hat{\delta}_j^{\lasso}|$ for $j = 1, \ldots, p$ and that  $\hat{\Delta}_{\mathcal{A}'^c} = \hat{\Theta}_{\mathcal{A}'^c}^{\lasso}$. 
	We apply H\"{o}lder's inequality to find
	\begin{align}
		0 
		& \leq \frac{1}{n} \norm{
			\begin{bmatrix}
				- \xmat_1 & \yvec_1
			\end{bmatrix}\hat{\Delta} }_2^2 \nonumber \\
		& \leq (\nabla \ell_n(\Theta^* + \hat{\Delta}) - \nabla \ell_n(\Theta^*))'\hat{\Delta} = (- \mathbf{v} - \nabla \ell_n(\Theta^*))'\hat{\Delta} \nonumber \\
		& = (- \mathbf{v}_{\mathcal{A}'} - \nabla_{\mathcal{A}'} \ell_n(\Theta^*))'\hat{\Delta}_{\mathcal{A}'} + (- \mathbf{v}_{\mathcal{A}'^c} - \nabla_{\mathcal{A}'^c} \ell_n(\Theta^*))'\hat{\Delta}_{\mathcal{A}'^c} \nonumber \\
		& = (- \mathbf{v}_{\mathcal{A}'} - \nabla_{\mathcal{A}'} \ell_n(\Theta^*))'\hat{\Delta}_{\mathcal{A}'} + (- \mathbf{v}_{\mathcal{A}'^c} - \nabla_{\mathcal{A}'^c} \ell_n(\Theta^*))'\hat{\Theta}_{\mathcal{A}'^c}^{\lasso} \nonumber \\
		& = (- \mathbf{v}_{\mathcal{A}'} - \nabla_{\mathcal{A}'} \ell_n(\Theta^*))'\hat{\Delta}_{\mathcal{A}'} - \mathbf{v}_{\mathcal{A}'^c}'\hat{\Theta}_{\mathcal{A}'^c}^{\lasso} - \nabla_{\mathcal{A}'^c} \ell_n(\Theta^*)'\hat{\Theta}_{\mathcal{A}'^c}^{\lasso} \nonumber \\
		& \leq \norm{- \mathbf{v}_{\mathcal{A}'} - \nabla_{\mathcal{A}'} \ell_n(\Theta^*)}_{\max} \norm{\hat{\Delta}_{\mathcal{A}'}}_1 - \lambda_{\lasso}\norm{\hat{\Theta}_{\mathcal{A}'^c}^{\lasso}}_1 + \norm{ - \nabla_{\mathcal{A}'^c} \ell_n(\Theta^*)}_{\max} \norm{\hat{\Theta}_{\mathcal{A}'^c}^{\lasso}}_1 \nonumber \\
		& \leq (\lambda_{\lasso} + \norm{\nabla_{\mathcal{A}'} \ell_n(\Theta^*)}_{\max})\norm{\hat{\Delta}_{\mathcal{A}'}}_1 + (- \lambda_{\lasso} + \norm{\nabla_{\mathcal{A}'^c} \ell_n(\Theta^*)}_{\max})\norm{\hat{\Delta}_{\mathcal{A}'^c}}_1 \label{a0_thm_ineq} \text{.}
	\end{align}
	Under the event 
	\begin{equation}
		\left\{ \norm{\nabla \ell_n(\Theta^*)}_{\max} \leq \frac{\lambda_{\lasso}}{2}  \right\} \text{,} \label{a0_proof_score}
	\end{equation}
	\eqref{a0_thm_ineq} implies
	\begin{equation*}
		\norm{\hat{\Delta}_{\mathcal{A}'^c}}_1 
		\leq \frac{\lambda_{\lasso} + \norm{\nabla_{\mathcal{A}'} \ell_n(\Theta^*)}_{\max}}{\lambda_{\lasso} - \norm{\nabla_{\mathcal{A}'^c} \ell_n(\Theta^*)}_{\max}}\norm{\hat{\Delta}_{\mathcal{A}'}}_1 
		\leq 3 \norm{\hat{\Delta}_{\mathcal{A}'}}_1 \text{,}
	\end{equation*}
	meaning that $\hat{\Delta} \in \mathcal{C}$. 
	
	For any $\mathbf{u} \in \mathcal{C}$,
	we see that
	\begin{align*}
		\bigg| \frac{1}{n} 
		\norm{
			\begin{bmatrix}
				- \xmat_1 & \yvec_1
			\end{bmatrix}
			\mathbf{u} }_{2}^2
		- & \frac{1}{n} 
		\E \left[ 
		\norm{
			\begin{bmatrix}
				- \xmat_1 & \yvec_1
			\end{bmatrix}
			\mathbf{u} }_{2}^2
		\right]
		\bigg| \\
		& = \left|  
		\mathbf{u}'\left(
		\frac{1}{n}
		\begin{bmatrix}
			- \xmat_1' \\ \yvec_1'
		\end{bmatrix}
		\begin{bmatrix}
			- \xmat_1 & \yvec_1
		\end{bmatrix}
		- \frac{1}{n}
		\E \left[ 
		\begin{bmatrix}
			- \xmat_1' \\ \yvec_1'
		\end{bmatrix}
		\begin{bmatrix}
			- \xmat_1 & \yvec_1
		\end{bmatrix}
		\right]
		\right)\mathbf{u}
		\right|\\
		& \leq  \norm{
			\frac{1}{n}
			\begin{bmatrix}
				- \xmat_1' \\ \yvec_1'
			\end{bmatrix}
			\begin{bmatrix}
				- \xmat_1 & \yvec_1
			\end{bmatrix}
			- \frac{1}{n}
			\E \left[
			\begin{bmatrix}
				- \xmat_1' \\ \yvec_1'
			\end{bmatrix} 
			\begin{bmatrix}
				- \xmat_1 & \yvec_1
			\end{bmatrix}
			\right]
		}_{\max} \norm{\mathbf{u}}_1^2 \text{.}
	\end{align*}
	Since $\mathbf{u} \in \mathcal{C}$,
	$
	\norm{\mathbf{u}}_1 = \norm{\mathbf{u}_{\mathcal{A}'}}_1 + \norm{\mathbf{u}_{\mathcal{A}'^c}}_1\leq 4\norm{\mathbf{u}_{\mathcal{A}'}}_1 \leq 4\sqrt{s+2} \norm{\mathbf{u}_{\mathcal{A}'}}_2 \leq 4\sqrt{s+2} \norm{\mathbf{u}}_2 
	$.
	Therefore, under the additional event
	\begin{equation}
		\left\{\norm{
			\frac{1}{n}
			\begin{bmatrix}
				- \xmat_1' \\ \yvec_1'
			\end{bmatrix}
			\begin{bmatrix}
				- \xmat_1 & \yvec_1
			\end{bmatrix}
			- \frac{1}{n}
			\E \left[
			\begin{bmatrix}
				- \xmat_1' \\ \yvec_1'
			\end{bmatrix} 
			\begin{bmatrix}
				- \xmat_1 & \yvec_1
			\end{bmatrix}
			\right]
		}_{\max} \leq \frac{\kappa}{32(s+2)} 
		\right\} \text{,} \label{event:max bound for RE}
	\end{equation}
	we have
	$$
	\left| \frac{
		\norm{
			\begin{bmatrix}
				- \xmat_1 & \yvec_1
			\end{bmatrix}
			\mathbf{u} }_{2}^2
	}{n\norm{\mathbf{u}}_2^2}
	- \frac{ 
		\E \left[ 
		\norm{
			\begin{bmatrix}
				- \xmat_1 & \yvec_1
			\end{bmatrix}
			\mathbf{u} }_{2}^2
		\right]
	}{n\norm{\mathbf{u}}_2^2}
	\right| \leq \frac{\kappa}{2} \text{.}
	$$
	By \eqref{RE_condition}, this implies that for all $\mathbf{u} \in \mathcal{C}$
	$$
	\frac{
		\norm{
			\begin{bmatrix}
				- \xmat_1 & \yvec_1
			\end{bmatrix}
			\mathbf{u} }_{2}^2
	}{n\norm{\mathbf{u}}_2^2}
	\geq \frac{ 
		\E \left[ 
		\norm{
			\begin{bmatrix}
				- \xmat_1 & \yvec_1
			\end{bmatrix}
			\mathbf{u} }_{2}^2
		\right]
	}{n\norm{\mathbf{u}}_2^2} - \frac{\kappa}{2}
	\geq  \frac{\kappa}{2} \text{.}
	$$
	
	\noindent Since $\hat{\Delta} \in \mathcal{C}$, we can use this result to tighten the lower bound in \eqref{a0_thm_ineq}:
	\begin{align*}
		\frac{\kappa}{2} \norm{\hat{\Delta} }_2^2 
		& \leq \frac{1}{n} \norm{
			\begin{bmatrix}
				- \xmat_1 & \yvec_1
			\end{bmatrix}\hat{\Delta} }_2^2 \\
		& \leq (\lambda_{\lasso} + \norm{\nabla_{\mathcal{A}'} \ell_n(\Theta^*)}_{\max})\norm{\hat{\Delta}_{\mathcal{A}'}}_1 + (- \lambda_{\lasso} + \norm{\nabla_{\mathcal{A}'^c} \ell_n(\Theta^*)}_{\max})\norm{\hat{\Delta}_{\mathcal{A}'^c}}_1 \\
		& \leq  \frac{3}{2} \lambda_{\lasso} \norm{\hat{\Delta}_{\mathcal{A}'}}_1 \\
		& \leq \frac{3}{2} \lambda_{\lasso} \sqrt{s+2} \norm{\hat{\Delta}_{\mathcal{A}'}}_2 \\
		& \leq \frac{3}{2} \lambda_{\lasso} \sqrt{s+2} \norm{\hat{\Delta}}_2 \text{.}
	\end{align*}
	Therefore if \eqref{a0_proof_score} and \eqref{event:max bound for RE} both hold, then $ \norm{\hat{\Delta} }_2 \leq \frac{3 \sqrt{s+2} \lambda_{\lasso}}{\kappa}$.
	
	We want to derive an upper bound for the probability that \eqref{a0_proof_score} does not hold.
	The union bound provides that
	$$
	P\left( \norm{\nabla \ell_n(\Theta^*)}_{\max} > \frac{\lambda_{\lasso}}{2}  \right)
	\leq \sum_{j =0}^{p+1} P\left( |\nabla_j \ell_n(\Theta^*)| > \frac{\lambda_{\lasso}}{2}  \right) \text{.}
	$$
	As such, we can handle each element of $\nabla \ell_n(\Theta^*)$ separately.
	Let $j \in \{0, 1, \ldots, p\} $. Lemma \ref{lemma:subGaussian_score} implies that $\nabla_j \log L_n(\Theta^*) \sim \subG(\norm{ \xvec_{(j)} }_2^2)$. From here, we apply the Chernoff bound to find
	\begin{equation}
		P\bigg( \left| \frac{1}{n} \nabla_j \log L_n(\Theta^*) \right|  >  \frac{\lambda_{\lasso}}{2}  \bigg)
		\leq 2 \exp \left( - \frac{ n^2 \lambda_{\lasso}^2 }{ 8  \norm{\xvec_{(j)} }_2^2 } \right) \text{.}
	\end{equation}
	As an immediate corollary to Lemma \ref{lemma:subExponential_score_gamma}, we have that $ \gamma^* \nabla_{\gamma} \log L_n(\dvec^*, \gamma^*) \sim \subExp( 16n + 4 \sum_{i = 1}^n (\xvec_i ' \dvec)^2, 4) $.
	Applying the Chernoff Bound, we find
	{\small \begin{align}
			P \bigg( \left| \frac{1}{n} \nabla_{\gamma} \log L_n(\Theta^*) \right| > & \frac{\lambda_{\lasso}}{2}  \bigg) \notag\\
			& = P \bigg( \bigg| \gamma^* \nabla_{\gamma} \log L_n(\Theta^*) \bigg| > n \gamma^* \frac{\lambda_{\lasso}}{2}  \bigg) \notag \\
			& \leq \begin{cases}
				2 \exp\left( \frac{ -n^2 \lambda_{\lasso}^2 {\gamma^*}^2  }{8(16n + 4 \sum_{i = 1}^n (\xvec_i'\dvec^*)^2 )} \right) &\text{ if } 0 \leq \frac{n\lambda_{\lasso}\gamma^*}{2} \leq \frac{16n + 4  \sum_{i = 1}^n (\xvec_i'\dvec^*)^2}{4} \text{,} \\
				2 \exp(\frac{-n \lambda_{\lasso} \gamma^*}{16}) &\text{ otherwise }
			\end{cases} \notag \\
			& = 2 \exp\left( - \frac{n}{2} \min \left\{ \frac{\lambda_{\lasso} \gamma^*}{8},  \frac{ \lambda_{\lasso}^2 {\gamma^*}^2  }{64 + 16 n^{-1} \sum_{i = 1}^n (\xvec_i'\dvec^*)^2 } \right \} \right) \text{.} 
	\end{align}}
	
	Pulling this all together, we have  
	{\small \begin{equation*}
			P(\eqref{a0_proof_score}^c) \leq 2(p+1)\exp\left( - \frac{n \lambda_{\lasso}^2 }{8M} \right) + 2 \exp \left( - \frac{n}{2} \min \left\{ \frac{\lambda_{\lasso} \gamma^*}{8}, \frac{\lambda_{\lasso}^2 {\gamma^*}^2 }{64 + 16n^{-1} \sum_{i = 1}^n (\xvec_i'\dvec^*)^2 } \right\} \right) \text{.}
	\end{equation*}}
	
	Additionally, we can use Lemma \ref{lemma:RE matrix prob bound} to bound the probability that \eqref{event:max bound for RE} does not hold:
	{\footnotesize \begin{align*}
			P(&\eqref{event:max bound for RE}^c) \\
			& \leq  2(p+1)^2\exp\left(-\frac{2n \left(\frac{\kappa}{32(s+2)}\right)^2}{ \max_{j,k\in\{0, \ldots, p\}} n^{-1}\sum_{i = 1}^nx_{ij}^2x_{ik}^2} \right) \\
			& + 4(p+1)\exp\left(-\frac{2n \left(\frac{\kappa}{32(s+2)}\right)^2{\gamma^*}^2}{\max_{j\in\{0, \ldots, p\}}n^{-1}\sum_{i = 1}^n x_{ij}^2(2 + \xvec_i'\dvec^* + g(-\xvec_i'\dvec^*))^2} \right)\\
			& + 2 \exp\left(-\frac{n}{2} \min \left\{ \frac{\left(\frac{\kappa}{32(s+2)}\right){\gamma^*}^2}{8}, \frac{\left(\frac{\kappa}{32(s+2)}\right)^2 {\gamma^*}^4}{\frac{65}{2} + n^{-1}\sum_{i = 1}^n \frac{1}{2} (\xvec_i'\dvec^*)^2(2 + \xvec_i'\dvec^* + g(- \xvec_i'\dvec^*))^2 + 8 (\xvec_i'\dvec^*)^2} \right\} \right)  \text{.}
	\end{align*}}
	We then apply the union bound to arrive at our final result.
	
\end{proof}

\subsubsection{Additional lemmas for Theorem \ref{thm:a1_a2_prob}}
\begin{lemma}\label{lemma:g_2nd_bound}
	$\left| g''(s) \right| < 4.3$ for all $s \in \reals$.
\end{lemma}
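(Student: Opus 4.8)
The plan is to obtain an explicit closed form for $g''$ in terms of $g$ itself and then bound it region by region, exploiting the algebraic identities already recorded for $g$. Writing $h(s)=g(s)(s+g(s))$, we established in the proof of Theorem~\ref{thm:GCD majorization} that $g'(s)=-h(s)$; differentiating once more and collecting terms gives
\begin{equation*}
  g''(s) = -h'(s) = g(s)\big[(s+g(s))(s+2g(s))-1\big] = g(s)\big(m^2+h-1\big),
\end{equation*}
where I abbreviate $m=m(s)=s+g(s)$, so that $h=gm$. Thus $|g''|=g\,|m^2+h-1|$, and everything reduces to controlling the three elementary quantities $g>0$, $m>0$, and $h\in(0,1)$ (the last by \citet{Sampford1953}, already invoked above). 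Two further global facts handle the bookkeeping: $g$ is decreasing, so $g(s)\le g(0)=\sqrt{2/\pi}$ for $s\ge0$ and $g(s)\le 2\phi(s)$ there (since $\Phi(s)\ge\tfrac12$); and, by the \citet{Kesavan1985} bound used in Lemma~\ref{lemma:subExponential_score_gamma}, $m(s)=s+g(s)\le\sqrt{2/\pi}$ for $s\le0$.

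I would then split on the sign of $m^2+h-1$. When $m^2+h\ge1$ we have $0\le g''\le g m^2=hm$; for $s\le0$ this is at most $m\le\sqrt{2/\pi}$, while for $s\ge0$ we write $gm^2=g(s+g)^2\le 2\phi(s)\big(s+\sqrt{2/\pi}\big)^2$, an explicit function that decays to $0$ and whose maximum over $[0,\infty)$ is a small constant. When $m^2+h<1$ we have $g''<0$ and $|g''|=g(1-m^2-h)\le g(1-h)$; for $s\ge0$ this is $\le g\le\sqrt{2/\pi}$, so the only delicate estimate is an upper bound for $g(s)\,(1-h(s))$ on $s\le0$, where $g$ is unbounded.

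The main obstacle is precisely this last quantity: as $s\to-\infty$ one has $g(s)\sim -s\to\infty$ while $1-h(s)\to0$, so a naive bound fails and one must quantify the rate $1-h=O(g^{-2})$. Concretely it suffices to show $g^2(1-h)=g^2-g^3(s+g)\le C$ on $s\le0$ (numerically the left-hand side increases from $\approx0.23$ at $s=0$ up to the limit $1$), which then yields $g(1-h)\le C/g\le C\sqrt{\pi/2}$. I would establish this either from a two-sided Mills-ratio expansion (to handle $s\to-\infty$) combined with continuity on a compact interval, or by verifying monotonicity of $s\mapsto g^2(1-h)$ directly. Since the target constant $4.3$ is far from tight, none of these pieces needs to be sharp.

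Finally I would assemble the regional maxima: the $m^2+h\ge1$ branch contributes at most $\max\{\sqrt{2/\pi},\ \max_{s\ge0}2\phi(s)(s+\sqrt{2/\pi})^2\}$ and the $m^2+h<1$ branch at most $\max\{\sqrt{2/\pi},\ C\sqrt{\pi/2}\}$. Each of these explicit constants is comfortably below $4.3$ (the dominant one is roughly $1.6$), giving $|g''(s)|<4.3$ for all $s\in\reals$. If a fully self-contained argument is preferred, one can instead bound the single explicit upper-bound function termwise and take its numerical maximum, which still lands safely under the stated value.
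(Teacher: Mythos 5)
Your algebraic reduction $g''(s)=g(s)\bigl[(s+g(s))(s+2g(s))-1\bigr]=g\,(m^2+h-1)$, with $m=s+g(s)$ and $h=gm$, is correct, and three of your four regions are handled soundly (your bound $g''\le gm^2=hm$ on the branch $m^2+h\ge 1$ is in fact the very quantity the paper controls, written there as $h^2/g$). The genuine gap is the branch $m^2+h<1$ with $s\le 0$: there you need $g(1-h)\le C$, which you reduce to $g^2(1-h)\le C$ on $(-\infty,0]$, and this estimate is supported only by numerics and the limiting value $g^2(1-h)\to 1$; it is never proved. Nor is it a routine verification: as $s\to-\infty$ one has $g(s)\sim -s\to\infty$ while $1-h(s)\to 0$ at exactly the compensating rate, so what is required is a quantified form of $1-h\asymp g^{-2}$, i.e.\ sharp two-sided Mills-ratio inequalities controlling $g(s)$ and $s+g(s)$ simultaneously from above and below on the negative half-line. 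Neither route you sketch delivers this as stated: an expansion-plus-compactness argument gives boundedness only if the expansion's error terms are made explicit and uniform (which is the hard part), and the monotonicity of $s\mapsto g^2(1-h)$ is itself an unproven claim of comparable difficulty. So the proposal is incomplete at precisely its hardest point.

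The fix is already in your hands: \cite{Sampford1953}, which you cite only for $0<h<1$, also proves $h'(s)<0$ for all $s\in\reals$, and since $g''=-h'=g(m^2+h-1)$ with $g>0$, this says exactly that $m^2+h>1$ everywhere --- your problematic branch is empty. This is how the paper's own proof proceeds: knowing the sign of $g''$, it writes $-h'(s)=h(s)m(s)-g(s)\bigl(1-h(s)\bigr)\le h(s)m(s)$, discarding the troublesome $g(1-h)$ term because it enters with a favorable sign, and then bounds $hm=h^2/g\le 1/g(0)<1.26$ for $s\le 0$, while for $s>0$ it shows $h(s)<\tfrac{2.37}{s+1}$ so that $hm<2.37\bigl(1+g(0)\bigr)<4.3$. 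If you invoke Sampford's monotonicity, your case analysis collapses to the $m^2+h\ge 1$ branch and your argument becomes complete, with the better constant $\max\bigl\{\sqrt{2/\pi},\ \max_{s\ge 0}2\phi(s)(s+\sqrt{2/\pi})^2\bigr\}\approx 1.6$; as written, it is not a proof.
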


\begin{proof}
	We note that $g''(s) = -h'(s)$.
	\cite{Sampford1953} showed that $0 < h(s) < 1$ and $h'(s) <0$ for all $s \in \reals$. As such, we only need a lower bound for $h'(s)$. 
	We see that
	\begin{equation*}
		h'(s) 
		= -h(s)(s+g(s)) + g(s)(1 - h(s)) 
		\geq -h(s)(s+g(s))
	\end{equation*}
	since $g(s) \geq 0$ and $0 < h(s) < 1$.
	We will find lower bounds for $h'(s)$ in two cases: (i) $s\leq 0$ and (ii) $s > 0$.
	
	Let $s \leq 0$. Since $g(s)$ is decreasing, we have
	$
	-h'(s) 
	\leq \frac{h^2(s)}{g(s)} 
	< \frac{1}{g(0)} < 1.26 \text{.}
	$
	
	Let $s > 0$. We will start by showing that there exists some constant $c$ such that $h(s) < \frac{c}{s+1}$. We note that $\Phi(s) > \frac{1}{2}$ for $s > 0$. As such,
	$
	h(s)  = \frac{\phi(s)}{\Phi(s)}\left( s + \frac{\phi(s)}{\Phi(s)} \right)
	< 2 \phi(s) (s + 2 \phi(s)) \text{.}
	$
	We will derive bounds for $2s\phi(s)$ and $4\phi^2(s)$ separately.
	
	We start with $2s\phi(s)$. One can show that if there exists some $k$ such that $s^2 - 2\log(s(s+1)) >  k $, then $2s\phi(s) < \frac{2  e^{-k/2}}{\sqrt{2 \pi}} \frac{1}{s + 1}$. We note that for $s >0$
	$$ s^2 - 2\log(s(s+1)) > s^2 - 4\log\left(s+\frac{1}{2}\right) \text{.}$$
	Simple calculus yields that $ \frac{-1 + \sqrt{33} }{4} = \argmin_{s >0} s^2 - 4\log\left(s+\frac{1}{2}\right)$. This gives us $\min_{s > 0} s^2 - 2\log(s(s+1)) > -0.69$ and, by extension, 
	$2s\phi(s) < \frac{1.6}{s+1} \text{.}$
	
	We use a similar approach to bound $4\phi^2(s)$.
	One can show that if there exists some $k$ such that $s^2 - \log(s+1) >  k $, then $4\phi^2(s) < \frac{2  e^{-k}}{\pi } \frac{1}{s + 1}$. Using simple calculus we find $\min_{s > 0} s^2 - \log(s+1) > -0.18$ and, by extension, 
	$4\phi^2(s) < \frac{0.77}{s+1} \text{.}$
	
	Thus $h(s) < \frac{2.37}{s+1}$ for $s > 0$. 
	Leveraging this result and the fact that $g(s)$ and $\frac{1}{s+1}$ are decreasing for $s > 0$, we find that
	\begin{align*}
		- h'(s) 
		& \leq h(s)(s + g(s))\\
		& < \frac{2.37}{s+1}(s+g(s)) \\
		& < 2.37\left( \frac{s}{s+1} + \frac{g(s)}{s+1} \right)\\
		& < 2.37\left( 1 + \frac{g(s)}{s+1} \right)\\
		& < 2.37 (1 + g(0)) < 4.3 \text{.}
	\end{align*}
	Considering our two cases together, we see that $\left| h'(s) \right| < 4.3$ for all $s \in \reals$.
\end{proof}

\begin{lemma}\label{lemma:hessian bound to inverse bound}
	Suppose that the Tobit log-likelihood satisfies \emph{\ref{hessian positive definite}}. If
	{\small\begin{equation}
			\norm{\frac{1}{n}\nabla_{\mathcal{A}'}^2 \log L_n(\Theta^*)  - \E\left[ \frac{1}{n} \nabla_{\mathcal{A}'}^2 \log L_n(\Theta^*) \right] }_{\max} \leq \min\left\{ \frac{1}{6K_1K_2(s+2)}, \frac{1}{6K_1^3K_2(s+2)} \right\} \text{,} 
			\label{event:hessian bound}
	\end{equation}}
	then
	{\small\begin{multline*}
			\norm{  \left( \frac{1}{n} \nabla_{\mathcal{A}'}^2 \log L_n(\Theta^*) \right)^{-1}  - \left( \E\left[ \frac{1}{n} \nabla_{\mathcal{A}'}^2 \log L_n(\Theta^*) \right] \right)^{-1}   }_{\max}\\ 
			\leq 2K_2\norm{\frac{1}{n}\nabla_{\mathcal{A}'}^2 \log L_n(\Theta^*)  - \E\left[ \frac{1}{n} \nabla_{\mathcal{A}'}^2 \log L_n(\Theta^*) \right] }_{\max} \text{.}
	\end{multline*}}
\end{lemma}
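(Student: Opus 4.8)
The plan is to read this as a purely deterministic matrix-perturbation statement and apply a Neumann-type expansion, with the one twist that the error must be controlled in the entrywise max norm rather than an operator norm. To fix notation I would abbreviate $H := \E\left[ \frac{1}{n} \nabla_{\mathcal{A}'}^2 \log L_n(\Theta^*) \right]$, $\hat{H} := \frac{1}{n} \nabla_{\mathcal{A}'}^2 \log L_n(\Theta^*)$, and $E := \hat{H} - H$, so the hypothesis \eqref{event:hessian bound} reads $\norm{E}_{\max} \le \min\{ \tfrac{1}{6K_1K_2(s+2)}, \tfrac{1}{6K_1^3K_2(s+2)} \}$, and both $H$ and $\hat{H}$ are symmetric $(s+2)\times(s+2)$ matrices since $|\mathcal{A}'| = s+2$. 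Assumption \eqref{hessian positive definite} makes $H$ invertible, and the smallness of $\norm{E}_{\max}$ will force $\hat{H}$ to be invertible as well. I would begin from the exact identity $\hat{H}^{-1} - H^{-1} = -H^{-1} E \hat{H}^{-1}$ and substitute $\hat{H}^{-1} = H^{-1} + (\hat{H}^{-1} - H^{-1})$ to get the self-referential relation $\hat{H}^{-1} - H^{-1} = -H^{-1} E H^{-1} - H^{-1} E (\hat{H}^{-1} - H^{-1})$.

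The crux is bounding the leading term $H^{-1} E H^{-1}$ in max norm with precisely the constant $K_2$, and this is the step I expect to be the main obstacle. Bounding $\norm{H^{-1} E H^{-1}}_{\max}$ by a product of induced norms loses factors of $(s+2)$ and never produces $K_2$. Instead I would vectorize: using $\VEC(AXB) = (B' \otimes A)\VEC(X)$ together with the symmetry of $H^{-1}$ gives $\VEC(H^{-1} E H^{-1}) = (H^{-1} \otimes H^{-1})\VEC(E)$, and since $H^{-1} \otimes H^{-1} = (H \otimes H)^{-1} = (H_{\mathcal{A}', \mathcal{A}'}^*)^{-1}$, passing to the $\ell_\infty$ vector norm yields $\norm{H^{-1} E H^{-1}}_{\max} = \norm{\VEC(H^{-1} E H^{-1})}_\infty \le \norm{(H_{\mathcal{A}', \mathcal{A}'}^*)^{-1}}_\infty \norm{E}_{\max} = K_2 \norm{E}_{\max}$. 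Recognizing the Kronecker structure here is what keeps the constant at exactly $K_2$ and the dimension dependence benign.

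Next I would control the remainder. Using the elementary inequality $\norm{AB}_{\max} \le \norm{A}_\infty \norm{B}_{\max}$ gives $\norm{H^{-1} E (\hat{H}^{-1} - H^{-1})}_{\max} \le \norm{H^{-1} E}_\infty \norm{\hat{H}^{-1} - H^{-1}}_{\max}$, and then $\norm{H^{-1} E}_\infty \le \norm{H^{-1}}_\infty \norm{E}_\infty \le \norm{H^{-1}}_\infty (s+2) \norm{E}_{\max}$. To express $\norm{H^{-1}}_\infty$ through the stated constants, note that $1 = \norm{I}_\infty \le \norm{H}_\infty \norm{H^{-1}}_\infty = K_1 \norm{H^{-1}}_\infty$ while $K_2 = \norm{H^{-1} \otimes H^{-1}}_\infty = \norm{H^{-1}}_\infty^2$, so $\norm{H^{-1}}_\infty \le K_1 K_2$. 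The smallness conditions in \eqref{event:hessian bound} then force $\norm{H^{-1} E}_\infty \le \tfrac{1}{6} < 1$ (the first of the two bounds in the minimum already suffices), which simultaneously validates the opening identity by making $I + H^{-1} E$, hence $\hat{H}$, invertible.

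Finally I would combine. Writing $\Delta := \hat{H}^{-1} - H^{-1}$, the two previous steps give $\norm{\Delta}_{\max} \le K_2 \norm{E}_{\max} + \tfrac{1}{6}\norm{\Delta}_{\max}$, so $\norm{\Delta}_{\max} \le \frac{K_2}{1 - 1/6}\norm{E}_{\max} = \tfrac{6}{5} K_2 \norm{E}_{\max} \le 2 K_2 \norm{E}_{\max}$, which is the claimed bound. The remaining manipulations are routine linear-algebra bookkeeping; the genuinely delicate point is the vectorization step, since it is what guarantees that the max-norm perturbation of the inverse is governed by $\norm{(H_{\mathcal{A}', \mathcal{A}'}^*)^{-1}}_\infty = K_2$ rather than by a cruder product of operator norms carrying extra factors of $s+2$.
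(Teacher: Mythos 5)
Your proof is correct, and it takes a genuinely different route from the paper's. Abbreviating, as you do, $H = \E\left[\frac{1}{n}\nabla_{\mathcal{A}'}^2\log L_n(\Theta^*)\right]$, $\hat H = \frac{1}{n}\nabla_{\mathcal{A}'}^2\log L_n(\Theta^*)$, and $E = \hat H - H$: the paper follows the template of Lemma 5 in \cite{Ravikumar2011}, defining a map $F$ on the vectorized ball $\VEC(\mathbb{B}(r))$ of radius $r = 2K_2\norm{E}_{\max}$, expanding $\left(H^{-1}+\Delta\right)^{-1}$ in a Neumann series, showing $F$ maps the ball into itself, and invoking Brouwer's fixed point theorem; the quadratic remainder in that expansion is what generates the $K_1^3$ factor, so the paper genuinely uses both terms of the minimum in \eqref{event:hessian bound}. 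You instead work directly from the resolvent identity $\hat H^{-1}-H^{-1} = -H^{-1}E\hat H^{-1}$, rewrite it as the self-referential relation $\hat H^{-1}-H^{-1} = -H^{-1}EH^{-1}-H^{-1}E\left(\hat H^{-1}-H^{-1}\right)$, and absorb the second term via the contraction factor $\norm{H^{-1}E}_{\infty}\le\frac{1}{6}$. Both arguments hinge on the same essential idea, the Kronecker vec-trick $\VEC\left(H^{-1}EH^{-1}\right) = \left(H_{\mathcal{A}',\mathcal{A}'}^*\right)^{-1}\VEC(E)$ (valid because $H$ is symmetric), which is what yields the constant $K_2$ rather than a product of operator norms carrying extra factors of $s+2$. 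What your route buys: it avoids the fixed-point theorem entirely, it gives the sharper constant $\frac{6}{5}K_2$ in place of $2K_2$, it needs only the first term of the minimum in \eqref{event:hessian bound} (your observation that $\norm{H^{-1}}_{\infty}=\sqrt{K_2}$ together with $1\le K_1\sqrt{K_2}$ gives $\norm{H^{-1}}_{\infty}\le K_1K_2$, replacing the paper's $K_1^3$ bookkeeping), and it makes the invertibility of $\hat H$ explicit via $\hat H = H\left(I+H^{-1}E\right)$ and Neumann convergence --- a prerequisite for writing the opening identity, which you correctly flag, whereas in the paper invertibility of $\hat H$ emerges as a byproduct of the fixed point. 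What the paper's route buys is chiefly structural uniformity: the same self-map/Brouwer construction reappears in the proof of Theorem \ref{thm:a1_a2_prob}, so the paper reuses one argument pattern, while your perturbation argument is the leaner standalone proof of this particular lemma.
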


\begin{proof}
	Define the map $F: \VEC(\mathbb{B}(r)) \to \reals^{(s+2)^2}$ by
	\begin{multline}
		F(\VEC(\Delta)) = (H_{\mathcal{A}', \mathcal{A}'}^*)^{-1} \bigg[ \VEC\left( \left( \left( \E\left[ \frac{1}{n} \nabla_{\mathcal{A}'}^2 \log L_n(\Theta^*) \right] \right)^{-1}  + \Delta  \right)^{-1} \right) \\ - \VEC\left( \frac{1}{n} \nabla_{\mathcal{A}'}^2 \log L_n(\Theta^*) \right)  \bigg] + \VEC(\Delta) \label{Fvec_definition}
	\end{multline}
	where $\mathbb{B}(r)$ is the convex compact set $\mathbb{B}(r)= \{\Delta: \norm{\Delta}_{\max} \leq r \} \subseteq \reals^{(s+2) \times (s+2)}$ with \\
	$
	r = 2K_2\norm{ \frac{1}{n} \nabla_{\mathcal{A}'}^2 \log L_n(\Theta^*)  - \E\left[ \frac{1}{n} \nabla_{\mathcal{A}'}^2 \log L_n(\Theta^*) \right] }_{\max}
	$ and $\VEC(\mathbb{B}(r)) = \{ \VEC(\Delta) : \Delta \in \mathbb{B}(r) \} \subseteq \reals^{(s+2)^2} $.
	We aim to show that $F(\VEC(\mathbb{B}(r))) \subseteq \VEC(\mathbb{B}(r))$ under \eqref{event:hessian bound}.
	
	Let $\Delta \in \mathbb{B}(r)$. Define $J = \sum_{j = 0}^{\infty} (-1)^j \left( \E\left[ \frac{1}{n} \nabla_{\mathcal{A}'}^2 \log L_n(\Theta^*) \right] \Delta\right)^j$. 
	We recognize that for two matrices $\mathbf{A}$ and $\mathbf{B}$, $ \norm{\mathbf{AB} }_{\infty} \leq \norm{ \mathbf{A} }_{\infty} \norm{ \mathbf{B} }_1$. As such, we see that under \eqref{event:hessian bound} we have
	{\small \begin{align*}
			\norm{ \E\left[ \frac{1}{n} \nabla_{\mathcal{A}'}^2 \log L_n(\Theta^*) \right] \Delta  }_{\infty}
			& \leq \norm{ \E\left[ \frac{1}{n} \nabla_{\mathcal{A}'}^2 \log L_n(\Theta^*) \right] }_{\infty} \norm{\Delta}_{1}\\
			& \leq K_1 \norm{\Delta}_{1}\\
			& \leq K_1 (s+2)r\\
			& = K_1 (s+2)2K_2\norm{\frac{1}{n}\nabla_{\mathcal{A}'}^2 \log L_n(\Theta^*)  - \E\left[ \frac{1}{n} \nabla_{\mathcal{A}'}^2 \log L_n(\Theta^*) \right] }_{\max}\\
			& \leq \frac{1}{3} \text{.}
	\end{align*}}
	Therefore $J$ is a convergent matrix series, with $J = (I + \E\left[ \frac{1}{n} \nabla_{\mathcal{A}'}^2 \log L_n(\Theta^*) \right] \Delta)^{-1}$. 
	We use the series expansion of $J$ to rewrite a key part of $F( \VEC(\Delta) ) $:
	\begin{align}
		\bigg( \bigg( \E\bigg[ \frac{1}{n} \nabla_{\mathcal{A}'}^2 & \log L_n (\Theta^*) \bigg] \bigg)^{-1}  +  \Delta  \bigg)^{-1} \notag \\
		& = \left\{ \left( \E\left[ \frac{1}{n} \nabla_{\mathcal{A}'}^2 \log L_n(\Theta^*) \right] \right)^{-1} \left(I + \E\left[ \frac{1}{n} \nabla_{\mathcal{A}'}^2 \log L_n(\Theta^*) \right] \Delta \right) \right\}^{-1} \notag \\
		& = \left(I + \E\left[ \frac{1}{n} \nabla_{\mathcal{A}'}^2 \log L_n(\Theta^*) \right] \Delta \right)^{-1} \E\left[ \frac{1}{n} \nabla_{\mathcal{A}'}^2 \log L_n(\Theta^*) \right] \notag \\
		& =  \E\left[ \frac{1}{n} \nabla_{\mathcal{A}'}^2 \log L_n(\Theta^*) \right] - \E\left[ \frac{1}{n} \nabla_{\mathcal{A}'}^2 \log L_n(\Theta^*) \right]\Delta\E\left[ \frac{1}{n} \nabla_{\mathcal{A}'}^2 \log L_n(\Theta^*) \right] \notag \\
		& \text{ \; \; } + \sum_{j = 2}^{\infty} (-1)^j \left( \E\left[ \frac{1}{n} \nabla_{\mathcal{A}'}^2 \log L_n(\Theta^*) \right] \Delta\right)^j \E\left[ \frac{1}{n} \nabla_{\mathcal{A}'}^2 \log L_n(\Theta^*) \right] \notag \\
		& =  \E\left[ \frac{1}{n} \nabla_{\mathcal{A}'}^2 \log L_n(\Theta^*) \right] - \E\left[ \frac{1}{n} \nabla_{\mathcal{A}'}^2 \log L_n(\Theta^*) \right]\Delta\E\left[ \frac{1}{n} \nabla_{\mathcal{A}'}^2 \log L_n(\Theta^*) \right] \notag \\
		& \text{ \; \; } +  \left( \E\left[ \frac{1}{n} \nabla_{\mathcal{A}'}^2 \log L_n(\Theta^*) \right] \Delta\right)^2 J \E\left[ \frac{1}{n} \nabla_{\mathcal{A}'}^2 \log L_n(\Theta^*) \right] \text{.}  \label{plug in for F}
	\end{align} 
	For ease of presentation, define $R(\Delta) = \left( \E\left[ \frac{1}{n} \nabla_{\mathcal{A}'}^2 \log L_n(\Theta^*) \right] \Delta\right)^2 J \E\left[ \frac{1}{n} \nabla_{\mathcal{A}'}^2 \log L_n(\Theta^*) \right]$. We can express \eqref{plug in for F} in a vectorized form:
	\begin{multline*}
		\VEC\bigg( \bigg( \left( \E\left[ \frac{1}{n} \nabla_{\mathcal{A}'}^2 \log L_n(\Theta^*) \right] \right)^{-1} + \Delta  \bigg)^{-1} \bigg) - \VEC \left( \frac{1}{n} \nabla_{\mathcal{A}'}^2 \log L_n(\Theta^*) \right) \\
		= \VEC\left(\E\left[ \frac{1}{n} \nabla_{\mathcal{A}'}^2 \log L_n(\Theta^*) \right]\right) - \VEC \left( \frac{1}{n} \nabla_{\mathcal{A}'}^2 \log L_n(\Theta^*) \right)\\
		- \VEC\left( \E\left[ \frac{1}{n} \nabla_{\mathcal{A}'}^2 \log L_n(\Theta^*) \right]\Delta\E\left[ \frac{1}{n} \nabla_{\mathcal{A}'}^2 \log L_n(\Theta^*) \right] \right) + \VEC(R(\Delta)) \text{.}
	\end{multline*}
	Applying the ``vec-trick'' for the Kronecker product, we find
	$$
	\VEC\left( \E\left[ \frac{1}{n} \nabla_{\mathcal{A}'}^2 \log L_n(\Theta^*) \right]\Delta\E\left[ \frac{1}{n} \nabla_{\mathcal{A}'}^2 \log L_n(\Theta^*) \right] \right) = H_{\mathcal{A}', \mathcal{A}'}^* \VEC(\Delta) \text{,}
	$$
	giving us
	\begin{align}
		\VEC\bigg( \bigg(  \bigg( \E\bigg[ \frac{1}{n} \nabla_{\mathcal{A}'}^2 \log L_n & (\Theta^*) \bigg] \bigg)^{-1}  + \Delta  \bigg)^{-1} \bigg) - \VEC \left( \frac{1}{n} \nabla_{\mathcal{A}'}^2 \log L_n(\Theta^*) \right) \notag \\
		& = \VEC\left(\E\left[ \frac{1}{n} \nabla_{\mathcal{A}'}^2 \log L_n(\Theta^*) \right]\right) - \VEC \left( \frac{1}{n} \nabla_{\mathcal{A}'}^2 \log L_n(\Theta^*) \right) \notag \\
		& \text{\; \; } -  H_{\mathcal{A}', \mathcal{A}'}^* \VEC(\Delta) + \VEC(R(\Delta)) \text{.} \label{vec_simplification}
	\end{align}
	Plugging \eqref{vec_simplification} into \eqref{Fvec_definition}, we have
	{\small $$
		F(\VEC(\Delta)) = (H_{\mathcal{A}', \mathcal{A}'}^*)^{-1} \left[ \VEC\left(\E\left[ \frac{1}{n} \nabla_{\mathcal{A}'}^2 \log L_n(\Theta^*) \right]\right) - \VEC \left( \frac{1}{n} \nabla_{\mathcal{A}'}^2 \log L_n(\Theta^*) \right) + \VEC(R(\Delta)) \right] \text{.}
		$$}
	
	With this new expression for $F(\VEC(\Delta))$ we aim show that $\norm{F(\VEC(\Delta))}_{\max} \leq r$. We start by following the proof of Lemma 5 in \cite{Ravikumar2011} to bound $\norm{R(\Delta)}_{\max}$.
	Let $e_i$ denote the vector with $1$ in position $i$ and zeros elsewhere. We have
	\begin{align*}
		\norm{R(\Delta)}_{\max} 
		& = \max_{i,j} |e_i' \left( \E\left[ \frac{1}{n} \nabla_{\mathcal{A}'}^2 \log L_n(\Theta^*) \right] \Delta\right)^2 J \E\left[ \frac{1}{n} \nabla_{\mathcal{A}'}^2 \log L_n(\Theta^*) \right] e_j | \\
		& \leq \max_{i} \norm{e_i'  \E\left[ \frac{1}{n} \nabla_{\mathcal{A}'}^2 \log L_n(\Theta^*) \right] \Delta}_{\max} \\
		& \text{ \; \; }\times \max_{j} \norm{  \E\left[ \frac{1}{n} \nabla_{\mathcal{A}'}^2 \log L_n(\Theta^*) \right] \Delta J \E\left[ \frac{1}{n} \nabla_{\mathcal{A}'}^2 \log L_n(\Theta^*) \right] e_j }_{1}\\
		& \leq \max_{i} \norm{e_i'  \E\left[ \frac{1}{n} \nabla_{\mathcal{A}'}^2 \log L_n(\Theta^*) \right]}_{1} \norm{\Delta}_{\max}\\
		& \text{ \; \; }\times \max_{j} \norm{  \E\left[ \frac{1}{n} \nabla_{\mathcal{A}'}^2 \log L_n(\Theta^*) \right] \Delta J \E\left[ \frac{1}{n} \nabla_{\mathcal{A}'}^2 \log L_n(\Theta^*) \right] e_j }_{1}\\
		& \leq \norm{\E\left[ \frac{1}{n} \nabla_{\mathcal{A}'}^2 \log L_n(\Theta^*) \right]}_{\infty} \norm{\Delta}_{\max}\\
		& \text{ \; \; }\times \norm{  \E\left[ \frac{1}{n} \nabla_{\mathcal{A}'}^2 \log L_n(\Theta^*) \right] \Delta J \E\left[ \frac{1}{n} \nabla_{\mathcal{A}'}^2 \log L_n(\Theta^*) \right] }_1 \text{.}
	\end{align*}
	We know that for any matrix $ \mathbf{A} $, $\norm{\mathbf{A}}_1 = \norm{\mathbf{A}'}_{\infty}$. As such, by the sub-multiplicativity of the $\ell_{\infty}$-norm we see that
	\begin{align*}
		\norm{R(\Delta)}_{\max} 
		& \leq \norm{\E\left[ \frac{1}{n} \nabla_{\mathcal{A}'}^2 \log L_n(\Theta^*) \right]}_{\infty} \norm{\Delta}_{\max}\\
		& \text{ \; \; }\times \norm{  \E\left[ \frac{1}{n} \nabla_{\mathcal{A}'}^2 \log L_n(\Theta^*) \right] J' \Delta' \E\left[ \frac{1}{n} \nabla_{\mathcal{A}'}^2 \log L_n(\Theta^*) \right] }_{\infty}\\
		& \leq \norm{\E\left[ \frac{1}{n} \nabla_{\mathcal{A}'}^2 \log L_n(\Theta^*) \right]}_{\infty} \norm{\Delta}_{\max}\\
		& \text{ \; \; }\times \norm{  \E\left[ \frac{1}{n} \nabla_{\mathcal{A}'}^2 \log L_n(\Theta^*) \right]}_{\infty} \norm{J'}_{\infty} \norm{\Delta'}_{\infty} \norm{\E\left[ \frac{1}{n} \nabla_{\mathcal{A}'}^2 \log L_n(\Theta^*) \right] }_{\infty}\\
		& \leq \norm{\Delta}_{\max} K_1^3 \norm{J'}_{\infty} \norm{\Delta}_1\\
		& \leq (s+2) \norm{\Delta}_{\max}^2 K_1^3 \norm{J'}_{\infty} \text{.}
	\end{align*}
	Focusing on $J$, we find
	\begin{align*}
		\norm{J'}_{\infty} 
		& \leq \sum_{j = 0}^\infty \norm{\Delta' \E\left[ \frac{1}{n} \nabla_{\mathcal{A}'}^2 \log L_n(\Theta^*) \right]}_{\infty}^j \\
		& \leq \frac{1}{1 - \norm{\Delta' \E\left[ \frac{1}{n} \nabla_{\mathcal{A}'}^2 \log L_n(\Theta^*) \right]}_{\infty}} \\
		& \leq \frac{1}{1 - \norm{\Delta'}_{\infty} \norm{\E\left[ \frac{1}{n} \nabla_{\mathcal{A}'}^2 \log L_n(\Theta^*) \right]}_{\infty}} \\
		& = \frac{1}{1 - \norm{\Delta}_{1} \norm{\E\left[ \frac{1}{n} \nabla_{\mathcal{A}'}^2 \log L_n(\Theta^*) \right]}_{\infty}}\\
		& \leq \frac{3}{2}
	\end{align*}
	since $\norm{\Delta}_{1} \norm{\E\left[ \frac{1}{n} \nabla_{\mathcal{A}'}^2 \log L_n(\Theta^*) \right]}_{\infty} \leq \frac{1}{3}$, as we showed previously. Plugging this into our bound for $\norm{R(\Delta)}_{\max}$, we have
	$
	\norm{R(\Delta)}_{\max} \leq \frac{3}{2} (s+2) \norm{\Delta}_{\max}^2 K_1^3  \leq \frac{3}{2} (s+2)r^2 K_1^3 \text{.}
	$
	
	Returning to our expression for $F(\VEC(\Delta))$, we see that under \eqref{event:hessian bound}	
	{\small \begin{align*}
			\lVert F(&\VEC(\Delta))\rVert_{\max} \\
			& = \norm{
				(H_{\mathcal{A}', \mathcal{A}'}^*)^{-1} \left[ \VEC\left(\E\left[ \frac{1}{n} \nabla_{\mathcal{A}'}^2 \log L_n(\Theta^*) \right]\right) - \VEC \left( \frac{1}{n} \nabla_{\mathcal{A}'}^2 \log L_n(\Theta^*) \right) + \VEC(R(\Delta)) \right]
			}_{\max}\\
			& \leq K_2 \left( \norm{ \frac{1}{n} \nabla_{\mathcal{A}'}^2 \log L_n(\Theta^*) - \E\left[ \frac{1}{n} \nabla_{\mathcal{A}'}^2 \log L_n(\Theta^*) \right]}_{\max}  + \norm{R(\Delta)}_{\max} \right)\\
			& \leq K_2\norm{ \frac{1}{n} \nabla_{\mathcal{A}'}^2 \log L_n(\Theta^*) - \E\left[ \frac{1}{n} \nabla_{\mathcal{A}'}^2 \log L_n(\Theta^*) \right]}_{\max}  + \frac{3}{2} (s+2)r^2 K_1^3K_2 \\
			& \leq r \text{,}
	\end{align*}}
	meaning that $F(\VEC(\mathbb{B}(r))) \subseteq \VEC(\mathbb{B}(r))$. 
	
	Since $F(\VEC(\mathbb{B}(r))) \subseteq \VEC(\mathbb{B}(r))$ under \eqref{event:hessian bound}, Brouwer's fixed point theorem guarantees that there exists a fixed point $\hat{\Delta} \in \mathbb{B}(r)$ such that $F(\VEC(\hat{\Delta} )) = \VEC(\hat{\Delta} )$. That is, $\hat{\Delta}$ satisfies
	{\small $$
		\mathbf{0} = (H_{\mathcal{A}', \mathcal{A}'}^*)^{-1} \bigg[ \VEC\left( \left( \left( \E\left[ \frac{1}{n} \nabla_{\mathcal{A}'}^2 \log L_n(\Theta^*) \right] \right)^{-1}  + \hat{\Delta}  \right)^{-1} \right) \\ - \VEC\left( \frac{1}{n} \nabla_{\mathcal{A}'}^2 \log L_n(\Theta^*) \right)  \bigg]\text{.}
		$$}
	Because $(H_{\mathcal{A}', \mathcal{A}'}^*)^{-1}$ is positive definite, we see that this condition holds only if $\hat{\Delta} = \left( \frac{1}{n} \nabla_{\mathcal{A}'}^2 \log L_n(\Theta^*) \right)^{-1} - \left( \E\left[ \frac{1}{n} \nabla_{\mathcal{A}'}^2 \log L_n(\Theta^*) \right] \right)^{-1} $.
	As such, we can conclude that when \eqref{event:hessian bound} holds
	$$
	\norm{\left( \frac{1}{n} \nabla_{\mathcal{A}'}^2 \log L_n(\Theta^*) \right)^{-1} - \left( \E\left[ \frac{1}{n} \nabla_{\mathcal{A}'}^2 \log L_n(\Theta^*) \right] \right)^{-1}}_{\max} = \norm{\hat{\Delta}}_{\max} \leq r \text{.}
	$$
\end{proof}

\subsubsection{Proof of Theorem \ref{thm:a1_a2_prob}}
\begin{proof}[Proof of Theorem \ref{thm:a1_a2_prob}]
	Recall our two events of interest $\mathcal{E}_1 = \{|| \nabla_{\mathcal{A}'^c} \ell_n (\hat{\Theta}^{\oracle}) ||_{\max} < a_1 \lambda \}$ and
	$\mathcal{E}_2 = \{ || \hat{\dvec}_{\mathcal{A}}^{\oracle} ||_{\min} > a\lambda\}$.
	We define the following key events for the proof:\\
	{\small $
		\mathcal{E}_1^* =  \left\{ \norm{ \frac{1}{n} \nabla_{\mathcal{A}'} \log L_n(\Theta^*) }_{\max} \leq C_1(s)  \right \} \text{,} 
		$
		$
		\mathcal{E}_2^* =  \bigg\{ \norm{\frac{1}{n}\nabla_{\mathcal{A}'}^2 \log L_n(\Theta^*)  - \E\left[ \frac{1}{n} \nabla_{\mathcal{A}'}^2 \log L_n(\Theta^*) \right]   }_{\max} 
		\leq \min\left\{ \frac{1}{6K_1K_2(s+2)}, \frac{1}{6K_1^3K_2(s+2)}, \frac{Q_2}{4 K_2 (s+2)}  \right\} 
		\bigg \}\text{,} 
		$
		$
		\mathcal{E}_3^* =  \left\{ \norm{ \frac{1}{n} \nabla_{\mathcal{A}'} \log L_n(\Theta^*) }_{\max} < \frac{1}{3 Q_2} ( \norm{ \dvec_{\mathcal{A}}^* }_{\min} - a\lambda  ) \right \}\text{,} 
		$
		$
		\mathcal{E}_4^* =  \left\{ \norm{ \frac{1}{n} \nabla_{\mathcal{A}'^c} \log L_n(\Theta^*) }_{\max} < \frac{a_1 \lambda}{2}\right\} \cap 
		\left\{ \norm{ \frac{1}{n} \nabla_{\mathcal{A}'} \log L_n(\Theta^*) }_{\max} \leq \frac{a_1 \lambda}{9 Q_3 + 2} \right\} \text{,} 
		$
		and}
	\begin{multline*}
		\mathcal{E}_5^* =  \bigg\{  
		\bigg \lVert 
		\begin{smallmatrix}
			\frac{1}{n}
			\xmat_{(\mathcal{A} \cup \{ 0 \})^c}'
		\end{smallmatrix}
		\bigl[\begin{smallmatrix}
			\mathbf{D}(\dvec^*) & \mathbf{0}\\
			\mathbf{0} & \mathbf{I}
		\end{smallmatrix}\bigr]
		\bigl[\begin{smallmatrix}
			\xmat_{(\mathcal{A} \cup \{ 0 \})} & -\yvec
		\end{smallmatrix}\bigr]
		- 
		\E\left[
		\begin{smallmatrix}
			\frac{1}{n}
			\xmat_{(\mathcal{A} \cup \{ 0 \})^c}'
		\end{smallmatrix}
		\bigl[\begin{smallmatrix}
			\mathbf{D}(\dvec^*) & \mathbf{0}\\
			\mathbf{0} & \mathbf{I}
		\end{smallmatrix}\bigr]
		\bigl[\begin{smallmatrix}
			\xmat_{(\mathcal{A} \cup \{ 0 \})} & -\yvec
		\end{smallmatrix}\bigr]
		\right]
		\bigg \rVert_{\max} \\
		\leq \frac{1}{2(s+2)}
		\norm{
			\E\left[
			\begin{smallmatrix}
				\frac{1}{n}
				\xmat_{(\mathcal{A} \cup \{ 0 \})^c}'
			\end{smallmatrix}
			\bigl[\begin{smallmatrix}
				\mathbf{D}(\dvec^*) & \mathbf{0}\\
				\mathbf{0} & \mathbf{I}
			\end{smallmatrix}\bigr]
			\bigl[\begin{smallmatrix}
				\xmat_{(\mathcal{A} \cup \{ 0 \})} & -\yvec
			\end{smallmatrix}\bigr]
			\right]
		}_{\infty}
		\bigg\}\text{.} 
	\end{multline*}
	We introduce these new events because we can easily bound $P( {\mathcal{E}_1^*}^c ), \ldots, P( {\mathcal{E}_5^*}^c )$
	---a luxury $\mathcal{E}_1$ and $\mathcal{E}_2$ do not afford us. 
	Our strategy for bounding $P(\mathcal{E}_1^c)$ and $P((\mathcal{E}_1 \cap \mathcal{E}_2)^c)$ will be to show that
	$\mathcal{E}_1^* \cap \mathcal{E}_2^* \cap \mathcal{E}_3^* \subseteq \mathcal{E}_2$ and 
	$\mathcal{E}_1^* \cap \mathcal{E}_2^* \cap \mathcal{E}_4^* \cap \mathcal{E}_5^* \subseteq \mathcal{E}_1$ 
	and then leverage the fact that, as a consequence, $P(\mathcal{E}_1^c) \leq P( (\mathcal{E}_1^* \cap \mathcal{E}_2^* \cap \mathcal{E}_4^* \cap \mathcal{E}_5^*)^c)$ and $P((\mathcal{E}_1 \cap \mathcal{E}_2)^c) \leq P( (\mathcal{E}_1^* \cap \mathcal{E}_2^* \cap \mathcal{E}_3^* \cap \mathcal{E}_4^* \cap \mathcal{E}_5^*)^c)$.
	
	We will start by showing that $\mathcal{E}_1^* \cap \mathcal{E}_2^* \cap \mathcal{E}_3^* \subseteq \mathcal{E}_2$.
	Throughout this section of the proof, keep in mind that $C_1(s) = \min \left\{ \frac{1}{38.7 Q_1 Q_2^2 (s+1)}, \frac{ {\gamma^*}^3 }{144 Q_2^2}, \frac{\gamma^*}{6 Q_2} \right \}$.
	From \eqref{oracle_prop}, we know that 
	$$ \nabla_{\mathcal{A}'}\log L_n(\hat{\Theta}^{\oracle} ) = \mathbf{0}\text{.} $$
	Define $F: \mathbb{B}(r) \to \reals^{p+2} $ by
	$F(\Delta) = (F_{\mathcal{A}'}(\Delta), \mathbf{0}) $ 
	where 
	\begin{equation}
		F_{\mathcal{A}'}(\Delta) = - ( \nabla_{\mathcal{A}'}^2 \log L_n(\Theta^*) )^{-1} \nabla_{\mathcal{A}'}  \log L_n(\Theta^* + \Delta) + \Delta_{A'} \label{F_defn}
	\end{equation}
	and $\mathbb{B}(r)$ is the convex compact set $\mathbb{B}(r) = \{\Delta \in \reals^{p+2} : \norm{ \Delta_{\mathcal{A}'}  }_{\max} \leq r, \Delta_{\mathcal{A}'^c} = \mathbf{0}\}$ 
	with $r = 3 Q_2 \norm{ \frac{1}{n} \nabla_{\mathcal{A}'} \log L_n(\Theta^*) }_{\max}$.
	
	We aim to show that $F(\mathbb{B}(r)) \subseteq \mathbb{B}(r)$ when both $\mathcal{E}_1^*$ and $\mathcal{E}_2^*$ hold.
	Let $\Delta \in \mathbb{B}(r)$. Applying the mean value theorem with respect to $\theta_j$, for $j = 0, 1, \ldots, p+1$, we have
	\begin{equation}
		\nabla_j \log L_n(\Theta^* + \Delta) = \nabla_j \log L_n(\Theta^*) + [\nabla^2 \log L_n(\Theta^*)]_j \Delta + R_j(\tilde{\Delta}^{(j)})
	\end{equation}
	where $[\nabla^2 \log L_n(\Theta^*)]_j$ denotes the $j$th row of $\nabla^2 \log L_n(\Theta^*)$, $\tilde{\Delta}^{(j)}$ is on the line segment joining $\mathbf{0}$ and $\Delta$, and
	$R_j(\tilde{\Delta}^{(j)}) =  [\nabla^2 \log L_n(\Theta^* + \tilde{\Delta}^{(j)}) - \nabla^2 \log L_n(\Theta^*)]_j \Delta \text{.}$
	Define $R(\tilde{\Delta}) = (R_0(\tilde{\Delta}^{(0)}), R_1(\tilde{\Delta}^{(1)}), \ldots, R_{p+1}(\tilde{\Delta}^{(p+1)}) )' $. Then
	\begin{equation}
		\nabla \log L_n(\Theta^* + \Delta) = \nabla \log L_n(\Theta^*) + \nabla^2 \log L_n(\Theta^*) \Delta + R(\tilde{\Delta}) \text{.}
		\label{mvt_for_score}
	\end{equation} 
	For ease of notation, we rewrite $R(\tilde{\Delta})$ as $R(\tilde{\Delta}) = (R_{\mathcal{A}'}'(\tilde{\Delta}),R_{\mathcal{A}'^c}'(\tilde{\Delta}))'$.
	Since $\Delta_{\mathcal{A}'^c} = \mathbf{0}$, we see that $[\nabla^2 \log L_n(\Theta^*)]_{\mathcal{A}'} \Delta = \nabla_{\mathcal{A}'}^2 \log L_n(\Theta^*) \Delta_{\mathcal{A}'} $. 
	As such, we have
	\begin{equation}
		\nabla_{\mathcal{A}'} \log L_n(\Theta^* + \Delta) = \nabla_{\mathcal{A}'} \log L_n(\Theta^*) + \nabla_{\mathcal{A}'}^2 \log L_n(\Theta^*) \Delta_{\mathcal{A}'} + R_{\mathcal{A}'}(\tilde{\Delta}) \text{.} \label{MVT_for_logL}
	\end{equation}
	Plugging \eqref{MVT_for_logL} back into \eqref{F_defn}, we find
	\begin{align*}
		F_{\mathcal{A}'}(\Delta) 
		& = - ( \nabla_{\mathcal{A}'}^2 \log L_n(\Theta^*) )^{-1} ( \nabla_{\mathcal{A}'} \log L_n(\Theta^*) + \nabla_{\mathcal{A}'}^2 \log L_n(\Theta^*) \Delta_{\mathcal{A}'} + R_{\mathcal{A}'}(\tilde{\Delta}) ) + \Delta_{A'} \\
		& = - ( \nabla_{\mathcal{A}'}^2 \log L_n(\Theta^*) )^{-1} ( \nabla_{\mathcal{A}'} \log L_n(\Theta^*) + R_{\mathcal{A}'}(\tilde{\Delta}) )
	\end{align*}
	
	We will use this expression for $F_{\mathcal{A}'}(\Delta)$ to show that $\norm{ F_{\mathcal{A}'}(\Delta)  }_{\max} \leq r$. We start by bounding $\norm{ R(\tilde{\Delta}) }_{\max}$.  We partition $\Delta$ into $\Delta = (\Delta_{\dvec}', \Delta_{\gamma})'$. Let $j \in \{0,1,\ldots,p \} $. Based on \eqref{logL_Hessian}, we see that
	$$ [ \nabla^2 \log L_n(\Theta)]_j = 
	- \xvec_{(j)}'
	\begin{bmatrix}
		\mathbf{D}(\dvec) & \mathbf{0}\\
		\mathbf{0} & \mathbf{I}
	\end{bmatrix}
	\begin{bmatrix}
		\xmat & -\yvec
	\end{bmatrix} \text{.}$$
	Since $\Delta_{A'^c} = \mathbf{0}$, we can simplify $R_j(\tilde{\Delta}^{(j)})$. We find that
	\begin{align*}
		R_j(\tilde{\Delta}^{(j)}) 
		& =
		- \xvec_{(j)}'
		\begin{bmatrix}
			\mathbf{D}(\dvec^* + \tilde{\Delta}_{\dvec}^{(j)} ) - \mathbf{D}(\dvec^* ) & \mathbf{0}\\
			\mathbf{0} & \mathbf{0}
		\end{bmatrix}
		\begin{bmatrix}
			\xmat & -\yvec
		\end{bmatrix} \Delta \\
		& = - \xmat_{0, j}'
		\begin{bmatrix}
			\mathbf{D}(\dvec^* + \tilde{\Delta}_{\dvec}^{(j)} ) - \mathbf{D}(\dvec^* )
		\end{bmatrix}
		\xmat_{0, \mathcal{A} \cup \{0\} } \Delta_{\dvec_{\mathcal{A} \cup \{0\}}} \text{.}
	\end{align*}
	Applying the mean value theorem, we derive the following inequality:
	$$|R_j(\tilde{\Delta}^{(j)})| \leq \Delta_{\dvec_{\mathcal{A} \cup \{0\}}}'\xmat_{0, \mathcal{A} \cup \{0\}}'\diag\{ |\xmat_{0, j}| \circ |g''(\dvec^* + \bar{\Delta}_{\dvec}^{(j)} ) | \} \xmat_{0, \mathcal{A} \cup \{0\}}  \Delta_{\dvec_{\mathcal{A} \cup \{0\}}} \text{,}$$
	where $\bar{\Delta}_{\dvec}^{(j)}$ is on the line segment joining $\mathbf{0}$ and $\tilde{\Delta}^{(j)} $.
	Lemma \ref{lemma:g_2nd_bound} provides that $|g''(s)| < 4.3$ for all $s$, giving us
	\begin{equation}
		|R_j(\tilde{\Delta}^{(j)})| \leq 4.3 n \lambda_{\max}\left( \frac{1}{n} \xmat_{0, \mathcal{A} \cup \{0\}}'\diag\{ |\xmat_{0, j}|\} \xmat_{0, \mathcal{A} \cup \{0\}}   \right) \norm{\Delta_{\dvec_{\mathcal{A} \cup \{0\}}}}_2^2  \text{.} \label{eqn:Rj MVT bound}
	\end{equation}
	
	Because $ \xmat_{0, \mathcal{A} \cup \{0\}}'\diag\{ |\xmat_{0, j}|\} \xmat_{0, \mathcal{A} \cup \{0\}}$ is a random matrix, we must take the additional step of bounding its largest eigenvalue with a non-random quantity.
	We see that 
	\begin{align*}
		\xmat_{ (\mathcal{A} \cup \{0\}) }' \diag\{ |\xmat_{(j)} |\} \xmat_{(\mathcal{A} \cup \{0\})}
		& = \begin{bmatrix}
			\xmat_{0, \mathcal{A} \cup \{0\}}' & \xmat_{1, \mathcal{A} \cup \{0\}}'
		\end{bmatrix}
		\diag\{ |\xmat_{(j)} |\}
		\begin{bmatrix}
			\xmat_{0, \mathcal{A} \cup \{0\}} \\  \xmat_{1, \mathcal{A} \cup \{0\}}
		\end{bmatrix}\\
		& = \xmat_{0, \mathcal{A} \cup \{0\}}' \diag\{ |\xmat_{0, j} |\} \xmat_{0, \mathcal{A} \cup \{0\}} \\ 
		& \text{ \; \; } + \xmat_{1, \mathcal{A} \cup \{0\}}' \diag\{ |\xmat_{1, j} |\} \xmat_{1, \mathcal{A} \cup \{0\}} \text{.}
	\end{align*}
	It is straightforward to show that if $\mathbf{A} \in \reals ^{n \times p}$ and $ \mathbf{D} \in \reals^{n \times n}$ is a diagonal matrix with non-negative entries, then $\mathbf{A}'\mathbf{D}\mathbf{A}$ is symmetric and positive semidefinite.
	Weyl's inequality provides that if $\mathbf{A} \in \reals^{n \times n}$ and $\mathbf{B} \in \reals^{n \times n}$ are Hermitian, then $\lambda_{\max}( \mathbf{A} + \mathbf{B} ) \geq \lambda_{\max}(\mathbf{A}) + \lambda_{\min}(\mathbf{B})$. Additionally, if $\mathbf{B}$ is positive semidefinite, then $\lambda_{\max}(\mathbf{A}) + \lambda_{\min}(\mathbf{B}) \geq \lambda_{\max}(\mathbf{A})$. As such, for any $j \in \{0, \ldots, p\}$, we have
	$\lambda_{\max}\left(\frac{1}{n}\xmat_{0, \mathcal{A} \cup \{0\}}' \diag\{ |\xmat_{0, j} |\} \xmat_{0, \mathcal{A} \cup \{0\}}   \right)
	\leq \lambda_{\max}\left(\frac{1}{n} \xmat_{ (\mathcal{A} \cup \{0\}) }' \diag\{ |\xmat_{(j)} |\} \xmat_{(\mathcal{A} \cup \{0\})}   \right)
	\leq Q_1
	$. Plugging this last inequality into \eqref{eqn:Rj MVT bound}, we have 
	$|R_j(\tilde{\Delta}^{(j)})| \leq 4.3 n Q_1 \norm{\Delta_{\dvec_{\mathcal{A} \cup \{0\}}}}_2^2  \text{.}$
	Finally, we note that  
	$\norm{\Delta_{\dvec_{\mathcal{A} \cup \{0\}}}}_2^2 \leq (s+1)r^2$ and, by extension,
	$
	|R_j(\tilde{\Delta}^{(j)})| \leq 4.3 n Q_1 (s+1)r^2 \text{.}
	$
	
	Next, we bound $|R_{p+1}(\tilde{\Delta}^{(p+1)})|$. From \eqref{logL_Hessian}, we know that
	$$[ \nabla^2 \log L_n(\Theta)]_{p+1} =
	\begin{bmatrix}
		\yvec_1' \xmat_1 & - \yvec_1 '\yvec_1 - n_1 \gamma^{-2} 
	\end{bmatrix}
	$$
	As such, we see that
	\begin{align*}
		R_{p+1}(\tilde{\Delta}^{(p+1)}) 
		& = \begin{bmatrix}
			\mathbf{0} &  - \yvec_1 '\yvec_1 - n_1 (\gamma^* + \tilde{\Delta}_{\gamma}^{(p+1)} )^{-2} - ( - \yvec_1 '\yvec_1 - n_1 {\gamma^*}^{-2})
		\end{bmatrix} \Delta
		\\
		& = - n_1 ( (\gamma^* + \tilde{\Delta}_{\gamma}^{(p+1)} )^{-2} - {\gamma^*}^{-2}) \Delta_{\gamma} \text{.}
	\end{align*}
	Applying the mean value theorem, we find 
	$ |R_{p+1}(\tilde{\Delta}^{(p+1)})| \leq 2 n_1 (\gamma^* + \bar{\Delta}_{\gamma} ^{(p+1)} )^{-3} \Delta_{\gamma}^2$, where $\bar{\Delta}_{\gamma} ^{(p+1)}$ is on the line segment joining $\mathbf{0}$ and $\tilde{\Delta}_{\gamma}^{(p+1)}$. If $\mathcal{E}_1^*$ holds, then by the definitions of $r$ and $C_1(s)$, $|\bar{\Delta}_{\gamma}^{(p+1)}| \leq |\Delta_{\gamma}| \leq r \leq \frac{\gamma^*}{2} $ and, consequently,
	$|R_{p+1}(\tilde{\Delta}^{(p+1)})| \leq 16 n_1 {\gamma^*}^{-3}  \Delta_{\gamma}^2 \text{.}$
	
	Pulling this all together, we see that under $\mathcal{E}_1^*$, 
	\begin{equation}
		\norm{R(\tilde{\Delta})}_{\max} \leq \max \{ 4.3n Q_1 (s+1) r^2, 16 n {\gamma^*}^{-3} r^2\} \text{.} \label{eqn:R delta bound}
	\end{equation}
	We apply the triangle inequality to find
	\begin{align}
		\norm{ F_{\mathcal{A}'}(\Delta) }_{\max} 
		& = \norm{  - ( \nabla_{\mathcal{A}'}^2 \log L_n(\Theta^*) )^{-1} ( \nabla_{\mathcal{A}'} \log L_n(\Theta^*) + R_{\mathcal{A}'}(\tilde{\Delta}) )  }_{\max} \notag \\ 
		& \leq\widehat{Q}_2 \norm{ \frac{1}{n} \nabla_{\mathcal{A}'} \log L_n(\Theta^*) }_{\max} + \widehat{Q}_2 \norm{ \frac{1}{n} R_{\mathcal{A}'}(\tilde{\Delta})}_{\max} \text{,} \label{eqn:F bound with Q2 hat}
	\end{align}
	where $\widehat{Q}_2 = \norm{ \left(\frac{1}{n} \nabla_{\mathcal{A}'}^2 \log L_n(\Theta^*) \right)^{-1} }_{\infty}$. Lemma \ref{lemma:hessian bound to inverse bound} implies that if $\mathcal{E}_2^*$ holds, then
	$$
	\norm{  \left( \frac{1}{n}\nabla_{\mathcal{A}'}^2 \log L_n(\Theta^*) \right)^{-1}  - \left( \E\left[ \frac{1}{n}\nabla_{\mathcal{A}'}^2 \log L_n(\Theta^*) \right] \right)^{-1}   }_{\max} \leq \frac{Q_2}{2(s+2)} \text{.}
	$$
	As an immediate consequence of this, we see that
	for $1 \leq i \leq s+2$
	$$
	\sum_{j} \left| \left[\left( \frac{1}{n} \nabla_{\mathcal{A}'}^2 \log L_n(\Theta^*) \right)^{-1}\right]_{i,j} \right| \leq \sum_{j} \left| \left[\left( \E\left[ \frac{1}{n} \nabla_{\mathcal{A}'}^2 \log L_n(\Theta^*) \right] \right)^{-1}\right]_{i,j} \right|
	+ \frac{Q_2}{2} \leq \frac{3}{2}Q_2
	$$
	and, by extension, 
	$
	\widehat{Q}_2 \leq \frac{3}{2}Q_2
	$. Returning to \eqref{eqn:F bound with Q2 hat}, we see that when both $\mathcal{E}_1^*$ and $\mathcal{E}_2^*$ hold
	\begin{align*}
		\norm{ F_{\mathcal{A}'}(\Delta) }_{\max} 
		& \leq \frac{3}{2} Q_2 \norm{ \frac{1}{n} \nabla_{\mathcal{A}'} \log L_n(\Theta^*) }_{\max} + \frac{3}{2} Q_2 \norm{ \frac{1}{n} R_{\mathcal{A}'}(\tilde{\Delta})}_{\max} \\
		& \leq \frac{r}{2} + \frac{3}{2}Q_2 \max \{ 4.3 Q_1 (s+1) r^2, 16 {\gamma^*}^{-3} r^2\}\\
		& \leq r \text{,}
	\end{align*}
	where the last inequality follows immediately from the definitions of $r$ and $C_1(s)$.
	Thus we have shown that $ F(\mathbb{B}(r)) \subseteq \mathbb{B}(r) $ under $\mathcal{E}_1^*$ and $\mathcal{E}_2^*$.
	
	Since $ F(\mathbb{B}(r)) \subseteq \mathbb{B}(r) $, Brouwer's fixed point theorem guarantees that there exists a fixed point $\hat{\Delta} \in \mathbb{B}(r)$ such that $F(\hat{\Delta} ) = \hat{\Delta} $. As a result, $ \nabla_{\mathcal{A}'}\log L_n(\Theta^* + \hat{\Delta}) = \mathbf{0}$ and $\hat{\Delta}_{\mathcal{A}'^c} = \mathbf{0}$. Since $\hat{\Theta}^{\oracle}$ is the unique solution to $ \nabla_{\mathcal{A}'}\log L_n(\Theta ) = \mathbf{0}$, it follows that $\Theta^* + \hat{\Delta} = \hat{\Theta}^{\oracle} $. As such,
	$$\norm{ \hat{\Theta}^{\oracle} - \Theta^* }_{\max} = \norm{\hat{\Delta}}_{\max} \leq r \text{.} $$
	If $\mathcal{E}_3^*$ also holds, then
	$r < \norm{ \dvec_{\mathcal{A}}^* }_{\min} - a\lambda $ and, by extension, $\norm{ \hat{\dvec}_{\mathcal{A}}^{\oracle} }_{\min} > a\lambda$. That is, $\mathcal{E}_1^* \cap \mathcal{E}_2^* \cap \mathcal{E}_3^* \subseteq \mathcal{E}_2$.
	
	We will now shift to proving that $\mathcal{E}_1^* \cap \mathcal{E}_2^* \cap \mathcal{E}_4^* \cap \mathcal{E}_5^* \subseteq \mathcal{E}_1$. 
	We have shown that if both $\mathcal{E}_1^*$ and $\mathcal{E}_2^*$ hold, then $\Theta^* + \hat{\Delta} = \hat{\Theta}^{\oracle}$. Since $\hat{\Delta}_{\mathcal{A}'^c} = \mathbf{0}$, \eqref{mvt_for_score} gives us that
	\begin{equation}
		\nabla_{\mathcal{A}'} \log L_n(\hat{\Theta}^{\oracle}) = \nabla_{\mathcal{A}'} \log L_n(\Theta^*) + \nabla_{\mathcal{A}'}^2 \log L_n(\Theta^*) \hat{\Delta}_{\mathcal{A}'} + R_{\mathcal{A}'}(\tilde{\Delta})
		\label{LA_expansion}
	\end{equation}
	and 
	\begin{align*}
		\nabla_{\mathcal{A}'^c} \log L_n(\hat{\Theta}^{\oracle}) 
		& = \nabla_{\mathcal{A}'^c} \log L_n(\Theta^*) + \left[\nabla^2 \log L_n(\Theta^*)\right]_{\mathcal{A}'^c} \hat{\Delta} + R_{\mathcal{A}'^c}(\tilde{\Delta})  \\
		& = \nabla_{\mathcal{A}'^c} \log L_n(\Theta^*)  
		- 
		\begin{smallmatrix}
			\xmat_{(\mathcal{A} \cup \{ 0\})^c}'
		\end{smallmatrix}
		\bigl[ 
		\begin{smallmatrix}
			\mathbf{D}(\dvec^*) & \mathbf{0}\\
			\mathbf{0} & \mathbf{I}
		\end{smallmatrix} 
		\bigr]
		\bigl[ 
		\begin{smallmatrix}
			\xmat_{\mathcal{A} \cup \{ 0\}} & -\yvec
		\end{smallmatrix}
		\bigr]
		\hat{\Delta}_{\mathcal{A}'} + R_{\mathcal{A}'^c}(\tilde{\Delta}),
	\end{align*}
	where $R(\tilde{\Delta})$ is defined as before, with $\tilde{\Delta}^{(j)}$ on the line segment joining $\mathbf{0}$ and $\hat{\Delta}$ for $j =0, 1, \ldots, p+1$. 
	Using the fact that $\nabla_{\mathcal{A}'} \log L_n(\hat{\Theta}^{\oracle}) = \mathbf{0}$, we solve for $\hat{\Delta}_{\mathcal{A}}$ in \eqref{LA_expansion} and substitute it into the previous expression to find
	\begin{multline}
		\nabla_{\mathcal{A}'^c} \log L_n(\hat{\Theta}^{\oracle}) 
		= \nabla_{\mathcal{A}'^c} \log L_n(\Theta^*) + R_{\mathcal{A}'^c}(\tilde{\Delta})  
		- 
		\begin{smallmatrix}
			\xmat_{(\mathcal{A} \cup \{ 0 \})^c}'
		\end{smallmatrix}
		\bigl[\begin{smallmatrix}
			\mathbf{D}(\dvec^*) & \mathbf{0}\\
			\mathbf{0} & \mathbf{I}
		\end{smallmatrix}\bigr]
		\bigl[\begin{smallmatrix}
			\xmat_{(\mathcal{A} \cup \{ 0 \})} & -\yvec
		\end{smallmatrix}\bigr] \\ 
		\times
		\left(
		\nabla_{\mathcal{A}'}^2 \log L_n(\Theta^*)
		\right)^{-1} 
		\left(
		- \nabla_{\mathcal{A}'} \log L_n(\Theta^*) -  R_{\mathcal{A}'}(\tilde{\Delta})
		\right) \text{.} \label{eqn:Ac score expression}
	\end{multline}
	We have already shown that if both $\mathcal{E}_1^*$ and $\mathcal{E}_2^*$ hold, then $\norm{\frac{1}{n} R(\tilde{\Delta})}_{\max} \leq \frac{1}{3 Q_2} r $. As a consequence of this, we see that
	$\norm{\frac{1}{n}R(\tilde{\Delta})}_{\max} \leq \norm{ \frac{1}{n} \nabla_{\mathcal{A}'} \log L_n(\Theta^*) }_{\max} $ under $\mathcal{E}_1^*$ and $\mathcal{E}_2^*$.
	Returning to \eqref{eqn:Ac score expression}, the triangle inequality provides that, under $\mathcal{E}_1^*$ and $\mathcal{E}_2^*$, 
	{\small \begin{align}
			\norm{ \frac{1}{n} \nabla_{\mathcal{A}'^c} \log L_n(\hat{\Theta}^{\oracle}) }_{\max}
			& \leq  \norm{ \frac{1}{n} \nabla_{\mathcal{A}'^c} \log L_n(\Theta^*) }_{\max} + \norm{\frac{1}{n} R_{\mathcal{A}'^c}(\tilde{\Delta})}_{\max} \notag \\
			& \text{ \: \: } + \widehat{Q}_3 \left(  \norm{ \frac{1}{n} \nabla_{\mathcal{A}'} \log L_n(\Theta^*) }_{\max} + \norm{\frac{1}{n} R_{\mathcal{A}'}(\tilde{\Delta})}_{\max}   \right) \notag \\
			& \leq (2\widehat{Q}_3 + 1) \norm{ \frac{1}{n} \nabla_{\mathcal{A}'} \log L_n(\Theta^*) }_{\max} +  \norm{ \frac{1}{n} \nabla_{\mathcal{A}'^c} \log L_n(\Theta^*) }_{\max}  \label{eqn:logLAc bound with Q3 hat} \text{,}
	\end{align}}
	where 
	$\widehat{Q}_3 = \norm{\frac{1}{n}
		\xmat_{(\mathcal{A} \cup \{ 0 \})^c}'
		\begin{bmatrix}
			\mathbf{D}(\dvec^*) & \mathbf{0}\\
			\mathbf{0} & \mathbf{I}
		\end{bmatrix}
		\begin{bmatrix}
			\xmat_{\mathcal{A} \cup \{ 0 \}} & -\yvec
		\end{bmatrix}
		\left(
		\frac{1}{n}\nabla_{\mathcal{A}'}^2 \log L_n(\Theta^*)
		\right)^{-1}
	}_{\infty} \text{.}$
	
	The sub-multiplicativity of the $\ell_{\infty}$-norm gives us
	\begin{equation*}
		\widehat{Q}_3 \leq \norm{ \frac{1}{n}
			\xmat_{(\mathcal{A} \cup \{ 0 \})^c}'
			\begin{bmatrix}
				\mathbf{D}(\dvec^*) & \mathbf{0}\\
				\mathbf{0} & \mathbf{I}
			\end{bmatrix}
			\begin{bmatrix}
				\xmat_{\mathcal{A} \cup \{ 0 \}} & -\yvec
			\end{bmatrix}
		}_{\infty} \widehat{Q}_2 \text{.}
	\end{equation*}
	Recall that under $\mathcal{E}_2^*$, $ \widehat{Q}_2 \leq \frac{3}{2}Q_2$. We see that under $\mathcal{E}_5^*$,
	{\small $$
		\norm{ 
			\begin{smallmatrix}
				\frac{1}{n}
				\xmat_{(\mathcal{A} \cup \{ 0 \})^c}'
			\end{smallmatrix}
			\bigl[\begin{smallmatrix}
				\mathbf{D}(\dvec^*) & \mathbf{0}\\
				\mathbf{0} & \mathbf{I}
			\end{smallmatrix}\bigr]
			\bigl[\begin{smallmatrix}
				\xmat_{(\mathcal{A} \cup \{ 0 \})} & -\yvec
			\end{smallmatrix}\bigr]
		}_{\infty}
		\leq
		\frac{3}{2} 
		\norm{ 
			\E\left[
			\begin{smallmatrix}
				\frac{1}{n}
				\xmat_{(\mathcal{A} \cup \{ 0 \})^c}'
			\end{smallmatrix}
			\bigl[\begin{smallmatrix}
				\mathbf{D}(\dvec^*) & \mathbf{0}\\
				\mathbf{0} & \mathbf{I}
			\end{smallmatrix}\bigr]
			\bigl[\begin{smallmatrix}
				\xmat_{(\mathcal{A} \cup \{ 0 \})} & -\yvec
			\end{smallmatrix}\bigr]
			\right]
		}_{\infty} \text{.}
		$$}
	Therefore if $\mathcal{E}_2^*$ and $\mathcal{E}_5^*$ both hold, then $\widehat{Q}_3 \leq \frac{9}{4}Q_3$. Considering this finding in conjunction with \eqref{eqn:logLAc bound with Q3 hat}, we see that if $\mathcal{E}_1^*$, $\mathcal{E}_2^*$, $\mathcal{E}_4^*$, and $\mathcal{E}_5^*$ all hold, then
	\begin{align*}
		\norm{ \frac{1}{n} \nabla_{\mathcal{A}'^c} \log L_n(\hat{\Theta}^{\oracle}) }_{\max}
		& \leq \left(\frac{9}{2} Q_3 + 1\right) \norm{ \frac{1}{n} \nabla_{\mathcal{A}'} \log L_n(\Theta^*) }_{\max} +  \norm{ \frac{1}{n} \nabla_{\mathcal{A}'^c} \log L_n(\Theta^*) }_{\max} \\
		& < a_1 \lambda
		\text{.}
	\end{align*}
	That is, $\mathcal{E}_1^* \cap \mathcal{E}_2^* \cap \mathcal{E}_4^* \cap \mathcal{E}_5^* \subseteq \mathcal{E}_1$.
	
	\bigskip
	\noindent So far, we have shown that
	$\mathcal{E}_1^* \cap \mathcal{E}_2^* \cap \mathcal{E}_3^* \subseteq \mathcal{E}_2$ and 
	$\mathcal{E}_1^* \cap \mathcal{E}_2^* \cap \mathcal{E}_4^* \cap \mathcal{E}_5^* \subseteq \mathcal{E}_1$. We now move to deriving bounds for $P(\mathcal{E}_1^c)$ and $P((\mathcal{E}_1 \cap \mathcal{E}_1)^c)$ using the $\mathcal{E}_i^*$. 
	
	We will start by bounding $P(\mathcal{E}_1^c)$. Note that $P( \mathcal{E}_1^c ) \leq P\left(	(\mathcal{E}_1^* \cap \mathcal{E}_4^*)^c \right) + P({\mathcal{E}_2^*}^c) + P({\mathcal{E}_5^*}^c)$. We will bound the terms on the right hand side of this expression one-by-one.
	We note that
	{\footnotesize $$
		\mathcal{E}_1^* \cap \mathcal{E}_4^* = \left\{ \norm{ \frac{1}{n} \nabla_{\mathcal{A}'} \log L_n(\Theta^*) }_{\max} \leq \min \left\{ C_1(s), \frac{a_1 \lambda}{9Q_3 +2} \right\} \right\}
		\cap \left\{ \norm{ \frac{1}{n} \nabla_{\mathcal{A}'^c} \log L_n(\Theta^*) }_{\max} < \frac{a_1 \lambda}{2}\right\} \text{.}
		$$} 
	We focus first on the event involving $\norm{ \frac{1}{n} \nabla_{\mathcal{A}'} \log L_n(\Theta^*) }_{\max}$. Recall that we defined $C_2(s, \lambda) =\min \left\{ C_1(s), \frac{a_1 \lambda}{9Q_3 +2} \right\}$ for ease of notation.
	The union bound provides that
	$$
	P\left( \norm{ \frac{1}{n} \nabla_{\mathcal{A}'} \log L_n(\Theta^*) }_{\max} > C_2(s, \lambda)  \right)
	\leq \sum_{j \in \mathcal{A}'} P\left( \left| \frac{1}{n} \nabla_j \log L_n(\Theta^*) \right| > C_2(s, \lambda)  \right) \text{.}
	$$
	As such, we can handle each element of $\frac{1}{n} \nabla_{\mathcal{A}'} \log L_n(\Theta^*)$ separately.
	Let $j \in \mathcal{A} \cup \{0\} $. Lemma \ref{lemma:subGaussian_score} implies that $\nabla_j \log L_n(\Theta^*) \sim \subG(\norm{ \xvec_{(j)} }_2^2)$. From here, we apply the Chernoff bound to find
	\begin{equation}
		P\bigg( \left| \frac{1}{n} \nabla_j \log L_n(\Theta^*) \right|  >  C_2(s, \lambda)  \bigg)
		\leq 2 \exp \left( - \frac{ n^2 C_2^2(s, \lambda) }{ 2  \norm{\xvec_{(j)} }_2^2 } \right) \text{.}
		\label{eqn:scoreA chernoff bound}
	\end{equation}
	As an immediate corollary to Lemma \ref{lemma:subExponential_score_gamma}, we know $ \gamma^* \nabla_{\gamma} \log L_n(\dvec^*, \gamma^*) \sim \subExp( 16n + 4 \sum_{i = 1}^n (\xvec_i ' \dvec)^2, 4) $.
	Applying the Chernoff Bound, we find
	{\small \begin{align}
			P \bigg( \left| \frac{1}{n} \nabla_{\gamma} \log L_n(\Theta^*) \right| > & C_2(s, \lambda) \bigg) \notag\\
			& = P \bigg( \bigg| \gamma^* \nabla_{\gamma} \log L_n(\Theta^*) \bigg| > n \gamma^* C_2(s, \lambda) \bigg) \notag \\
			& \leq \begin{cases}
				2 \exp\left( \frac{ -n^2 C_2^2(s, \lambda) {\gamma^*}^2  }{2(16n + 4 \sum_{i = 1}^n (\xvec_i'\dvec^*)^2 )} \right) &\text{ if } 0 \leq nC_2(s, \lambda)\gamma^* \leq \frac{16n + 4  \sum_{i = 1}^n (\xvec_i'\dvec^*)^2}{4} \text{,} \\
				2 \exp(\frac{-n C_2(s, \lambda) \gamma^*}{8}) &\text{ otherwise }
			\end{cases} \notag \\
			& = 2 \exp\left( - \frac{n}{2} \min \left\{ \frac{C_2(s, \lambda) \gamma^*}{4},  \frac{ C_2^2(s, \lambda) {\gamma^*}^2  }{16 + 4 n^{-1} \sum_{i = 1}^n (\xvec_i'\dvec^*)^2 } \right \} \right) \text{.} \label{eqn:score gamma chernoff bound}
	\end{align}}
	
	We give the event involving $\norm{ \frac{1}{n} \nabla_{\mathcal{A}'^c} \log L_n(\Theta^*) }_{\max}$ a similar treatment. Let $j \in \mathcal{A}'^c$. Again using Lemma \ref{lemma:subGaussian_score} and the Chernoff bound, we find
	\begin{equation}
		P\left( \left| \frac{1}{n} \nabla_j  \log L_n(\Theta^*) \right| > \frac{a_1 \lambda}{2} \right) \leq 2\exp \left( - \frac{n^2 a_1^2 \lambda^2}{8 \norm{\xvec_{(j)}}_2^2 } \right) \text{.} \label{eqn:scoreAc chernoff bound}
	\end{equation}
	Combining \eqref{eqn:scoreA chernoff bound}, \eqref{eqn:score gamma chernoff bound}, and \eqref{eqn:scoreAc chernoff bound} with the union bound, we find
	\begin{align}
		P\left(	(\mathcal{E}_1^* \cap \mathcal{E}_4^*)^c \right) 
		& \leq 2(s+1) \exp \left( - \frac{ n C_2^2(s, \lambda) }{ 2 M } \right) \notag \\
		& + 2 \exp\left( - \frac{n}{2} \min \left\{ \frac{C_2(s, \lambda) \gamma^*}{4},  \frac{ C_2^2(s, \lambda) {\gamma^*}^2  }{16 + 4 n^{-1} \sum_{i = 1}^n (\xvec_i'\dvec^*)^2 } \right \} \right) \notag \\ 
		& + 2(p-s)\exp \left( - \frac{n a_1^2 \lambda^2}{8 M } \right) \text{.} \label{eqn:score prob bound 1}
	\end{align}
	
	Our next step is to bound $P({\mathcal{E}_2^*}^c)$ and $P({\mathcal{E}_5^*}^c)$.
	\noindent Using Lemma \ref{lemma:hessian prob bound}, we derive the following upper bound for $P({\mathcal{E}_2^*}^c)$:
	{\footnotesize \begin{align}
			P({\mathcal{E}_2^*}^c)
			& \leq  2 (s + 1)^2 \exp \left( - \frac{2nC_{Q_2}^2(s)}{ \max_{j,k\in\mathcal{A} \cup \{0\}} n^{-1} \sum_{i = 1}^{n} x_{ij}^2x_{ik}^2 } \right) \notag \\ 
			& + 4(s+1) \exp \left( - \frac{2nC_{Q_2}^2(s){\gamma^*}^2}{ \max_{j\in\mathcal{A} \cup \{0\} } n^{-1}\sum_{i = 1}^{n} x_{ij}^2 (2 + \xvec_i'\dvec^* + g(- \xvec_i'\dvec^*) )^2 } \right) \notag \\ 
			& + 2 \exp\left(-\frac{n}{2} \min \left\{ \frac{C_{Q_2}(s){\gamma^*}^2}{8}, \frac{C_{Q_2}^2(s) {\gamma^*}^4}{34 + n^{-1}\sum_{i = 1}^n \frac{1}{2}(\xvec_i'\dvec^*)^2 (2 + \xvec_i'\dvec^* + g(- \xvec_i'\dvec^*))^2 + 8 (\xvec_i'\dvec^*)^2} \right\} \right) \text{,}
	\end{align}}
	where $C_{Q_2}(s) = \min\left\{ \frac{1}{6K_1K_2(s+2)}, \frac{1}{6K_1^3K_2(s+2)}, \frac{Q_2}{4 K_2 (s+2)}  \right\}$.
	Likewise, we use Lemma \ref{lemma:Q_3 matrix prob bound} to derive the following upper bound for $P({\mathcal{E}_5^*}^c)$:
	\begin{align}
		P({\mathcal{E}_5^*}^c)
		& \leq  2(s+1)(p-s) \exp \left( - \frac{2 n C_{Q_3}^2(s)}{ \max_{j \in (\mathcal{A} \cup \{0\})^c, k\in\mathcal{A} \cup \{0\}} n^{-1}\sum_{i = 1}^{n} x_{ij}^2x_{ik}^2 } \right) \notag \\
		& + 2(p - s) \exp \left( - \frac{2nC_{Q_3}^2(s){\gamma^*}^2}{ \max_{j\in(\mathcal{A} \cup \{0\})^c } n^{-1}\sum_{i = 1}^{n} x_{ij}^2 (2 + \xvec_i'\dvec^* + g(- \xvec_i'\dvec^*) )^2 } \right) \text{,}
	\end{align}
	where $C_{Q_3}(s) = \frac{1}{2(s+2)}
	\norm{
		\E\left[
		\frac{1}{n}
		\xmat_{(\mathcal{A} \cup \{ 0 \})^c}'
		\begin{bmatrix}
			\mathbf{D}(\dvec^*) & \mathbf{0}\\
			\mathbf{0} & \mathbf{I}
		\end{bmatrix}
		\begin{bmatrix}
			\xmat_{(\mathcal{A} \cup \{ 0 \})} & -\yvec
		\end{bmatrix}
		\right]
	}_{\infty}$.
	With that, we have all the pieces need to compute our upper bound for $P( \mathcal{E}_1^c )$.
	
	Next we will bound $P((\mathcal{E}_1 \cap \mathcal{E}_2)^c)$. Since $P((\mathcal{E}_1 \cap \mathcal{E}_2)^c) \leq P\left(	(\mathcal{E}_1^* \cap \mathcal{E}_3^* \cap \mathcal{E}_4^*)^c \right) + P({\mathcal{E}_2^*}^c) + P({\mathcal{E}_5^*}^c)$, the only term that we still need to bound is $ P\left(	(\mathcal{E}_1^* \cap \mathcal{E}_3^* \cap \mathcal{E}_4^*)^c \right) $. We see that
	\begin{multline*}
		\mathcal{E}_1^* \cap \mathcal{E}_3^* \cap \mathcal{E}_4^* = \left\{ \norm{ \frac{1}{n} \nabla_{\mathcal{A}'} \log L_n(\Theta^*) }_{\max} < \min \left\{ C_1(s), \frac{a_1 \lambda}{9Q_3 +2}, \frac{1}{3 Q_2} ( \norm{ \dvec_{\mathcal{A}}^* }_{\min} - a\lambda  )  \right\} \right\} \\
		\cap \left\{ \norm{ \frac{1}{n} \nabla_{\mathcal{A}'^c} \log L_n(\Theta^*) }_{\max} < \frac{a_1 \lambda}{2}\right\} \text{.}
	\end{multline*}
	Recall that $C_3(s, \lambda) =\min \left\{ C_1(s), \frac{a_1 \lambda}{9Q_3 +2}, \frac{1}{3 Q_2} ( \norm{ \dvec_{\mathcal{A}}^* }_{\min} - a\lambda  ) \right\}$. Taking the same approach we used to derive \eqref{eqn:score prob bound 1}, we find
	\begin{align}
		P\left(	(\mathcal{E}_1^* \cap \mathcal{E}_3^* \cap \mathcal{E}_4^*)^c \right) 
		& \leq 2(s+1) \exp \left( - \frac{ n C_3^2(s, \lambda) }{ 2 M } \right) \notag \\
		& + 2 \exp\left( - \frac{n}{2} \min \left\{ \frac{C_3(s, \lambda) \gamma^*}{4},  \frac{ C_3^2(s, \lambda) {\gamma^*}^2  }{16 + 4 n^{-1} \sum_{i = 1}^n (\xvec_i'\dvec^*)^2 } \right \} \right) \notag \\ 
		& + 2(p-s)\exp \left( - \frac{n a_1^2 \lambda^2}{8 M } \right) \text{.} \label{eqn:score prob bound 2}
	\end{align}
	With that final addition, we have all the pieces need to compute our upper bound for $P((\mathcal{E}_1 \cap \mathcal{E}_2)^c)$.
\end{proof}

\end{document}